\newenvironment{definition*}[1][Definition]{\begin{trivlist}
\item[\hskip \labelsep {\bfseries #1.}]}{\end{trivlist}}
\newcommand{\applyop}[2]{\ensuremath{#1\left(\tikz[baseline=(current bounding box).east]{\path [use as bounding box] (0,0) rectangle #2;}\right)\phantom{#1}}}
\tikzstyle{white dot}=[dot,fill=white]
\tikzstyle{none}=[inner sep=0mm]
\tikzstyle{dot}=[inner sep=0.5mm,fill=black,draw=black,shape=circle]
\tikzstyle{dotpic}=[baseline=-0.25em,shorten <=-0.1mm,shorten >=-0.1mm,scale=0.6]
\tikzstyle{small}=[inner sep=0.4mm]
\tikzstyle{dotpic inline}=[baseline=(current bounding box).south]
\tikzstyle{cdiag}=[baseline=(current bounding box).east,xscale=1.5,-latex]
\tikzstyle{every loop}=[]
\tikzstyle{rn}=[dot]
\tikzstyle{gn}=[dot]
\tikzstyle{bn}=[inner sep=0pt]
\tikzstyle{uploop}=[in=45,out=135,loop]
\tikzstyle{downloop}=[in=-45,out=-135,loop]
\tikzstyle{small dot}=[dot,inner sep=0.4mm]
\tikzstyle{small white dot}=[small dot,fill=white]
\tikzstyle{small gray dot}=[small dot,fill=gray!50]
\tikzstyle{greenbox}=[rectangle,fill=gray!30,draw=gray!50!black,minimum height=7mm,minimum width=7mm]
\tikzstyle{bluebox}=[rectangle,fill=white,draw=gray,minimum height=7mm,minimum width=7mm]
\tikzstyle{cnot}=[fill=white,shape=circle,inner sep=-1.4pt]
\tikzstyle{pt}=[regular polygon,regular polygon sides=3,draw=black,scale=0.75,inner sep=-0.5pt]
\tikzstyle{copt}=[pt,regular polygon rotate=180]
\tikzstyle{tick}=[sloped,rotate=90,font=\small\bf,xshift=0.07mm]
\tikzstyle{white dot}=[dot,fill=white]
\tikzstyle{gray dot}=[dot,fill=gray!50]
\tikzstyle{gs dot}=[dot,fill=gray]
\tikzstyle{small dotpic}=[dotpic,scale=0.6]
\tikzstyle{mux}=[rectangle,draw=black,scale=0.5,minimum width=1.8cm,minimum height=1cm]
\tikzstyle{square box}=[rectangle,fill=white,draw=black,minimum height=6mm,minimum width=6mm]
\tikzstyle{square gray box}=[rectangle,fill=gray!30,draw=black,minimum height=6mm,minimum width=6mm]
\tikzstyle{dashed box}=[draw=black,dashed,minimum height=12mm,minimum width=12mm,fill=gray!20]
\tikzstyle{box vertex}=[rectangle,draw=black]
\newcommand{\Q}{\ensuremath{{\mathbb C^2}}}
\newcommand{\bra}[1]{%  dirac 'bra'
    \ensuremath{\left\langle #1 \right|}\xspace}
\newcommand{\ket}[1]{%  dirac 'ket'
   {\left|  #1 \right\rangle}\xspace}
\newcommand{\braket}[2]{\langle#1|#2\rangle}
\def\PICT{\begin{center}{\Large Picture:} }
\def\EPICT{\end{center}}
\newcommand{\inlinegraphic}[2]{
  %% todo -- make this thing calculate the height 
  %% itself based on a global scaling factor
  \dimendef\grafheight=255\dimendef\grafvshift=254
  \grafheight=#1
  \grafvshift=-0.5\grafheight
  \advance\grafvshift by 0.5ex
  \raisebox{\grafvshift}{\includegraphics[height=\grafheight]{#2}\xspace}
}
\newcommand{\dotunit}[1]{%
\begin{tikzpicture}[dotpic,yshift=-1mm]
\node [#1] (0) at (0,0.25) {}; 
\draw (0)--(0,-0.15);
\end{tikzpicture}\,}
\newcommand{\dotcounit}[1]{%
\begin{tikzpicture}[dotpic,yshift=1mm]
\node [#1] (0) at (0,-0.25) {}; 
\draw (0)--(0,0.15);
\end{tikzpicture}\,}
\newcommand{\dotmult}[1]{%
\begin{tikzpicture}[dotpic]
	\node [#1] (a) {};
	\draw (a) -- (-90:0.35);
	\draw (a) -- (45:0.4);
	\draw (a) -- (135:0.4);
\end{tikzpicture}}
\newcommand{\dotcomult}[1]{%
\begin{tikzpicture}[dotpic]
	\node [#1] (a) {};
	\draw (a) -- (90:0.35);
	\draw (a) -- (-45:0.4);
	\draw (a) -- (-135:0.4);
\end{tikzpicture}}
\newcommand{\dottickunit}[1]{%
\begin{tikzpicture}[dotpic,yshift=-1mm]
\node [#1] (0) at (0,0.25) {}; 
\draw (0)-- node[tick]{-} (0,-0.15);
\end{tikzpicture}}
\newcommand{\dottickcounit}[1]{%
\begin{tikzpicture}[dotpic,yshift=1mm]
\node [#1] (0) at (0,-0.25) {}; 
\draw (0)-- node[tick]{-} (0,0.15);
\end{tikzpicture}}
\newcommand{\dotonly}[1]{%
\begin{tikzpicture}[dotpic]
\node [#1] (0) at (0,0) {};
\end{tikzpicture}\,}
\newcommand{\dotthreestate}[1]{%
\begin{tikzpicture}[dotpic,yshift=2.5mm]
	\node [#1] (a) at (0,0) {};
	\draw (a) -- (0,-0.6);
	\draw [bend right] (a) to (-0.4,-0.6) (0.4,-0.6) to (a);
\end{tikzpicture}}
\newcommand{\tick}{%
\begin{tikzpicture}[dotpic]
	\node [style=none] (0) at (0,0.25) {};
	\node [style=none] (1) at (0,-0.25) {};
	\draw (0) to node[tick]{-} (1);
\end{tikzpicture}}
\newcommand{\lolli}{%
\begin{tikzpicture}[dotpic]
	\path [use as bounding box] (-0.25,-0.25) rectangle (0.25,0.5);
	\node [style=small dot] (0) at (0, 0.15) {};
	\node [style=none] (1) at (0, -0.25) {};
	\draw  (0) to (1.center);
	\draw [out=45, looseness=2.00, in=135, loop] (0) to ();
\end{tikzpicture}}
\newcommand{\cololli}{%
\begin{tikzpicture}[dotpic]
	\path [use as bounding box] (-0.25,-0.5) rectangle (0.25,0.5);
	\node [style=none] (0) at (0, 0.5) {};
	\node [style=small dot] (1) at (0, 0) {};
	\draw [out=-45, looseness=2.00, in=-135, loop] (1) to ();
	\draw  (1) to (0.center);
\end{tikzpicture}}
\newcommand{\blackdot}{\dotonly{small dot}}
\newcommand{\unit}{\dotunit{small dot}}
\newcommand{\counit}{\dotcounit{small dot}}
\newcommand{\mult}{\dotmult{small dot}}
\newcommand{\comult}{\dotcomult{small dot}}
\newcommand{\tickunit}{\dottickunit{small dot}}
\newcommand{\threestate}{\dotthreestate{small dot}}
\newcommand{\whitedot}{\dotonly{small white dot}}
\newcommand{\whiteunit}{\dotunit{small white dot}}
\newcommand{\whitecounit}{\dotcounit{small white dot}}
\newcommand{\whitemult}{\dotmult{small white dot}}
\newcommand{\whitecomult}{\dotcomult{small white dot}}
\newcommand{\whitetickcounit}{\dottickcounit{small white dot}}
\newcommand{\spider}[4][dot]{\node [#1] (#2) at (0,0) {};
\node [bn] (#2_d1) at (-1,1) {};
\node [bn] (#2_d2) at (-0.5,1) {};
\node [bn] (#2_dm) at (1,1) {};
\node [bn] (#2_c1) at (-1,-1) {};
\node [bn] (#2_c2) at (-0.5,-1) {};
\node [bn] (#2_cn) at (1,-1) {};

\node [anchor=west] at (#2_dm.east) {$#3$};
\node [anchor=west] at (#2_cn.east) {$#4$};
\node at (0.2,0.7) {\small{...}};
\node at (0.2,-0.7) {\small{...}};

\draw (#2)--(#2_d1) (#2)--(#2_d2) (#2)--(#2_dm);
\draw (#2)--(#2_c1) (#2)--(#2_c2) (#2)--(#2_cn);}
\newcommand{\circl}{\begin{tikzpicture}[dotpic]
	\node [circle,draw=black,inner sep=1pt] {\footnotesize\sf\phantom{$-$}};
\end{tikzpicture}}
\newcommand{\icircl}{\begin{tikzpicture}[dotpic]
	\node [circle,draw=black,inner sep=1pt] {\footnotesize\sf{}{$-$}};
\end{tikzpicture}}
\newcommand{\rtcircl}{\ensuremath{\sqrt{\begin{tikzpicture}[dotpic]
	\node [circle,draw=black,inner sep=1pt] {\tiny\sf\phantom{$-$}};
\end{tikzpicture}}}}
\newcommand{\rticircl}{\ensuremath{\sqrt{\begin{tikzpicture}[dotpic]
	\node [circle,draw=black,inner sep=1pt] {\tiny\bf\sf{}{$-$}};
\end{tikzpicture}}}}
\newcommand{\dcircl}{\begin{tikzpicture}[dotpic]
	\draw [use as bounding box,draw=none] (-0.15,-0.3) rectangle (0.15,0.3);
	\node [small dot] (0) {};
	\draw [uploop] (0) to ();
	\draw [downloop] (0) to ();
\end{tikzpicture}}
\newcommand{\ketGHZ}{\ket{\textit{GHZ}\,}}
\newcommand{\ketW}{\ket{\textit{W\,}}}
\def\bR{\begin{color}{red}} 
\def\bB{\begin{color}{blue}}
\def\bM{\begin{color}{magenta}}
\def\bC{\begin{color}{cyan}}
\def\bW{\begin{color}{white}}
\def\bBl{\begin{color}{black}}
\def\bG{\begin{color}{green}}
\def\bY{\begin{color}{yellow}}
\def\e{\end{color}}
\title{The compositional structure of\\ multipartite quantum entanglement} 
\titlerunning{The compositional structure of multipartite quantum entanglement}
\author{
Bob Coecke and Aleks Kissinger \thanks{This work is supported by EPSRC Advanced Research Fellowship EP/D072786/1, by a Clarendon Studentship, by US Office of Naval Research Grant N00014-09-1-0248 and by EU FP6 STREP QICS.  We thank Michael Hermann and Jamie Vicary for useful feedback.}  
}
\institute{
 Oxford University Computing Laboratory\\
Wolfson Building, Parks Road, Oxford OX1 3QD, UK\\
{\tt coecke / alek@comlab.ox.ac.uk}} 
\authorrunning{Bob Coecke and Aleks Kissinger}
\date{\today}
\begin{document}
\maketitle
\begin{abstract}
While multipartite quantum states constitute a (if not the) key resource for quantum computations and protocols, obtaining a high-level, structural understanding of entanglement involving arbitrarily many qubits is a long-standing open problem in quantum computer science. In this paper we expose the algebraic and graphical structure of the GHZ-state and the W-state, as well as a purely graphical distinction that characterises the behaviours of these states. In turn, this structure yields a compositional graphical model for expressing general multipartite states.

\hspace{3mm} 
We identify those states, named Frobenius states, which canonically induce an algebraic structure, namely the structure of a commutative Frobenius algebra (CFA). We show that all SLOCC-maximal tripartite qubit states are locally equivalent to Frobenius states. Those that are SLOCC-equivalent to the GHZ-state induce special commutative Frobenius algebras, while those that are SLOCC-equivalent to the W-state induce what we call anti-special commutative Frobenius algebras.  From the SLOCC-classification of tripartite qubit states follows a representation theorem for two dimensional CFAs.

\hspace{3mm} 
Together, a GHZ and a W Frobenius state form the primitives of a graphical calculus. This calculus is expressive enough to generate and reason about  arbitrary multipartite states, which are obtained by ``composing'' the GHZ- and W-states, giving rise to a rich graphical paradigm for general multipartite entanglement.
\end{abstract} 

%\tableofcontents

%%%%%%%%%%%%%%%%%%% 
%%%%%%%%%%%%%%%%%%%
\section{Introduction}
%%%%%%%%%%%%%%%%%%%
%%%%%%%%%%%%%%%%%%%

Spatially separated compound quantum systems exhibit correlations under measurement that cannot be explained by classical physics. Bipartite entangled states are used in protocols such as quantum teleportation, quantum key distribution, superdense coding, and entanglement swapping. 
The tripartite \em GHZ-state \em allows for a purely qualitative Bell-type argument demonstrating the non-locality of quantum mechanics \cite{GHZ}, a phenomenon which has recently been exploited to boost computational power \cite{AndersBrowne}.  In one-way quantum computing, multipartite \em  graph states \em which generalise GHZ-states constitute a resource for universal quantum computing \cite{one-way}. There are also many other applications of GHZ-states and graph states in the areas of fault-tolerance and  communication protocols \cite{Markham}. The tripartite \em W-state\em, which is qualitatively very different from the GHZ-state, supports a class of protocols implementing distributed leader-election \cite{Prakash}. The classification of tripartite qubit states even has applications in the study of extremal black holes in the STU supergravity theory, and certain canonical entanglement measures provide the Bekenstein-Hawking entropy \cite{Borsten1,Borsten2}. 

However, very little is known about the structure and behaviours of general multipartite quantum states, other than that the variety of possible behaviours is huge! For example, there is an infinite number of $4$-qubit states which are not inter-convertible by \emph{stochastic local} (quantum) \emph{operations} and \emph{classical communication} (SLOCC) \cite{Verstraete}. States that are not SLOCC-equivalent correspond to incomparable forms of ``distributed quantum-ness,'' so each will have distinct behaviours and applications, within and outside quantum computing.

For three qubits, SLOCC-classification is well understood:  there are only two non-degenerate SLOCC-classes \cite{DVC}, one that contains the GHZ-state and another one that contains the W-state:\vspace{-1mm}
\[
\ketGHZ=\ket{000}+\ket{111} 
\qquad\quad
\ketW=\ket{100}+\ket{010}+\ket{001}\,.\vspace{-1mm}
\]
This raises the question whether one can pinpoint in an elegant manner the essential structural difference between these classes, which may then, in turn, indicate behavioural differences. This is the primary goal of this paper. We provide algebraic characterisations of these two kinds of 3-partite states, and show that their difference is essentially of a \emph{topological} nature, specifically whether or not loops in their respective graphical representations disconnect the graph. Since we can interpret (dis)connectedness as the existence or non-existence of a flow of information, this topological distinction entails a behavioural one.

The guiding heuristic of this approach is that \em tripartite states are the same as algebraic operations\em. The most fundamental operations in nearly all branches of mathematics have two inputs and one output. Quantum protocols like gate teleportation \cite{Gottesman} which rely on \em map-state duality \em (or more general, the Choi-Jamiolkowski isomorphism \cite{Paulsen}) have taught us not to care much about distinguishing inputs and outputs, as there is little distinction between information flowing forward through an operation and flowing ``horizontally'' across an entangled state. Hence the \emph{total arity} (inputs + outputs) becomes a cogent feature. In this sense, tripartite states, regarded as operations with zero inputs and three outputs, are analogous to binary algebraic operations. We shall identify a class of states called \emph{Frobenius states} that give rise to particularly well-behaved algebraic structures called commutative Frobenius algebras (CFAs). These induced Frobenius algebras can then be manipulated graphically, using powerful techniques that take advantage of all of their symmetries.

We also show that GHZ-states and W-states play a foundational role in the composition of multipartite entangled states. The induced algebraic structures of these states can be used to construct \emph{arbitrary} multipartite states. GHZ-states and W-states will provide two kinds of primitive elements, which we can compose graphically to yield complex multipartite behaviours. This is similar to how arbitrary graph states can be formed by composing GHZ states, but the introduction of the W state brings a much greater variety of states and behaviours.

Compositionality is an important concept in modern computer science. From  elementary systems one can construct large, complex systems. Furthermore, one can reduce the (intractable) task of reasoning about the large system as a whole to the simpler task of reasoning about the interaction of its components. We apply this doctrine to reasoning about complex multipartite states as graphs of simpler states. In particular, the systems described in this paper are well-suited to manipulation with {\tt quantomatic}, a tool written by Dixon, Duncan and one of the authors \cite{quantomatic} for reasoning about interacting graphical structures.  This software combined with the results in this paper will enable one to automatically explore the vast space of multipartite entangled states, possibly guided by specification of an application.

While category theory is not a prerequisite for this paper, we do want to stress that the notion of a multipartite state, its induced algebra, and composition of these states and algebras already exists at the abstract level of \emph{symmetric monoidal categories} \cite{MacLane}.  In that sense, the results of this paper fit within the context of  the \em categorical quantum mechanics \em research program,  initiated by Abramsky and one of the authors in \cite{AC}, which aims to axiomatise quantum mechanical concepts in the language of symmetric monoidal categories. These categories are exactly the symbolic representations of the graphical language that we will use throughout this paper \cite{JoyalStreet,SelingerSurvey}, which traces back to Penrose's work in the 1970's \cite{Penrose}. Frobenius algebras, specifically in their categorical formulation \cite{CarboniWalters}, have become a key focus of categorical quantum mechanics by accounting for observables and corresponding classical data flows \cite{CPav,CPaqPav}, in particular due to their elegant normal forms \cite{Kock,Lack}. Complementary observables were axiomatised in terms of Frobenius algebras in \cite{CoeckeDuncan}, and this axiomatisation was used to solve an open problem in measurement based quantum computing in \cite{DuncanPerdrix2}, by relying on the graphical methods.

\paragraph{Structure of the paper}  In Section \ref{sec:graphical}, we recall graphical ``Penrose-style'' notation for compositions of tensors. In Section \ref{sec:Frob}, we introduce Frobenius algebras and their graphical language. We review SLOCC-classification of multipartite entangled states in Section \ref{sec:SLOCC}. In Section \ref{sec:frobenius-states} we define the notion of Frobenius state and illustrate its equivalence with commutative Frobenius algebra. Section \ref{sec:antispecial} defines special and anti-special commutative Frobenius algebras and Section \ref{sec:classification_states} establishes their correspondence with GHZ and W states, respectively.  Section \ref{sec:classification_CFA} classifies commutative Frobenius algebras on $\mathbb{C}^2$.  Section \ref{sec:interacting-states} defines a graphical interaction theory for the Frobenius algebras induced by related GHZ and W states called GHZ/W-pairs. It then goes on to detail how a GHZ/W-pair is universal with respect to state construction and draws comparisons with the inductive technique of Lamata et al \cite{Lamata}.
 
\par\medskip 
\paragraph{Note}  An extended abstract on part of the results presented in this paper was accepted for the 37th International Colloquium on Automata, Languages and Programming (ICALP) and is published in the conference proceedings thereof, volume 6199 of  Springer's Lecture Notes in Computer Science. 

%%%%%%%%%%%%%%%%%%%
%%%%%%%%%%%%%%%%%%%
\section{Graphical Notation for Tensors}\label{sec:graphical}
%%%%%%%%%%%%%%%%%%%
%%%%%%%%%%%%%%%%%%%

For Hilbert spaces $H_i$, we shall treat tensors as linear maps between the tensor products of spaces.

\[ f : H_1 \otimes H_2 \rightarrow H_3 \otimes H_4 \otimes H_5 \]

Such maps compose as usual. For $f : H_1 \otimes H_2 \rightarrow H_3$, $g : H_3 \rightarrow H_4$, we call the usual composition of maps $g \circ f : H_1 \otimes H_2 \rightarrow H_4$ the \emph{vertical composition}. The tensor product also extends to maps, so there is a map $f \otimes g : H_1 \otimes H_2 \otimes H_3 \rightarrow H_3 \otimes H_4$. We will call this \emph{horizontal composition}.

We use suggestive terms for these types of composition because the manipulation of multilinear maps is in a strong sense a 2-dimensional enterprise. One interpretation for this dimensionality is that the tensor product provides a spacial dimension, while the composition of arrows provides temporal, or causal dimension. The interplay of these two dimensions is represented by the interchange property of the tensor product.
\begin{equation}\label{eq:bifunc}
	(f_2 \otimes g_2) \circ (f_1 \otimes g_1) = (f_2 \circ f_1) \otimes (g_2 \circ g_1)
\end{equation}

One can interpret this equation by thinking of these four arrows occupying a piece of 2-dimensional space.

\begin{center}
\beginpgfgraphicnamed{2d_space}
\begin{tikzpicture}[scale=1.7]
	\node [dashed box] (f1) at (0,0) {$f_1$};
	\node [dashed box] (f2) at (0,-1) {$f_2$};
	\node [dashed box] (g1) at (1,0) {$g_1$};
	\node [dashed box] (g2) at (1,-1) {$g_2$};
\end{tikzpicture}}
\endpgfgraphicnamed
\end{center}

From this point of view, the bracketing in Eq \ref{eq:bifunc} is a piece of essentially meaningless syntax, which is required to make something that is 2-dimensional by nature expressible as a (1-dimensional) term. To address this issue, we shall introduce a Penrose-style graphical notation for multilinear maps, similar to that of circuit diagrams. Edges represent spaces and boxes represent multilinear maps. Normally, both edges and nodes are labeled, but here we shall consider only graphs where every edge represents the same fixed Hilbert space $H$, so we shall omit edge labels. Tensoring is done by juxtaposition and composition is performed by \emph{plugging}, or gluing the inputs of one graph to the outputs of another. The identity arrow is represented by a blank edge, and we omit the wire all together for the 1-dimensional Hilbert space $\mathbb C$, because $H \otimes \mathbb C \cong H \cong \mathbb C \otimes H$.

% <gluing-and-juxt> (fold)
\begin{equation}\label{eqn:graphical-compose}
\beginpgfgraphicnamed{gluing_and_juxt}
\InputIfFileExists{gluing_and_juxt.tikz}{}{\input{./figures/gluing_and_juxt.tikz}}
\endpgfgraphicnamed
\end{equation}
% </gluing-and-juxt> (end)

Since we can express $H \otimes H$ as a pair of lines, we can, for instance, express a map $h : H \otimes H \rightarrow H \otimes H$ and compose in various ways with other maps.

% <mixed-arities> (fold)
\begin{equation}\label{eqn:mixed-arities}
\beginpgfgraphicnamed{mixed_arities}
\InputIfFileExists{mixed_arities.tikz}{}{\input{./figures/mixed_arities.tikz}}
\endpgfgraphicnamed
\end{equation}
% </mixed_arities> (end)

Replacing some of the maps in Eq \ref{eq:bifunc} with identity maps, we can use the interchange identity to ``slide'' two linear maps past each other.
\begin{eqnarray*}
	(f \otimes 1_H) \circ (1_H \otimes g)
	  & = & (f \circ 1_H) \otimes (1_H \circ g) \\
	  & = & f \otimes g \\
	  & = & (1_H \circ f) \otimes (g \circ 1_H) \\
	  & = & (1_H \otimes g) \circ (f \otimes 1_H)
\end{eqnarray*}
We can express these equations graphically as follows:

% <otimes-and-sigma> (fold)
\begin{equation}\label{eqn:graphical-bifunc}
\beginpgfgraphicnamed{bifunc}
\InputIfFileExists{bifunc.tikz}{}{\input{./figures/bifunc.tikz}}
\endpgfgraphicnamed
\end{equation}

$H \otimes H$ comes with a canonical swap isomorphism $\sigma : H \otimes H \rightarrow H \otimes H$:
\[ \sigma :: \ket\psi \otimes \ket\phi \mapsto \ket\phi \otimes \ket\psi \]
Graphically, we represent this as two crossed wires, and note for any $f \otimes g$,
\begin{equation}\label{eqn:graphical-symmetry}
\beginpgfgraphicnamed{symmetry_natural}
\InputIfFileExists{symmetry_natural.tikz}{}{\input{./figures/symmetry_natural.tikz}}
\endpgfgraphicnamed
\end{equation}
% </otimes-and-sigma> (end)

Its easy to see how states are a special case of this notation when we think of a state $\ket\psi \in H$ as a linear map from $\mathbb C$ to $H$.
\[ \ket\psi :: 1 \mapsto \ket\psi \]
Since $\mathbb C$ is written as ``no wire,'' we omit the line when a map goes to or from the complex numbers:

\begin{center}
\beginpgfgraphicnamed{bras_and_kets}
\InputIfFileExists{bras_and_kets.tikz}{}{\input{./figures/bras_and_kets.tikz}}
\endpgfgraphicnamed
\end{center}

We can compose a bra and a ket as usual to obtain the inner product. This is a linear map from $\mathbb C$ to $\mathbb C$, i.e. just a complex number.

\begin{center}
\beginpgfgraphicnamed{inner_product}
\begin{tikzpicture}
	\node [style=pt] (0) at (0.5, 0.25) {$\psi$};
	\node [style=none, anchor=east] (1) at (0, 0) {$\braket{\phi}{\psi} =$};
	\node [style=copt] (2) at (0.5, -0.25) {$\phi$};
	\draw  (0) to (2);
\end{tikzpicture}}
\endpgfgraphicnamed
\end{center}

We shall state a concrete version of a soundness theorem for the graphical language, due to Joyal and Street \cite{JoyalStreet}.

\begin{theorem}[Soundness of the graphical calculus]\label{thm:soundness-graphical}
    Two maps consisting of arbitrary compositions and tensor products of smaller maps $f : \bigotimes A_i \rightarrow \bigotimes B_j$ and swap maps $\sigma_{A,B}$ are equal if their graphical representations are equal.
\end{theorem}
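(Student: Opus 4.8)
The plan is to prove this by exhibiting the graphical calculus as (a fragment of) the free symmetric monoidal category generated by the boxes $f$ and the swaps $\sigma_{A,B}$, and then transporting equalities along the interpretation functor into the category of tensors, which is itself symmetric monoidal under $\otimes$ and $\circ$. Concretely, every term built from the given maps using horizontal composition $\otimes$, vertical composition $\circ$, identities and swaps names a morphism in any symmetric monoidal category, so it suffices to show that whenever two such terms have the same diagram, one term can be rewritten into the other using only the equational axioms of a symmetric monoidal category: bifunctoriality (Eq.~\ref{eq:bifunc}), naturality of the symmetry (Eq.~\ref{eqn:graphical-symmetry}), $\sigma \circ \sigma = 1$, and the coherence equations for $\sigma$. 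Since each of these holds in the category of tensors, equal diagrams will then force equal maps.

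First I would reduce an arbitrary diagram to a canonical layered form. Perturbing the diagram into generic position with respect to the height coordinate, I can arrange that no two boxes or crossings share a height and that the critical heights are all distinct; slicing horizontally between consecutive critical heights then presents the diagram as a vertical composite of slices, each of which is a single box or a single crossing tensored with identity wires. This is exactly a term in the generators, so every diagram arises from some term, and the choice of slicing is the only ambiguity I must control.

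Next I would analyse how the term changes as the diagram is varied by an ambient isotopy. A generic isotopy is constant except at finitely many critical moments, and each such moment is one of a short list of elementary moves: two boxes or crossings passing each other's height while lying in disjoint vertical strips (captured by the interchange law, Eq.~\ref{eq:bifunc}); a box sliding through a crossing (captured by naturality of $\sigma$, Eq.~\ref{eqn:graphical-symmetry}); two crossings cancelling, i.e.\ $\sigma_{A,B}\circ\sigma_{B,A}=1$; and three wires rearranging so that two crossing orders agree, which is the Yang--Baxter/hexagon coherence for the symmetry. Each move is an instance of a symmetric-monoidal axiom, hence is validated by the interpretation into tensors, and the omitted $\mathbb C$-wires are subsumed by the unit coherence. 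Composing these, any isotopy between two diagrams yields a chain of axiom-justified equalities between the corresponding morphisms.

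The hard part will be the topological input rather than the algebra: proving that every isotopy of progressive (monotone-in-height) diagrams factors into exactly the finite list of elementary moves above, with nothing left over. This is the genericity, Morse-theoretic heart of the Joyal--Street argument; one must show that the space of diagrams stratifies so that a generic deformation crosses only the codimension-one walls corresponding to these moves, and that degenerate coincidences can be removed by a small perturbation. Once this decomposition is in hand, well-definedness of the denotation (independence of the slicing) is immediate, since two slicings of a single diagram differ by interchange moves alone, and soundness follows.
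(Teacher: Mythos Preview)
Your sketch is essentially the Joyal--Street argument, and it is the right outline. You should know, however, that the paper does not itself prove this theorem: it simply states the result and attributes it to Joyal and Street \cite{JoyalStreet}, with a remark that Selinger later supplied a converse in the dagger-compact setting \cite{SelingerCompleteness}. So there is no in-paper proof to compare against; your proposal is a reconstruction of the cited external argument, and as such it is on target.

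One small comment on content: your identification of the ``hard part'' is accurate---the topological stratification/genericity step is where the real work lies, and your write-up correctly flags that the algebraic side (interchange, naturality of $\sigma$, involutivity, hexagon) is routine once the decomposition into elementary moves is secured. If you were to flesh this out, the main thing to be careful about is that the diagrams here are \emph{progressive} (no cups or caps yet---those only enter later via the Frobenius structure), so the isotopies you factor are planar isotopies of monotone string diagrams, not arbitrary ambient isotopies; this is what keeps the list of elementary moves finite and matched to the symmetric monoidal axioms.
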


Restricting to certain kinds of linear maps, namely ``caps'' and ``cups'' which we shall introduce briefly in \ref{sec:Frob}, Selinger proved the ``only if'' part of the above theorem in \cite{SelingerCompleteness}.

%%%%%%%%%%%%%%%%%%%
%%%%%%%%%%%%%%%%%%%
\section{Commutative Frobenius algebras}\label{sec:Frob}
%%%%%%%%%%%%%%%%%%%
%%%%%%%%%%%%%%%%%%%

To fix notations, we recall the usual notion of unital algebra. 

Consider a vector space $A$ equipped with a multiplication map $(- \cdot -) : A \times A \rightarrow A$. We say that $(A, \cdot)$ is a \emph{unital algebra} if

\begin{itemize}
	\item $(- \cdot -)$ is bilinear,
	\item $(\ket u \cdot \ket v) \cdot \ket w =
	        \ket u \cdot (\ket v \cdot \ket w)$ for all
	        $\ket u, \ket v, \ket w \in A$, and
	\item there exists $\ket \eta \in A$ such that $\ket u \cdot \ket\eta = \ket\eta \cdot \ket u = \ket u$ for all $\ket u \in A$.
\end{itemize}

For our purposes, we shall assume $A$ is always a finite dimensional complex Hilbert space $H$. Since $(- \cdot -)$ is bilinear, there exists a unique $\mu : H \otimes H \rightarrow H$ such that

\[ \mu(\ket u \otimes \ket v) = \ket u \cdot \ket v \]

Taking this to define multiplication, we obtain the following definition.

\begin{definition}\label{def:assoc-alg}\em 
	A \emph{unital algebra} $(H,\mu,\eta)$ is a vector space $H$ with maps $\mu : H \otimes H \rightarrow H$, $\eta : \mathbb C \rightarrow H$ such that $\mu(1 \otimes \mu) = \mu(\mu \otimes 1)$ and $\mu(1 \otimes \eta) = \mu(\eta \otimes 1) = 1$.
\end{definition}

We can also form a counital coalgebra. This is a unital algebra on the dual space $H^*$.

\[ (H^*, \delta^* : H^* \otimes H^* \rightarrow H^*, \epsilon^* \in H^*) \]

To clarify, we make a direct definition in terms of $H$, rather than $H^*$.

\begin{definition}\label{def:coassoc-alg}\em
	A \emph{counital coalgebra $(H, \delta, \epsilon)$} is a vector space $H$ with a map $\delta : H \rightarrow H \otimes H$ called the comultiplication and a map $\epsilon : H \rightarrow \mathbb C$ called the counit such that $(1 \otimes \delta)\delta = (\delta \otimes 1)\delta$ and $(\epsilon \otimes 1)\delta = (1 \otimes \epsilon)\delta = 1$.
\end{definition}

Let $\sigma_{A,B}$ be the swap map. Then, an algebra (resp. coalgebra) is \emph{commutative} (resp. \emph{cocommutative}) iff $\mu = \mu\sigma_{H,H}$ (resp. $\delta = \sigma_{H,H}\delta$).

There are many situations where we can choose an algebra $(H, \mu, \eta)$ and a coalgebra $(H, \delta, \epsilon)$ that interact well together. The case of interest here is that of \emph{Frobenius algebras}.

\begin{definition}\label{def:concrete-Frobenius}\em 
	A \emph{Frobenius algebra} $\mathcal F$ is a vector space $H$ with maps $\mu, \eta, \delta, \epsilon$ such that
	\begin{enumerate}
		\item $(H, \mu, \eta)$ is a unital algebra,
		\item $(H, \delta, \epsilon)$ is a counital coalgebra, and
		\item $(\mu \otimes 1)(1 \otimes \delta) =
					(1 \otimes \mu)(\delta \otimes 1) = \delta \mu$ (Frobenius identity).
	\end{enumerate}
\end{definition}

Depicting $\mu$, $\delta$, $\eta$, $\epsilon$ respectively as $\mult$, 
$\comult$, $\unit$ and $\counit$, we can rephrase this definition using graphical identities.

\begin{definition}\label{def:graphical-frob}\em
    A \emph{Frobenius algebra} $\blackdot$ is a vector space $H$ with maps $\mult, \unit, \comult, \counit$ such that the following equations hold:
    
    \bigskip

\beginpgfgraphicnamed{frobenius_ids}
\InputIfFileExists{frobenius_ids.tikz}{}{\input{./figures/frobenius_ids.tikz}}
\endpgfgraphicnamed
\end{definition}

By convention, we shall refer to Frobenius algebras either as script letters $\mathcal F$, $\mathcal G$ or by the colour of their dots, e.g. $\blackdot$, $\whitedot$. Also, note that we shall refer to a Frobenius algebra without a unit or counit as a \emph{Frobenius semi-algebra}.

\begin{example}
	Let $M$ be the vector space of $n \times n$ matrices. Take $\mu$ to be matrix multiplication, which is associative and bilinear. Let $\eta$ be the $n\times n$ identity matrix, and let $\epsilon : M \rightarrow k$ be the trace functional. This data induces a unique map $\delta$ such that $(M, \mu, \eta, \delta, \epsilon)$ is a Frobenius algebra.
\end{example}

\begin{definition}\label{def:commutative-frobenius-algebra}\em 
	A Frobenius algebra $(H, \mu, \eta, \delta, \epsilon)$ is called \emph{commutative} when the unital algebra $(H, \mu, \eta)$ is commutative and the counital coalgebra $(H, \delta, \epsilon)$ is cocommutative, or as graphical identities:
	    
    \bigskip

\beginpgfgraphicnamed{com}
\InputIfFileExists{com.tikz}{}{\input{./figures/com.tikz}}
\endpgfgraphicnamed
\end{definition}

Commutative Frobenius algebras, or CFA's, will be our primary focus for the remainder of this article.

\begin{example}[CFAs for an orthonormal  basis]\label{ex:basisCFA}
Given an orthonormal basis 
\[
{\{\ket{0}, \ldots, \ket{d-1}\}}
\]
of  $\mathbb{C}^d$, the linear maps
\[
\mu= \sum_i \ket{i}\bra{ii}:\mathbb{C}^d\otimes \mathbb{C}^d\to \mathbb{C}^d
\qquad
\eta= \sum_i \ket{i}:\mathbb{C}\to \mathbb{C}^d
\]
\[
\delta= \sum_i \ket{ii}\bra{i}:\mathbb{C}^d\to \mathbb{C}^d\otimes \mathbb{C}^d
\qquad
\epsilon= \sum_i \bra{i}: \mathbb{C}^d\to \mathbb{C}
\]
define a CFA on $\mathbb{C}^d$. For any finite orthonormal basis, this defines a Frobenius algebra whose comultiplication copies the basis vectors and whose counit uniformly deletes them (i.e. sends them all to the scalar 1). Note also that $\mu = \delta^\dagger$ and $\eta = \epsilon^\dagger$.
\end{example}

\begin{definition}\label{def:f-graph}\em
	For a CFA $\mathcal F = (H, \mu, \eta, \delta, \epsilon)$, an \em $\mathcal F$-graph \em is a map obtained from $1_H$, $\sigma_{H,H}$, $\mu$, $\eta$, $\delta$, and $\epsilon$, combined with composition and the tensor product. An $\mathcal F$-graph is said to be \em connected \em precisely when its graphical representation is connected.
\end{definition}

There is a well-known result about commutative Frobenius algebras, namely that any connected $\mathcal F$-graph is uniquely determined by its number of inputs, outputs, and its number of loops, where  by the \em number of loops \em we mean the maximum number of edges one can remove without disconnecting the graph. This result was rigorously proved by Kock, using the language of smooth manifolds and cobordisms~\cite{Kock}. In this formulation, the number of loops is just the genus of the associated manifold.

This makes CFAs highly topological, in that $\mathcal F$-graphs are invariant under deformations that respect the number of loops. Using these deformations, we can find a normal form that moves all loops to the middle, all multiplications to the top, and all comultiplications to the bottom.

\begin{theorem}\label{thm:cfa-nf}
Any connected $\mathcal F$-graph can always be put in the following normal form:
\begin{equation}\label{eq:normalform}
\beginpgfgraphicnamed{cfa_full_nf}
\InputIfFileExists{cfa_full_nf.tikz}{}{\input{./figures/cfa_full_nf.tikz}}
\endpgfgraphicnamed
\end{equation}
\end{theorem}

\subsection{Spider notation}

When there are no loops, an $\mathcal F$-graph is actually an $\mathcal F$-tree. A connected $\mathcal F$-tree is called a \emph{spider}. We denote the unique $\mathcal F$-tree with $n$ in-edges and $m$ out-edges as follows.

\[
S^n_m : \overbrace{H \otimes \ldots \otimes H}^n
\rightarrow
\underbrace{H \otimes \ldots \otimes H}_m \]

Graphically, we express spiders as vertices with any number of incident edges.

\begin{center}
\beginpgfgraphicnamed{new_spider_def}
\InputIfFileExists{new_spider_def.tikz}{}{\input{./figures/new_spider_def.tikz}}
\endpgfgraphicnamed
\end{center}

In the event that a spider has zero inputs or zero outs, we ``cap off'' the end with $\unit$ or $\counit$.

\begin{center}
	$S^0_m := S^1_m \circ \unit$
	\qquad
	$S^n_0 := \counit \circ S^n_1$
	\qquad
	$S^1_1 := 1_A$
\end{center}

This notation will become useful for denoting maps of arbitrary arity. Note that $S^0_n$ is an $n$-partite state and $S^m_0$ is an $m$ partite effect.

\begin{example}[spiders for a basis]
	Let $\mathcal G$ be the commutative Frobenius algebra for the orthonormal basis $\{ \ket{0}, \ket{1} \}$, defined as in Ex \ref{ex:basisCFA}. Spiders for $\mathcal G$ are
\[
S^n_m = \sum_{i=0,1} \underbrace{\ket{i\ldots i}}_m\overbrace{\bra{i\ldots i}}^n:
\overbrace{\mathbb{C}^2\otimes\ldots\otimes \mathbb{C}^2}^n\to 
\underbrace{\mathbb{C}^2\otimes\ldots\otimes \mathbb{C}^2}_m.
\] 
In particular, spiders without inputs, 
\begin{center}
\beginpgfgraphicnamed{spiders_without_inputs}
\InputIfFileExists{spiders_without_inputs.tikz}{}{\input{./figures/spiders_without_inputs.tikz}}
\endpgfgraphicnamed
\end{center}
are exactly the $m$ party generalisations of GHZ states:
\[
\ket{\textit{GHZ}_m} = \ket{0\ldots0} + \ket{1\ldots1}.
\]
Spiders without outputs are the corresponding effects, e.g. $S^2_0=\bra{00}+\bra{11}$.
\end{example}

\subsection{Caps, cups, and the partial trace}

One of the useful side effects of defining a Frobenius algebra is that any Frobenius algebra automatically fixes an isomorphism with the dual space. Consider the following two maps:

\begin{center}
\beginpgfgraphicnamed{cap_and_cup}
\InputIfFileExists{cap_and_cup.tikz}{}{\input{./figures/cap_and_cup.tikz}}
\endpgfgraphicnamed
\end{center}

These induce maps between $H$ and its dual space $H^*$ as follows.

\begin{center}
\beginpgfgraphicnamed{induced_isos}
\InputIfFileExists{induced_isos.tikz}{}{\input{./figures/induced_isos.tikz}}
\endpgfgraphicnamed
\end{center}

\begin{proposition}\label{pro:induced-iso}
	The maps $\widehat{S^2_0}$ and $\widehat{S^0_2}$ define an isomorphism of Hilbert spaces $H \cong H^*$.
\end{proposition}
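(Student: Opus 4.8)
The plan is to show that the two induced maps $\widehat{S^2_0} : H \to H^*$ and $\widehat{S^0_2} : H^* \to H$ are mutually inverse, which immediately witnesses $H \cong H^*$. Both arise from the standard wire-bending (map--state duality): the effect $S^2_0 : H \otimes H \to \mathbb{C}$ is reinterpreted as a map $H \to H^*$ by treating one of its two inputs as an output into the dual, and dually the state $S^0_2 : \mathbb{C} \to H \otimes H$ is reinterpreted as a map $H^* \to H$. Under this reinterpretation the composites $\widehat{S^0_2} \circ \widehat{S^2_0} : H \to H$ and $\widehat{S^2_0} \circ \widehat{S^0_2} : H^* \to H^*$ are precisely the diagrams obtained by plugging the cap $S^2_0$ and the cup $S^0_2$ together so as to leave one free input and one free output. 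Concretely, the first is the $\mathcal F$-graph $(S^2_0 \otimes 1_H) \circ (1_H \otimes S^0_2)$ and the second is $(1_H \otimes S^2_0) \circ (S^0_2 \otimes 1_H)$.

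The crux of the argument is therefore the pair of zig-zag (``snake'') identities $(S^2_0 \otimes 1_H)(1_H \otimes S^0_2) = 1_H = (1_H \otimes S^2_0)(S^0_2 \otimes 1_H)$. Each side is a \emph{connected} $\mathcal F$-graph that is a tree (no loops) with exactly one input and one output, so by Theorem~\ref{thm:cfa-nf} it must coincide with the unique such graph $S^1_1 = 1_H$; graphically this just says that a bent wire straightens out. If one prefers an elementary verification, the same identity unfolds using only the Frobenius identity and the (co)unit laws: writing $S^2_0 = \epsilon\mu$ and $S^0_2 = \delta\eta$, one has $(\epsilon\mu \otimes 1_H)(1_H \otimes \delta\eta) = (\epsilon \otimes 1_H)(\mu \otimes 1_H)(1_H \otimes \delta)(1_H \otimes \eta) = (\epsilon \otimes 1_H)\,(\delta\mu)\,(1_H \otimes \eta) = (\epsilon \otimes 1_H)\delta = 1_H$, where the Frobenius identity is applied in the middle and the right-unit law $\mu(1_H \otimes \eta) = 1_H$ and counit law $(\epsilon \otimes 1_H)\delta = 1_H$ are applied at the ends; the other snake is symmetric.

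Given the two zig-zag identities, the composites $\widehat{S^0_2} \circ \widehat{S^2_0}$ and $\widehat{S^2_0} \circ \widehat{S^0_2}$ collapse to $1_H$ and $1_{H^*}$ respectively, so $\widehat{S^2_0}$ and $\widehat{S^0_2}$ are inverse linear bijections, establishing $H \cong H^*$. I expect no genuine difficulty in the core identity, which is just the statement that cap and cup form a (self-)dualizing compact structure; the only place demanding care is bookkeeping, namely matching the precise convention by which $\widehat{\,\cdot\,}$ bends a wire to the placement of identity wires in the snake composites, so that the bends introduced by the hats are undone by exactly these zig-zags. Once those conventions are fixed the remainder is routine.
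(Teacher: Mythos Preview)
Your proposal is correct and follows essentially the same approach as the paper: both reduce the claim to the zig-zag identities for the cap $S^2_0$ and cup $S^0_2$, and both invoke the normal-form theorem (Theorem~\ref{thm:cfa-nf}) to see that each snake, being a connected loop-free $\mathcal F$-graph with one input and one output, equals $1_H$. Your additional elementary unfolding via the Frobenius and (co)unit laws is a nice supplement but not a departure from the paper's argument.
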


\begin{proof}
	This follows as an easy consequence of Thm \ref{thm:cfa-nf}.

	\begin{center}
\beginpgfgraphicnamed{induced_iso_pf}
\InputIfFileExists{induced_iso_pf.tikz}{}{\input{./figures/induced_iso_pf.tikz}}
\endpgfgraphicnamed
	\end{center}
	
	So $\widehat{S^0_2} \circ \widehat{S^2_0} = 1_H$. The fact that $\widehat{S^2_0} \circ \widehat{S^0_2} = 1_{H^*}$ can be shown similarly.
\qed
\end{proof}

In fact, a unital algebra $(H, \mu, \eta)$ with some linear functional $\epsilon : H \rightarrow \mathbb C$ such that $\widehat{\mu \circ \epsilon}$ ($= \widehat{S^2_0}$) is an isomorphism from $H$ to $H^*$ is an equivalent (and much older) definition of a Frobenius algebra. The comultiplication in this case is completely determined by the multiplication and this induced isomorphism.

Such caps and cups are a special case of a category-theoretic construction known as a \emph{compact structure} \cite{KellyLaplaza}. Compact structures provide an abstract notion of what it means for a vector space to have a dual space. 
%Categories with the suitable abstractions of tensor products and the swap map, where all objects have at least one compact structure are known as \emph{compact closed categories}. The category whose objects are finite-dimensional Hilbert spaces and whose morphisms are linear maps is a motivating example of a compact closed category.

For a commutative Frobenius algebra $\blackdot = (H, \mult, \comult, \unit, \counit)$, we define the following operator.

\begin{center}
\beginpgfgraphicnamed{partial_trace}
\InputIfFileExists{partial_trace.tikz}{}{\input{./figures/partial_trace.tikz}}
\endpgfgraphicnamed
\end{center}

We first note that this operator doesn't depend on the choice of CFA.

\begin{proposition}\label{pro:trace-any-frob}
	For any two commutative Frobenius algebras $\blackdot = (H, \mult, \comult, \unit, \counit)$ and $\whitedot = (H, \whitemult, \whitecomult, \whiteunit, \whitecounit)$, $\textrm{\rm Tr}^{\tikz{\node[dot]{};}}_H = \textrm{\rm Tr}^{\tikz{\node[white dot]{};}}_H$.
\end{proposition}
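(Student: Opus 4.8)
The plan is to compute the operator explicitly in coordinates and show that, for any commutative Frobenius algebra, it collapses to the ordinary partial trace taken in an orthonormal basis, an expression that visibly contains no Frobenius data. Writing the partial trace of $f : H \otimes A \to H \otimes B$ as $\mathrm{Tr}_H(f) = (\mathrm{cap} \otimes 1_B)\circ(1_H \otimes f)\circ(\mathrm{cup}\otimes 1_A)$, where $\mathrm{cap} = \epsilon\circ\mu : H\otimes H\to\mathbb C$ and $\mathrm{cup} = \delta\circ\eta : \mathbb C \to H\otimes H$ are the cap and cup of the given CFA, it suffices to prove that the endomorphism core $\mathrm{cap}\circ(f\otimes 1_H)\circ\mathrm{cup}$ equals the standard trace of $f$, with the spectator wires $A,B$ carried along unchanged.

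First I would record two facts about the pair $(\mathrm{cup},\mathrm{cap})$. By Proposition \ref{pro:induced-iso} they satisfy the snake (yanking) equations, so $\mathrm{cup}$ is a non-degenerate bipartite state and $\mathrm{cap}$ is determined by it. By cocommutativity of the CFA (Definition \ref{def:commutative-frobenius-algebra}), $\sigma_{H,H}\circ\delta = \delta$, so that $\mathrm{cup} = \delta\eta$ is a \emph{symmetric} element of $H\otimes H$.

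Next I would fix an orthonormal basis and compare with the basis CFA of Example \ref{ex:basisCFA}, whose cap and cup are $\mathrm{cap}_0 = \sum_i\bra{ii}$ and $\mathrm{cup}_0 = \sum_i\ket{ii}$. Any CFA cup can be written as $\mathrm{cup} = (1_H\otimes K)\,\mathrm{cup}_0$ for a unique invertible $K : H\to H$, and the snake equations then force $\mathrm{cap} = \mathrm{cap}_0\circ(K^{-1}\otimes 1_H)$. Symmetry of $\mathrm{cup}$ translates exactly into $K = K^{T}$ (transpose in the chosen basis). A direct calculation then gives
\[
	\mathrm{cap}\circ(f\otimes 1_H)\circ\mathrm{cup} = \operatorname{tr}\!\big(f^{T}(K^{-1})^{T}K\big) = \operatorname{tr}(f^{T}) = \operatorname{tr}(f),
\]
where the middle equality uses $K = K^{T}$. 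The value is independent of $K$, hence of the CFA; applied to both $\blackdot$ and $\whitedot$ this yields $\mathrm{Tr}^{\blackdot}_H = \mathrm{Tr}^{\whitedot}_H$, and the same computation with $A,B$ as bystanders handles the partial trace.

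The main obstacle is the step where $K$ and $K^{-1}$ must cancel. They sit on the two legs that are tied together through the cap rather than being composed in sequence, so they do not cancel formally; it is precisely the symmetry $K = K^{T}$ coming from cocommutativity that collapses $(K^{-1})^{T}K$ to the identity. Without cocommutativity the trace would instead be twisted by $(K^{-1})^{T}K$, so this is the one place where the commutative hypothesis is genuinely used. A coordinate-free rendering is possible—phrasing everything through the induced isomorphism $H\cong H^{*}$ of Proposition \ref{pro:induced-iso} together with its symmetry—but the basis computation is the most economical route.
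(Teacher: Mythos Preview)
Your argument is correct, and it takes a genuinely different route from the paper's. The paper proves this by a short diagram chase invoking the CFA normal form (Theorem~\ref{thm:cfa-nf}): one manipulates the black cap/cup against the white ones purely graphically until the two traces coincide. You instead fix an orthonormal basis, write the CFA cup as $(1\otimes K)\,\mathrm{cup}_0$, read off $\mathrm{cap}=\mathrm{cap}_0\circ(K^{-1}\otimes 1)$ from the snake equations, extract $K=K^{T}$ from cocommutativity, and compute $\operatorname{tr}\!\big(f^{T}(K^{-1})^{T}K\big)=\operatorname{tr}(f)$ directly. This is more elementary---it does not need the normal-form theorem---and it actually proves Proposition~\ref{pro:partial-trace} at the same time, whereas the paper deduces Proposition~\ref{pro:partial-trace} \emph{from} Proposition~\ref{pro:trace-any-frob} by specialising to the basis CFA. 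The trade-off is that the paper's argument is coordinate-free and would survive in any ambient symmetric monoidal category, while yours is tied to finite-dimensional Hilbert spaces (which is all that is needed here). One small cosmetic point: your displayed partial-trace formula has the feedback on the left ($1_H\otimes f$) but your ``endomorphism core'' puts $f$ on the left ($f\otimes 1_H$); the symmetry of the CFA cup and cap makes this harmless, but it is worth saying so explicitly.
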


\begin{proof}
	We prove this by applying Thm \ref{thm:cfa-nf}.
	\begin{center}
\beginpgfgraphicnamed{partial_trace_canonical_pf}
\InputIfFileExists{partial_trace_canonical_pf.tikz}{}{\input{./figures/partial_trace_canonical_pf.tikz}}
\endpgfgraphicnamed \qed
	\end{center}
\end{proof}

\begin{proposition}\label{pro:partial-trace}
	For any CFA, $\textrm{\rm Tr}^{\tikz{\node[dot]{};}}_H(M)$ performs the partial trace of $M$.
\end{proposition}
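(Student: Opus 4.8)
The plan is to show that the operator defined by the diagram $\tikzfig{partial_trace}$ computes the ordinary partial trace $\Tr_H$ of its argument $M$. By Proposition \ref{pro:trace-any-frob}, the operator is independent of the choice of CFA, so I am free to evaluate it using the most convenient Frobenius algebra available. The natural choice is the basis CFA $\mathcal G$ of Example \ref{ex:basisCFA}, whose comultiplication copies an orthonormal basis, $\delta = \sum_i \ket{ii}\bra{i}$, and whose counit uniformly deletes it, $\epsilon = \sum_i \bra{i}$, with the adjoint relations $\mu = \delta^\dagger$ and $\eta = \epsilon^\dagger$.

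First I would read off from the diagram exactly which compact-structure maps are being applied: the partial-trace operator bends one output wire of $M$ around and feeds it back into the corresponding input wire via a cap followed by a cup (equivalently, via $\widehat{S^0_2}$ and $\widehat{S^2_0}$ from Proposition \ref{pro:induced-iso}). Concretely, for the basis CFA the relevant cap is $\epsilon \circ \mu = \sum_i \bra{ii}$ and the relevant cup is $\delta \circ \eta = \sum_i \ket{ii}$; these are the standard Bell-type effect and state for the chosen basis. Then I would substitute $M$, written in components as $M = \sum \ket{k}\bra{l}\otimes(\ldots)$ on the traced factor, and contract the bent wire against $M$ using $\sum_{i} \bra{ii}$ on one side and the identity structure on the other.

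The computation then reduces to the elementary identity $\sum_i \bra{i}\,M\,\ket{i}$ on the traced tensor factor, which is by definition the partial trace over $H$. In diagrammatic terms this is the observation that capping an output of $M$ back onto the matching input with the $\{\ket{i}\}$-basis cap/cup inserts exactly the sum $\sum_i \bra{i}(\cdot)\ket{i}$, leaving the untraced factors untouched; this is precisely $\Tr_H(M)$. I would present this as a short string diagram manipulation, sliding $M$ along the bent wire and using the copying/deleting behaviour of $\delta$ and $\epsilon$ to collapse the loop into a summation over the basis.

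The only genuine obstacle is bookkeeping rather than conceptual: one must check that the wire that gets traced really is a matched input/output pair of $M$ (so that the cap and cup land on the same tensor factor), and that the orientation conventions of the caps and cups agree with those fixed in the ``Caps, cups, and the partial trace'' subsection, so that no spurious transpose or dual-space twist is introduced. Once Proposition \ref{pro:trace-any-frob} licenses the reduction to the basis CFA, the identification with $\sum_i \bra{i} M \ket{i}$ is immediate, and there is no remaining difficulty.
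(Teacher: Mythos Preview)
Your proposal is correct and follows essentially the same route as the paper: invoke Proposition~\ref{pro:trace-any-frob} to reduce to the basis CFA of Example~\ref{ex:basisCFA}, identify the cap and cup as $\sum_i \bra{ii}$ and $\sum_i \ket{ii}$, and observe that the resulting contraction is exactly the usual partial trace. The paper's own proof is just a terse version of what you wrote; the only minor slip is that you call the basis CFA ``$\mathcal G$'', which in this paper is reserved for the specific GHZ algebra on $\mathbb C^2$ (Example~\ref{ex:concreteGHZ}), but this is purely notational.
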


\begin{proof}
	From Prop \ref{pro:trace-any-frob}, we know we can choose any CFA to define $\textrm{\rm Tr}^{\tikz{\node[dot]{};}}_H(M)$ for $M \in H' \otimes H$. Let it be the CFA given in example \ref{ex:basisCFA}. In this case, the cap is $\sum \ket{kk}$ and the cup is $\sum \bra{kk}$. The claim then follows straightforwardly.\qed
	
%	\begin{eqnarray*}
%		\textrm{\rm Tr}^{\tikz{\node[dot]{};}}_H(M)
%		& = &
%		\left( 1_{H'} \otimes \sum \bra{kk} \right)
%		\left( \sum M^{i i'}_{j j'} \ket{j j'}\bra{i i'} \otimes 1_H \right)
%		\left( 1_{H'} \otimes \sum \ket{kk} \right) \\
%		& = &
%		\left( \sum M^{i i'}_{j j'} \ket{j}\bra{i i' j'} \right)
%		\left( 1_{H'} \otimes \sum \ket{kk} \right) \\
%		& = &
%		\sum M^{i k}_{j k} \ket{j}\bra{i} \qquad\qed
%	\end{eqnarray*}
\end{proof}

We note in particular that

\begin{center}
\beginpgfgraphicnamed{tr_id}
\InputIfFileExists{tr_id.tikz}{}{\input{./figures/tr_id.tikz}}
\endpgfgraphicnamed
\end{center}

As the circle is always the dimension of $H$ for any CFA over $H$, we write this scalar simply as $\circl$. We also note that the trace operator allows us to interpret self-loops in $\mathcal F$-graphs unambiguously.

\begin{center}
\beginpgfgraphicnamed{trace_selfloop}
\InputIfFileExists{trace_selfloop.tikz}{}{\input{./figures/trace_selfloop.tikz}}
\endpgfgraphicnamed
\end{center}

% \begin{example}[map-state duality, graphically]
% From the spider theorem it immediately follows that:
% \begin{equation}\label{eq:Yanking}
% 	\tikzfig{yanking_spiders}
% \end{equation}
% \end{example}
% 
% Given any linear map $f:\mathbb{C}^2\to\mathbb{C}^2$ there is a bipartite state
% 
% \begin{equation}
% \label{eq:MS1}
% \Psi_f:= (1_{\mathbb{C}^2}\otimes f_\Psi)\circ Bell= \bR PICTURE \e
% \end{equation}
% 
% Conversely, every bipartite state $\Psi:\mathbb{C}\to\mathbb{C}^2\otimes\mathbb{C}^2$ yields a
%  linear map
% \begin{equation}
% \label{eq:MS2}
%  f_\Psi:=(S_2^0\otimes 1_{\mathbb{C}^2})\circ(1_{\mathbb{C}^2}\otimes \Psi) = \bR PICTURE \e\,.
% \end{equation}

%%%%%%%%%%%%%%%%%%%
%%%%%%%%%%%%%%%%%%%
\section{SLOCC Classification}\label{sec:SLOCC}
%%%%%%%%%%%%%%%%%%%
%%%%%%%%%%%%%%%%%%%

We recall some facts about SLOCC-classification of multipartite states, since it is this classification which singles out the GHZ-state and the W-state as canonical.

\begin{definition}\label{def:locc}\em
	A state $\ket\Psi \in H_1 \otimes \ldots \otimes H_N$ can be converted into a state $\ket\Phi$ using \textit{Stochastic Local Operations and Classical Communication (SLOCC)} precisely when there exists an $N$ party protocol that succeeds with non-zero probability at turning $\ket\Psi$ into $\ket\Phi$, where each party $p_i$ has access to $H_i$, and can:
	\begin{itemize}
		\item apply any number of local (quantum) operations
		      $O : H_i \rightarrow H_i$
		\item perform any amount of classical communication
		      with the other parties
	\end{itemize}	
	In such a case, we write $\ket\Phi \preceq \ket\Psi$.
\end{definition}

Note that these operations $O$ could be implemented as generalised measurements, i.e.~those arising when measuring the system together with an ancillary system by means of a projective measurement, as well unitary operations applied to extended systems. 
%Ignoring implementation details, we shall take them to be arbitrary linear maps whose application on $\ket\Psi$ does not yield the zero vector. 
The relation $\preceq$ forms a preorder, and the induced equivalence relation, called \em SLOCC-equivalence\em, is denoted $\sim$. 
%We say two states are \emph{SLOCC-equivalent}, written 
%$\ket\Psi \sim \ket\Phi$, if they are mutually inter-convertible as above. 
The set of all states SLOCC-equivalent to a given state $\ket\Psi$ is 
%called the \emph{SLOCC-equivalence class}, or just 
the \emph{SLOCC class}, of $\ket\Psi$.

\begin{theorem}[\cite{DVC}]\label{thm:slocc-ilo}
	Two states $\ket\Psi, \ket\Phi \in H_1 \otimes \ldots H_N$ are SLOCC-equivalent iff there exist invertible linear maps $L_i: H_i \rightarrow H_i$ such that
	\[
	\ket\Psi = (L_1 \otimes  \ldots \otimes L_N)\ket\Phi\,.
	\]
\end{theorem}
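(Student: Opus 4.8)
The plan is to prove the two implications separately, with the ``if'' direction being essentially routine and the ``only if'' direction carrying all the content. For the easy direction, suppose $\ket\Psi = (L_1 \otimes \cdots \otimes L_N)\ket\Phi$ with each $L_i : H_i \to H_i$ invertible. I would show $\ket\Psi \preceq \ket\Phi$ by having each party $i$ perform the local filtering operation $L_i/\|L_i\|$: after rescaling so its operator norm is at most $1$, this is one Kraus operator of a legitimate two-outcome generalised measurement, whose ``success'' outcome occurs with nonzero probability and implements $L_i$ up to a global scalar. Performing these $N$ measurements in parallel and post-selecting on all parties succeeding turns $\ket\Phi$ into $\ket\Psi$ (up to normalisation) with nonzero probability, so $\ket\Psi \preceq \ket\Phi$; applying the same argument with $L_i^{-1}$, which exist since the $L_i$ are invertible, gives $\ket\Phi \preceq \ket\Psi$, whence $\ket\Psi \sim \ket\Phi$.

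For the hard direction I would first reduce a general LOCC protocol to a single round of local operators. A protocol is a tree of local measurements interleaved with broadcasts of classical outcomes, and a single branch of this tree corresponds to one fixed sequence of measurement outcomes. Along a branch, everything a given party $i$ does is a composition of Kraus operators acting on $H_i$ alone, so it collapses to a single operator $A_i : H_i \to H_i$. Since $\ket\Phi \preceq \ket\Psi$ holds with nonzero probability, some branch succeeds, and on it $\ket\Phi \propto (A_1 \otimes \cdots \otimes A_N)\ket\Psi$. Symmetrically, $\ket\Psi \preceq \ket\Phi$ yields operators $B_i$ with $\ket\Psi \propto (B_1 \otimes \cdots \otimes B_N)\ket\Phi$. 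The remaining task is to upgrade the a priori non-invertible $A_i$ to invertible maps.

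The key step is a rank argument on the single-party reduced density matrices $\rho_i^\Psi = \Tr_{\neq i}(\ket\Psi\bra\Psi)$ and $\rho_i^\Phi$. Acting with operators on the other parties and then tracing them out cannot enlarge the marginal support on $H_i$, and postcomposing with $A_i$ cannot increase rank, so $\operatorname{rank}\rho_i^\Phi \le \operatorname{rank}\rho_i^\Psi$; the reverse conversion gives the opposite inequality, forcing equality of the local ranks, call them $r_i$. Consequently $A_i$ restricts to a bijection from $\operatorname{supp}\rho_i^\Psi$ onto $\operatorname{supp}\rho_i^\Phi$, two subspaces of equal dimension $r_i$. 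Since a pure state always lies in the tensor product of the supports of its marginals, $\ket\Psi \in \bigotimes_i \operatorname{supp}\rho_i^\Psi$, so the behaviour of $A_i$ away from $\operatorname{supp}\rho_i^\Psi$ is irrelevant to the equation $\ket\Phi \propto (A_1 \otimes \cdots)\ket\Psi$. I would therefore define $L_i$ to agree with $A_i$ on $\operatorname{supp}\rho_i^\Psi$ and to map its orthogonal complement isomorphically onto the complement of $\operatorname{supp}\rho_i^\Phi$, which is possible as both complements have dimension $\dim H_i - r_i$. Then each $L_i$ is invertible and $\ket\Psi = (L_1 \otimes \cdots \otimes L_N)\ket\Phi$ after absorbing the leftover scalar into one factor.

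The main obstacle is the structural reduction in the second paragraph: making precise that an arbitrary, possibly many-round, adaptively chosen LOCC protocol restricts on a single successful branch to a product of fixed local operators, one per party. Everything after that is the linear-algebraic rank-and-extension bookkeeping of the third paragraph.
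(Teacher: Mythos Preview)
The paper does not give its own proof of this theorem; it is stated with a citation to D\"ur, Vidal and Cirac \cite{DVC} and used as background. Your proposal is a correct outline and is essentially the argument of the original reference: implement invertible local maps by local filtering for the easy direction, and for the converse collapse a successful branch of the protocol to a product $A_1\otimes\cdots\otimes A_N$, use equality of the local Schmidt ranks to see each $A_i$ is a bijection between the relevant supports, and extend arbitrarily off the support to obtain invertible $L_i$.
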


We say a state $\ket\Psi$ is \emph{SLOCC-maximal} if it is maximal with respect to $\preceq$. I.e.

\[ \ket\Psi \preceq \ket\Phi \implies \ket\Phi \sim \ket\Psi \]

For two qubits there are two SLOCC classes:
\begin{center}
\beginpgfgraphicnamed{two-qubit-slocc-classes} % (fold)
	\begin{tikzpicture}[dotpic]
		\node [box vertex] (0) at (-1.00, 1.00) {$\ket{Bell}$};
		\node [box vertex] (1) at (-1.00, -1.00) {$\ket\psi \otimes \ket\phi$};
		\draw  (0) to node [auto] {$\preceq$} (1);
	\end{tikzpicture}
\endpgfgraphicnamed % {two-qubit-slocc-classes} (end)
\end{center}

Where $\ket{Bell}$ is the usual Bell state $\ket{00}+\ket{11}$. Any entangled state on qubits can be obtained from the Bell state using local invertible maps. Any local rank-one map (e.g. the outcome of a projective measurement) converts the Bell state into a product state.

For three qubits, there are still only four SLOCC classes, up to a permutation of qubits, but $\preceq$ is no longer a total order. These classes are:

\begin{center}
\beginpgfgraphicnamed{three_qubit_slocc_classes} % (fold)
	\begin{tikzpicture}[dotpic]
		\node [box vertex] (0) at (-3.00, 1.00) {$\ketGHZ$};
		\node [box vertex] (1) at (1.00, 1.00) {$\ketW$};
		\node [box vertex] (3) at (-1.00, -1.00) {$\ket{Bell} \otimes \ket\psi$};
		\node [box vertex] (5) at (-1.00, -3.00) {$\ket\psi \otimes \ket\phi \otimes \ket\xi$};
		\draw  (1) to node [auto] {} (3);
		\draw  (0) to node [auto,swap] {} (3);
		\draw  (3) to node [auto] {$\preceq$} (5);
		\node [style=none] (0) at (-3.00, 0.00) {$\preceq$};
		\node [style=none] (1) at (1.00, 0.00) {$\preceq$};
	\end{tikzpicture}
\endpgfgraphicnamed % {three_qubit_slocc_classes} (end)
\end{center}

Note that there is not a unique maximally entangled SLOCC-class, but two classes, represented by the GHZ and W states.

Beyond three qubits, SLOCC classification becomes much more difficult. This is because there are infinitely many distinct SLOCC classes when $N\geq 4$ \cite{DVC}. To obtain finite classification results many authors have expanded to talk about SLOCC \emph{super-classes}, or families of SLOCC classes parameterised by one or more continuous variables.  
However, such classifications depend highly on the mathematical strategy by which states are reduced to canonical forms, rather than on any intrinsic qualities of the states themselves. An example of such a classification in terms of SLOCC super-classes  is in Section \ref{sec:generating-states}.

%%%%%%%%%%%%%%%%%%%
%%%%%%%%%%%%%%%%%%%
\subsection{SLOCC-maximal entangled states}\label{sec:maximal-entanglement}
%%%%%%%%%%%%%%%%%%%
%%%%%%%%%%%%%%%%%%%

We shall now focus on the states that are maximally entangled with respect to SLOCC. This is a good starting point if we consider states as computational resources. For any state $\ket\Psi$ we can always find some maximally entangled state and a SLOCC protocol that produces $\ket\Psi$. As we intend to reason about such states graphically, we shall reformulate SLOCC-maximality in graphical terms.

\begin{proposition}\label{pro:bipartite-maximal}
	A bipartite state $\ket\Psi \in H \otimes H$ is SLOCC-maximal iff there exists an effect $\bra\Phi : H \otimes H \rightarrow \mathbb C$ such that
	\begin{equation}\label{eq:bipartite-maximal}
\beginpgfgraphicnamed{maximal_bipartite}
\InputIfFileExists{maximal_bipartite.tikz}{}{\input{./figures/maximal_bipartite.tikz}}
\endpgfgraphicnamed
	\end{equation}
\end{proposition}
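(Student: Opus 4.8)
The plan is to pass through \emph{map-state duality}. Fix the basis CFA $\mathcal G$ of Example \ref{ex:basisCFA}, whose cup $\widehat{S^0_2}=\sum_i\ket{ii}$ and cap $\widehat{S^2_0}=\sum_i\bra{ii}$ equip $H$ with a compact structure (Proposition \ref{pro:induced-iso}). Under this structure the bipartite state $\ket\Psi\in H\otimes H$ is the name of a unique linear map $f_\Psi:H\to H$ (bend one output wire down with the cap), and any candidate effect $\bra\Phi$ is the coname of a map $g_\Phi:H\to H$. I expect the graphical equation \eqref{eq:bipartite-maximal} to be the yanking/snake equation asserting that plugging $\bra\Phi$ into one leg of $\ket\Psi$ leaves a bare identity wire; translated through the duality this reads $g_\Phi\circ f_\Psi=\id{H}$. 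Thus the right-hand condition of the Proposition says precisely that $f_\Psi$ admits a one-sided inverse, and because $H$ is finite-dimensional this is equivalent to $f_\Psi$ being invertible (with $g_\Phi=f_\Psi^{-1}$).

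It then remains to show that SLOCC-maximality is likewise equivalent to invertibility of $f_\Psi$. Here I use Theorem \ref{thm:slocc-ilo} together with the matrix picture: writing $\ket\Psi$ as a matrix $M$ via the same duality, a local operation $A\otimes B$ acts as $M\mapsto A\,M\,B^{\mathsf T}$. Since $\mathrm{rank}(A\,M\,B^{\mathsf T})\le\mathrm{rank}(M)$, rank is non-increasing as one descends the preorder $\preceq$. If $M$ is invertible then for any target $N$ one may take $B=\id{H}$ and $A=N M^{-1}$, so $\ket\Psi$ converts into every bipartite state and is therefore maximal; conversely, if $M$ is not full rank then $\ket\Psi\preceq\ket{Bell}$ while $\ket{Bell}\not\sim\ket\Psi$ (they have different rank, hence are not related by invertible $L_1\otimes L_2$), so $\ket\Psi$ is not maximal. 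Hence SLOCC-maximal $\iff f_\Psi$ invertible, which dovetails with the two-level picture recalled in Section \ref{sec:SLOCC}.

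Chaining the two equivalences yields the Proposition: $\ket\Psi$ is SLOCC-maximal iff $f_\Psi$ is invertible iff the required effect $\bra\Phi$, namely the coname of $f_\Psi^{-1}$, exists; and the latter gives back \eqref{eq:bipartite-maximal} by the snake equation for $\mathcal G$.

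The main obstacle I anticipate is bookkeeping the transpose in the duality rule $M\mapsto A\,M\,B^{\mathsf T}$ and matching the orientation of the snake in \eqref{eq:bipartite-maximal} to the composite $g_\Phi\circ f_\Psi$; getting this orientation right is the only genuinely delicate point. I must also take care to invoke finite-dimensionality explicitly when upgrading the \emph{one}-sided identity $g_\Phi\circ f_\Psi=\id{H}$ to full invertibility of $f_\Psi$, rather than presupposing a two-sided snake. Everything else is the standard linear-algebra content of SLOCC-equivalence repackaged graphically.
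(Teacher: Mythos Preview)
Your proposal is correct and is exactly the approach the paper has in mind: the paper simply says the result ``follows straightforwardly by map-state duality,'' and what you have written is a faithful unpacking of that one-line remark. Your care about the one-sided versus two-sided inverse and the transpose bookkeeping is appropriate but not something the paper addresses explicitly.
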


The proof of this result follows straightforwardly by map-state duality.

% \begin{proof}
% Follows straightforwardly by map-state duality.
%	A bipartite state is SLOCC-maximal if and only if it has full Schmidt rank \cite{Lamata}. That is, it has the follow Schmidt decomposition:
%	\[ \ket\Psi = \sum_i \alpha_i \ket{d_i} \otimes \ket{e_i}  \]	
%	for non-zero $\alpha_i$. Then we can define $\bra\Phi$ as follows:
%	\[ \bra\Phi = \sum_i \frac{1}{\alpha_i} \bra{e_i} \otimes \bra{d_i} \]
%	Then $(1_H \otimes \bra\Phi)(\ket\Psi \otimes 1_H) = \sum_i \ket{d_i}\bra{d_i} = 1_H$.
%	Conversely, suppose $\ket\Psi$ did not have full Schmidt rank. Then some $\alpha_j = 0$. Therefore $(\bra{d_j} \otimes 1_H)\ket\Psi = 0$. The left hand side of Eq \ref{eq:bipartite-maximal} must then have a non-trivial left null space, thus not equal to the identity.
% \qed
% \end{proof}

%%%%%%%%%%%%%%%%%%%
%%%%%%%%%%%%%%%%%%%
\section{Frobenius States}\label{sec:frobenius-states}
%%%%%%%%%%%%%%%%%%%
%%%%%%%%%%%%%%%%%%%

As we've alluded to in previous sections, we shall look at the connection between tripartite entangled states and algebraic structures. We shall highlight two sufficient conditions for a tripartite state to generate structures that (a) satisfy a unit law and (b) are associative, commutative, and Frobenius. In a sense that will soon become clear, condition (a) implies that a tripartite state be highly entangled and condition (b) implies that the state be highly symmetric.

In this section, we shall consider the kinds of tripartite states $\ket\Psi$ that are not only highly entangled and symmetric, but are members of a family of $N$-partite states that can be constructed from $\ket\Psi$ which inherit this state's ``algebraic'' properties. To this end, we shall introduce the notions of \emph{strong SLOCC-maximality} and \emph{strong symmetry}.

\begin{definition}\label{def:tripartite-maximal}\em 
	A tripartite state $\ket\Psi \in H \otimes H \otimes H$ is said to be \emph{strongly SLOCC-maximal} if there exist effects $\bra{\xi_i}$ such that the following three states are all SLOCC-maximal.
	\begin{equation}\label{eq:strong-maximality}
\beginpgfgraphicnamed{maximal_tripartite}
\InputIfFileExists{maximal_tripartite.tikz}{}{\input{./figures/maximal_tripartite.tikz}}
\endpgfgraphicnamed
	\end{equation}
\end{definition}

\begin{theorem}\label{thm:strong-maximality}
    Strong SLOCC-maximality strictly implies SLOCC-maximality.
\end{theorem}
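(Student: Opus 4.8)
The statement has two halves, and the plan is to treat them separately: first that strong SLOCC-maximality implies SLOCC-maximality, and then that the implication is proper, i.e.\ exhibiting a state that is SLOCC-maximal but not strongly SLOCC-maximal. For the forward implication I would lean on the bipartite criterion of Proposition~\ref{pro:bipartite-maximal} together with the ILO characterisation of Theorem~\ref{thm:slocc-ilo}.

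Concretely, suppose $\ket\Psi$ is strongly SLOCC-maximal, witnessed by effects $\bra{\xi_1},\bra{\xi_2},\bra{\xi_3}$, and suppose $\ket\Psi\preceq\ket\Phi$, so that $\ket\Psi=(M_1\otimes M_2\otimes M_3)\ket\Phi$ for some local maps $M_i:H\to H$. By Theorem~\ref{thm:slocc-ilo} it suffices to show each $M_i$ is invertible. Plugging $\bra{\xi_3}$ into the third leg gives
\[
(1\otimes 1\otimes\bra{\xi_3})\ket\Psi \;=\; (M_1\otimes M_2)\,\ket{\chi_3},\qquad \ket{\chi_3}:=(1\otimes 1\otimes\bra{\xi_3}M_3)\ket\Phi,
\]
whose left-hand side is SLOCC-maximal as a bipartite state. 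By map-state duality a bipartite state on $H\otimes H$ is SLOCC-maximal exactly when its associated map $H\to H$ is invertible (Proposition~\ref{pro:bipartite-maximal}); since here that invertible map factors through $M_1$ and through $M_2^T$, both $M_1$ and $M_2$ must be full rank, hence invertible. Repeating with $\bra{\xi_2}$ on the second leg forces $M_1,M_3$ invertible and with $\bra{\xi_1}$ on the first leg forces $M_2,M_3$ invertible, so all three $M_i$ are invertible and $\ket\Phi\sim\ket\Psi$. (Equivalently, one can observe that a maximal bipartite reduction has full single-party marginals, and these are dominated by the corresponding marginals of $\ket\Psi$, so strong maximality already forces every single-party marginal of $\ket\Psi$ to be full rank.)

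For strictness I would exhibit the totally antisymmetric qutrit state
\[
\ket\Psi \;=\; \sum_{a,b,c}\epsilon_{abc}\,\ket{abc}\ \in\ \mathbb{C}^3\otimes\mathbb{C}^3\otimes\mathbb{C}^3,
\]
with $\epsilon_{abc}$ the Levi-Civita symbol. Its three leg-$1$ slices $\Psi^{(a)}_{bc}=\epsilon_{abc}$ are mutually orthogonal, so $\rho_1\propto 1_H$, and by the permutation symmetry every single-party reduced operator is full rank; the reduced-operator argument (from $\ket\Psi=(M_1\otimes M_2\otimes M_3)\ket\Phi$ one gets $\rho_i^{\Psi}=M_i(\cdots)M_i^\dagger$, of rank at most $\mathrm{rank}\,M_i$) then forces every $M_i$ invertible, so $\ket\Psi$ is SLOCC-maximal. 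On the other hand, an effect $\bra{\xi_1}$ on the first leg produces the bipartite matrix $\sum_a\overline{(\xi_1)_a}\,\Psi^{(a)}$, which is a linear combination of antisymmetric $3\times3$ matrices and hence singular for every $\bra{\xi_1}$; thus no effect makes the remaining two legs maximally entangled, and by symmetry the same holds for the other two legs. Hence $\ket\Psi$ is SLOCC-maximal but not strongly SLOCC-maximal.

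The forward implication is the routine half, with the one care-requiring step being the passage from ``the bipartite reduction is maximal'' to ``the local maps are invertible,'' which is exactly what Proposition~\ref{pro:bipartite-maximal} (or the full-rank marginal) supplies. The genuinely delicate part is the strictness witness: one needs a state all of whose one-leg slices span a space of singular matrices while every single-party marginal stays full rank, and the antisymmetric state is the natural minimal example, since the $3\times3$ antisymmetric matrices form a three-dimensional space of singular matrices whose rows and columns nonetheless span $\mathbb{C}^3$.
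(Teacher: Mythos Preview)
Your forward implication is essentially the paper's, viewed directly rather than contrapositively: the paper argues that if $\ket\Psi$ is not SLOCC-maximal then some singular local factor $S_i$ obstructs at least two of the three strong-maximality equations, while you argue that each maximal bipartite reduction forces two of the local factors to be invertible. Both rely on the same local-operator characterisation of $\preceq$ (the one-direction strengthening of Theorem~\ref{thm:slocc-ilo}), and the reasoning is interchangeable.

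For strictness your route is genuinely different. The paper argues abstractly and in keeping with its algebraic theme: any counital coalgebra $(A,\delta,\epsilon)$, turned into a tripartite state by bending the input with the Bell cup, satisfies the two strong-maximality conditions on the output legs by the counit law and is therefore SLOCC-maximal; if SLOCC-maximality implied strong SLOCC-maximality, the remaining condition would furnish an $\eta$ with $\delta\eta$ non-degenerate, whence the coalgebra extends to a Frobenius algebra---contradicting the (unstated but standard) existence of non-Frobenius coalgebras, necessarily in dimension $\geq 3$. You instead exhibit the totally antisymmetric qutrit state and verify directly that every one-leg contraction is an odd-size antisymmetric matrix (hence singular) while all single-party marginals are full rank. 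Your argument is more elementary and fully self-contained; the paper's identifies the obstruction with Frobenius extendibility but leaves the witness implicit. Note also that your state does \emph{not} arise from a counital coalgebra (no counit can exist, since $\sum_a\xi_a\epsilon_{abc}$ is antisymmetric in $b,c$ while the identity is symmetric), so the two witnesses are genuinely distinct three-dimensional examples rather than the same object seen two ways.
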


\begin{proof}
    Suppose $\ket\Psi$ were not SLOCC-maximal. Then there exists $\ket{\Psi'}$ such that
    \[ \ket\Psi = (S_1 \otimes S_2 \otimes S_3)\ket{\Psi'} \]
    for at least one $S_i$ a singular map. Therefore at least two of the three equations given in (\ref{eq:strong-maximality}) could not hold. So, strong SLOCC-maximality implies SLOCC-maximality.
    
    Conversely, suppose SLOCC-maximality implies strong SLOCC-maximality. We observe that any map $A \rightarrow A \otimes A$ can be written in this form:
    \begin{center}
\beginpgfgraphicnamed{state_comult}
\InputIfFileExists{state_comult.tikz}{}{\input{./figures/state_comult.tikz}}
\endpgfgraphicnamed
    \end{center}
    for $\bra\Phi = \sum \bra{ii}$. In particular, a comultiplication $\delta$ from a coalgebra $(A, \delta, \epsilon)$ could be expressed as above. In such a case, the unit law forces $\ket\Psi$ to be SLOCC-maximal. Supposing this implies strong SLOCC-maximality, then there exists $\bra\xi$ such that the following is SLOCC-maximal:
    \[ (1 \otimes 1 \otimes \bra\xi)\ket\Psi \]
    Let $\eta = (1 \otimes \bra\xi)\ket\Phi$. It is then straightforward to show that $(\delta, \eta)$ extends to a Frobenius algebra. Thus \emph{every} coalgebra extends to a Frobenius algebra, which is a contradiction.
\qed
\end{proof}

In the case where $\ket\Psi$ is symmetric, we can simplify this condition.

\begin{proposition}\label{pro:symmetric-maximal}
	A symmetric tripartite state $\ket\Psi$ is strongly SLOCC-maximal iff
	\begin{equation}\label{eq:symmetric-maximal}
\beginpgfgraphicnamed{maximal_symmetric}
\InputIfFileExists{maximal_symmetric.tikz}{}{\input{./figures/maximal_symmetric.tikz}}
\endpgfgraphicnamed
		%\tikzfig{dude_with_sword}
	\end{equation}
\end{proposition}

The GHZ state satisfies Eq \ref{eq:symmetric-maximal} when we fix $\bra\xi = \bra{+}$ and $\bra\Phi = \bra{\textit{Bell}\,}$. The W state satisfies this equation when we let $\bra\xi = \bra{0}$ and $\bra\Phi = \bra{\textit{EPR}\,} := \bra{01} + \bra{10}$.

As these are the only two maximally entangled tripartite states for qubits, it is worthwhile to expound upon their graphical properties. These states are both symmetric, and furthermore, have natural $N$-qubit symmetric analogues.
\begin{eqnarray}
	\ket{GHZ_N} & := & \ket{00\ldots0} + \ket{11\ldots1} \\
	\ket{W_N} & := & \ket{10\ldots0} + \ket{01\ldots0} +
	                 \ldots + \ket{0\ldots01}
\end{eqnarray}

Not only do they have such $N$-partite versions, they come with a recipe for inductively constructing them. That is, for both of these states, there is a bipartite effect $\bra\Phi$ that can be used to ``glue'' a tripartite state on to an $N$-partite state by projecting out a pair of qubits to make an $(N+1)$-partite symmetric state.
\begin{eqnarray*}
	\ket{GHZ_{N+1}} & = &
	  (1 \otimes \bra{\textit{Bell }} \otimes 1)
	  (\ket{GHZ_N} \otimes \ket{GHZ_3}) \\
	\ket{W_{N+1}} & = &
	  (1 \otimes \bra{\textit{EPR }} \otimes 1)
	  (\ket{W_N} \otimes \ket{W_3})
\end{eqnarray*}

To inductively build a symmetric $N$ party state, it suffices that the following condition hold.

\begin{definition}\label{def:inductively-symmetric}\em 
	A symmetric state is said to be \emph{strongly symmetric} if there exists some bipartite effect $\bra\Phi$ such that
	\begin{equation}\label{eq:inductively-symmetric}
\beginpgfgraphicnamed{inductively_symmetric}
\InputIfFileExists{inductively_symmetric.tikz}{}{\input{./figures/inductively_symmetric.tikz}}
\endpgfgraphicnamed
	\end{equation}
\end{definition}

We can now make a general definition for the types of states we consider to be ``highly algebraic'' in character.

% Note that the $\bra\Phi$ required to show $\ketGHZ$ and $\ketW$ are maximally entangled is exactly the $\bra\Phi$ used in the above definition. We therefore make a definition for states that are both strongly SLOCC-maximal and have strong symmetry. We shall call these \textit{Frobenius states}.

\begin{definition}\label{def:Frobenius-state}\em 
	A symmetric tripartite state $\ket\Psi$ is said to be a \emph{Frobenius state} if there exist effects $\bra\Phi$, $\bra\xi$ such that Eqs \ref{eq:symmetric-maximal} and \ref{eq:inductively-symmetric} hold.
\end{definition}

Note that $\bra\Phi$ satisfying Eqs \ref{eq:symmetric-maximal} and \ref{eq:inductively-symmetric} must be the \emph{same effect}. This is a stronger condition than stating that these equations respectively hold for some effects $\bra\Phi$ and $\bra{\Phi'}$. This will be crucial to the construction to follow.

\begin{theorem}[Algebras as states]\label{thm:frob-state-from-cfa}
    For any commutative Frobenius algebra $\blackdot = (H, \mult, \unit, \comult, \counit)$, the following is a Frobenius state, with its two associated effects:
    \begin{center}
\beginpgfgraphicnamed{frob_state_from_cfa}
\InputIfFileExists{frob_state_from_cfa.tikz}{}{\input{./figures/frob_state_from_cfa.tikz}}
\endpgfgraphicnamed
    \end{center}
\end{theorem}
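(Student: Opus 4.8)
The plan is to exhibit the data explicitly and then reduce both defining conditions of a Frobenius state to spider manipulations justified by Theorem \ref{thm:cfa-nf}. Reading the claimed picture, the candidate state is the three-legged spider $S^0_3$ (a zero-input, three-output connected $\mathcal F$-graph), the effect $\bra\xi$ is the counit $\counit$, and the bipartite effect $\bra\Phi$ is the Frobenius cap $S^2_0 = \counit\circ\mult$. (As a sanity check, for the basis CFA of Example \ref{ex:basisCFA} on $\{\ket0,\ket1\}$ these specialise to $\ketGHZ$, $\bra\xi\propto\bra{+}$, and $\bra\Phi=\bra{\textit{Bell}}$, matching the remark after Proposition \ref{pro:symmetric-maximal}.) First I would record that $\ket\Psi = S^0_3$ is symmetric: since the CFA is commutative and cocommutative (Definition \ref{def:commutative-frobenius-algebra}), every connected $\mathcal F$-tree is determined by its numbers of inputs and outputs, so permuting the three output wires leaves $S^0_3$ invariant. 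This places $\ket\Psi$ in the scope of Proposition \ref{pro:symmetric-maximal}.

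Next I would verify Eq \ref{eq:symmetric-maximal}. Capping the third leg of $S^0_3$ with $\bra\xi = \counit$ collapses the three-output spider to the two-output spider $S^0_2$ by the counit law (an instance of spider fusion); this $S^0_2$ is exactly the Frobenius cup, while $\bra\Phi = S^2_0$ is the Frobenius cap. The maximality equation required by Proposition \ref{pro:symmetric-maximal} is then precisely the snake (yanking) identity for this cup/cap pair, which is the compact structure already established in Proposition \ref{pro:induced-iso}; equivalently it is a direct instance of the normal form of Theorem \ref{thm:cfa-nf}.

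Then I would verify the strong-symmetry equation \ref{eq:inductively-symmetric} using the \emph{same} $\bra\Phi = S^2_0$. Gluing two copies of $S^0_3$ along $\bra\Phi$ joins one leg from each spider by a cap; by spider fusion (again Theorem \ref{thm:cfa-nf}) the composite connected $\mathcal F$-graph is a single spider with $3+3-2 = 4$ outputs, i.e.\ $S^0_4$, which is symmetric by the same argument used for $\ket\Psi$. This is exactly the inductive-gluing identity demanded by Definition \ref{def:inductively-symmetric}.

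Finally, I would emphasise the point the statement flags as crucial: both equations are witnessed by one and the same effect $\bra\Phi = S^2_0$, not by two a priori different effects. In my setup this is automatic, since every step is carried out inside the single CFA $\blackdot$ using its canonical cap $S^2_0$, so no reconciliation between witnesses is needed; hence Definition \ref{def:Frobenius-state} is met. The main obstacle is not any one calculation but bookkeeping---keeping the roles of the three wires straight while turning each pictorial equation into a spider identity. Once the state and both effects are named $S^0_3$, $\counit$, and $S^2_0$, Theorem \ref{thm:cfa-nf} discharges both conditions uniformly.
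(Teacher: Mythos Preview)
Your proposal is correct and follows essentially the same approach as the paper: the paper's proof is the single sentence ``The Frobenius state conditions from Def \ref{def:Frobenius-state} hold as a consequence of Thm \ref{thm:cfa-nf},'' and your argument is simply an explicit unpacking of that invocation, identifying the data as $S^0_3$, $\counit$, $S^2_0$ and reducing both defining equations to spider normal-form manipulations.
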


The Frobenius state conditions from Def \ref{def:Frobenius-state} hold as a consequence of Thm \ref{thm:cfa-nf}. Also, from any Frobenius state we can construct the associated commutative Frobenius algebra.

\begin{theorem}[States as algebras]\label{thm:frob-state-cfa}
    For any Frobenius state $\ket\Psi$, there exist effects $\bra\Phi$, $\bra\xi$ such that the following is a commutative Frobenius algebra:
    
    \begin{center}
\beginpgfgraphicnamed{cfa_from_frob_state}
\InputIfFileExists{cfa_from_frob_state.tikz}{}{\input{./figures/cfa_from_frob_state.tikz}}
\endpgfgraphicnamed 
    \end{center}
\end{theorem}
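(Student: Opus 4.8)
The plan is to read the maps $\mu,\eta,\delta,\epsilon$ off the figure as bendings of the symmetric tripartite state $\ket\Psi$, and then to verify the commutative Frobenius axioms using only the two defining properties of a Frobenius state together with the symmetry of $\ket\Psi$. First I would extract a compact structure. The symmetric SLOCC-maximality of Prop \ref{pro:symmetric-maximal} (Eq \ref{eq:symmetric-maximal}), via Prop \ref{pro:bipartite-maximal}, says precisely that the bipartite cup obtained by capping one leg of $\ket\Psi$ with $\bra\xi$ satisfies the yanking equation against $\bra\Phi$; hence $\bra\Phi$ and this cup form a compact structure on $H$, allowing wires to be bent freely (cf. Prop \ref{pro:induced-iso}). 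With this in hand, $\delta$ is $\ket\Psi$ with one leg bent down into an input, $\mu$ is $\ket\Psi$ with two legs bent down, $\epsilon$ is taken to be $\bra\xi$, and $\eta$ is the mate of $\epsilon$ under the compact structure. Since $\ket\Psi$ is symmetric, the cup is symmetric, so its mate $\bra\Phi$ is symmetric as well; commutativity $\mu=\mu\sigma_{H,H}$ and cocommutativity $\delta=\sigma_{H,H}\delta$ then follow by permuting the legs of $\ket\Psi$, which leave it invariant, together with the symmetry of the bends.

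I would treat associativity, coassociativity, and the Frobenius identity uniformly. After bending, each composite $\mu(\mu\otimes 1)$, $\mu(1\otimes\mu)$, $(\mu\otimes 1)(1\otimes\delta)$, $(1\otimes\mu)(\delta\otimes 1)$, $\delta\mu$, $(\delta\otimes 1)\delta$, $(1\otimes\delta)\delta$ is a bending, to the appropriate arity, of the single four-legged diagram formed by gluing two copies of $\ket\Psi$ through one $\bra\Phi$. The strong-symmetry condition of Def \ref{def:inductively-symmetric} (Eq \ref{eq:inductively-symmetric}), together with the symmetry of each copy of $\ket\Psi$, is exactly what makes this glued four-legged state symmetric in all its legs. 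Hence all of its bendings to a map of fixed arity agree, which simultaneously yields associativity (three-in/one-out), coassociativity (one-in/three-out), and the Frobenius identity (two-in/two-out). Crucially, this argument invokes the compact structure rather than Thm \ref{thm:cfa-nf}, so it is not circular.

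Finally the unit and counit laws. Plugging $\eta$ into one input of $\mu$ caps a leg of $\ket\Psi$ with $\bra\xi$ through the compact structure, leaving the cup glued to the cap $\bra\Phi$; the yanking equation of Eq \ref{eq:symmetric-maximal} then collapses this to $1_H$, and the counit law $(\epsilon\otimes 1)\delta=1_H$ is dual. I expect this to be the main obstacle, and it is precisely the point flagged after Def \ref{def:Frobenius-state}: the gluing effect $\bra\Phi$ used to build $\mu$ must be the \emph{same} $\bra\Phi$ that yanks against $\bra\xi$ in Eq \ref{eq:symmetric-maximal}. Were the two conditions witnessed by different effects, the collapse would fail, so the shared $\bra\Phi$ is exactly what makes the unit law go through. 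Assembling commutativity/cocommutativity, the three topological identities, and the unit/counit laws gives the commutative Frobenius algebra, establishing the converse of Thm \ref{thm:frob-state-from-cfa}.
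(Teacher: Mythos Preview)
The paper does not actually supply a proof of this theorem; it is stated and then immediately followed by discussion and the comparison table. Your proposal fills in exactly the argument the paper leaves implicit, and it is correct. The approach---extract a symmetric compact structure from Eq~\ref{eq:symmetric-maximal}, use strong symmetry (Eq~\ref{eq:inductively-symmetric}) together with the symmetry of $\ket\Psi$ to make the glued four-legged state fully $S_4$-symmetric, and then read off associativity, coassociativity, and the Frobenius identity as equalities among different bendings of that single state---is the natural one, and is precisely what the paper's comparison table is encoding in its ``commutative Frobenius semi-algebra'' and ``unitality'' columns. Your emphasis on the \emph{same} $\bra\Phi$ witnessing both Eq~\ref{eq:symmetric-maximal} and Eq~\ref{eq:inductively-symmetric} being what makes the unit law go through is exactly the point the paper flags after Def~\ref{def:Frobenius-state}.

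One minor remark: your aside about avoiding Thm~\ref{thm:cfa-nf} ``so it is not circular'' is slightly off. There is no circularity threat, since Thm~\ref{thm:cfa-nf} is a statement about objects already known to be CFAs and its proof does not rely on Thm~\ref{thm:frob-state-cfa}; the reason you cannot invoke it here is simply that the CFA structure is the thing you are in the process of establishing.
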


As a result, we can refer to commutative Frobenius algebras either as the usual maps $(\mu, \eta, \delta, \epsilon)$, or as a triple $(\ket\Psi, \bra\Phi, \bra\xi)$ consisting of a Frobenius state and its two associated effects. Also, note that for a given state $\ket\Psi$, there could be multiple induced commutative Frobenius algebras based upon the choice of $\bra\xi$. However, once $\bra\xi$ is fixed, $\bra\Phi$ is completely determined by Eq \ref{eq:symmetric-maximal}. This is analogous to the situation with Frobenius algebras where the maps $\mu$ and $\epsilon$ completely determine the other two.  The following table compares the previously stated definition of a CFA to the one in terms of tripartite  states:
\begin{center}
\begin{tabular}{|c|c|c|} 
\hline
data & commutative Frobenius semi-algebra & unitality  \\

\hline

\raisebox{-3.6mm}{$%
\beginpgfgraphicnamed{CFA_d}
\InputIfFileExists{CFA_d.tikz}{}{\input{./figures/CFA_d.tikz}}
\endpgfgraphicnamed$} & $%
\beginpgfgraphicnamed{CFA}
\InputIfFileExists{CFA.tikz}{}{\input{./figures/CFA.tikz}}
\endpgfgraphicnamed$ & \raisebox{2mm}{$%
\beginpgfgraphicnamed{CFA_u}
\InputIfFileExists{CFA_u.tikz}{}{\input{./figures/CFA_u.tikz}}
\endpgfgraphicnamed$} \\

\hline

$%
\beginpgfgraphicnamed{CFA_state_d}
\InputIfFileExists{CFA_state_d.tikz}{}{\input{./figures/CFA_state_d.tikz}}
\endpgfgraphicnamed$ & $%
\beginpgfgraphicnamed{CFA_state}
\InputIfFileExists{CFA_state.tikz}{}{\input{./figures/CFA_state.tikz}}
\endpgfgraphicnamed$ &  \raisebox{1mm}{$%
\beginpgfgraphicnamed{CFA_state_u}
\InputIfFileExists{CFA_state_u.tikz}{}{\input{./figures/CFA_state_u.tikz}}
\endpgfgraphicnamed$} \\

 \hline
\end{tabular}
\end{center}
We now show the induced Frobenius algebras for our two motivating examples of Frobenius states.

\begin{example}\label{ex:concreteGHZ}
    For the Frobenius state $\ketGHZ$, fixing $\bra\xi := \sqrt{2}\bra{+}$ induces the following CFA, which we shall refer to as $\mathcal G$.
    \begin{equation}\label{GHZ-SCFA}
    	\begin{split}
    		\whitemult & = \ket{0}\bra{00} + \ket{1}\bra{11} \qquad\qquad
    		\whiteunit = \sqrt{2}\, \ket{+} := \ket{0}+\ket{1} \\
    		\whitecomult & = \ket{00}\bra{0} + \ket{11}\bra{1} \qquad\qquad
    		\whitecounit = \sqrt{2} \bra{+} := \bra{0}+\bra{1}
    	\end{split}\vspace{-1.5mm}
    \end{equation}
\end{example}
\begin{example}\label{ex:concreteW}
    For the Frobenius state $\ketW$, fixing $\bra\xi := \bra{0}$ induces the following CFA, called $\mathcal W$.
    \begin{equation}\label{W-ACFA}
    	\begin{split}
    		\mult & = \ket{1}\bra{11} + \ket{0}\bra{01} + \ket{0}\bra{10} 
    		\qquad\qquad\qquad
    		\unit = \ket 1\qquad\qquad \\
    		\comult & = \ket{00}\bra{0} + \ket{01}\bra{1} + \ket{10}\bra{1} 
    		\qquad\qquad\qquad
    		\counit = \bra 0\qquad\qquad
    	\end{split}
    \end{equation}    
\end{example}

%%%%%%%%%%%%%%%%%%%
%%%%%%%%%%%%%%%%%%%
\section{Special and anti-special commutative Frobenius algebras}\label{sec:antispecial} 
%\section{GHZ- and W-states as commutative Frobenius algebras}
%%%%%%%%%%%%%%%%%%%
%%%%%%%%%%%%%%%%%%%

The normal form given in Thm \ref{thm:cfa-nf} suggests that graphs of Frobenius algebras could contain many loops. We shall focus here on two cases of CFA's. The first is where the loops vanish, and the second is where they propagate outwards, disconnecting the entire graph.

\begin{definition}\label{def:scfa}\em 
A \emph{special commutative Frobenius algebra} (SCFA) is a commutative Frobenius algebra where
\begin{center}
\beginpgfgraphicnamed{special_cfa}
\begin{tikzpicture}[dotpic,yshift=5mm]
	\node [dot] (a) at (0,0) {};
	\node [dot] (b) at (0,-1) {};
	\draw [bend left] (a) to (b);
	\draw [bend right] (a) to (b);
	\draw (0,0.5) to (a) (b) to (0,-1.5);
\end{tikzpicture} $=$ \
\begin{tikzpicture}[dotpic]
	\draw (0,1) -- (0,-1);
\end{tikzpicture}}
\endpgfgraphicnamed
\end{center}
\end{definition}

\begin{theorem}\label{thm:scfa-spider}
	For a special commutative Frobenius algebra $\mathcal S$, any connected $\mathcal S$-graph with $n$ inputs and $m$ outputs is equal to the spider $S^n_m$.
\end{theorem}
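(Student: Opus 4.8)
The plan is to use the normal-form theorem together with the special law to show that loops are irrelevant: every connected $\mathcal{S}$-graph reduces to the loop-free spider of the same arity. The guiding principle is the classification recalled before Thm~\ref{thm:cfa-nf}, namely that a connected $\mathcal{F}$-graph is determined by its numbers of inputs, outputs, and loops; the special law will let me collapse the loop count to zero.

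First I would apply Thm~\ref{thm:cfa-nf} to the given connected $\mathcal{S}$-graph with $n$ inputs and $m$ outputs. This rewrites it into normal form, in which all multiplications are gathered on the input side, all comultiplications on the output side, and the $k$ loops are collected in the middle on a single wire. Concretely, the normal form factors as $S^n_1$ followed by $k$ copies of the handle $L := \mu \circ \delta : H \to H$ followed by $S^1_m$, i.e. the whole map equals $S^1_m \circ L^{k} \circ S^n_1$.

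Next I would invoke the special law from Def~\ref{def:scfa}, which is exactly the identity $\mu \circ \delta = 1_H$, that is $L = 1_H$. Applying it to each of the $k$ handles collapses the central column, so $L^{k} = 1_H$ and the normal form reduces to $S^1_m \circ S^n_1$. Finally, spider fusion — the loop-free instance of the classification of connected $\mathcal{F}$-graphs, since $S^1_m \circ S^n_1$ is connected with $n$ inputs, $m$ outputs and no loops — gives $S^1_m \circ S^n_1 = S^n_m$, which is the claimed spider.

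I expect the only real work to be in the first step, namely certifying that the loops supplied by the normal form can each be isolated as a clean $\mu \circ \delta$ factor sitting on the single central wire, so that speciality applies verbatim. This is a matter of reading off the structure produced by Thm~\ref{thm:cfa-nf} and, if needed, using associativity, coassociativity, commutativity and the Frobenius identity to slide each handle onto the central wire; it introduces nothing genuinely new. Once the handles are separated, each loop is removed mechanically by a single use of the special law, and the count $k$ drops to $0$ regardless of its initial value — which is precisely why a connected $\mathcal{S}$-graph depends only on its numbers of inputs and outputs and not on its genus. A graphical rendering of the three steps (normal form, delete loops, fuse spiders) would make the argument fully transparent.
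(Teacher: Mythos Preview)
Your proposal is correct and follows essentially the same approach as the paper: apply the normal-form theorem (Thm~\ref{thm:cfa-nf}) to isolate the loops, then delete each loop using the special identity $\mu\circ\delta = 1_H$ from Def~\ref{def:scfa}, leaving the spider $S^n_m$. The paper's proof is terser but identical in substance; your additional remarks about factoring the normal form as $S^1_m \circ L^k \circ S^n_1$ and invoking spider fusion merely make explicit what the paper leaves implicit.
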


\begin{proof}
	From Thm \ref{thm:cfa-nf}, we can put any connected $\mathcal F$-graph in the form (\ref{thm:cfa-nf}).
%	\begin{center}
%		\tikzfig{cfa_full_nf}
%	\end{center}	
	We can then remove the loops using the identity from Def \ref{def:scfa}, to obtain a tree, namely the spider $S^n_m$. \qed
\end{proof}

Special commutative Frobenius algebras play a role in characterising orthonormal bases and classical data in quantum systems \cite{CPav,CoeckeDuncan}, and as we shall soon see, are deeply connected to GHZ states.

\begin{definition}\label{def:acfa}\em 
	An \emph{anti-special commutative Frobenius algebra} (ACFA) is a commutative Frobenius algebra where
	\begin{center}
\beginpgfgraphicnamed{antispecial_cfa}
\InputIfFileExists{antispecial_cfa.tikz}{}{\input{./figures/antispecial_cfa.tikz}}
\endpgfgraphicnamed
	\end{center}
\end{definition}

While the notion of special Frobenius algebras is standard, that of anti-special CFAs seems new. We will now characterise the normal form of an ACFA. First we show that for an ACFA $\delta$ (resp.~$\mu$) copies $\lolli$ (resp.~$\cololli$\,).

\begin{proposition}\label{prop:acfa-copy}
For any ACFA we have:

\begin{center}
\beginpgfgraphicnamed{acfa-copy-statement} % (fold)
	\circl
	\begin{tikzpicture}[dotpic,scale=0.85]
		\node [dot] (a) at (0,1) {};
		\node [dot] (b) at (0,0) {};
		\node [bn] (0) at (-1,-1) {};
		\node [bn] (1) at (1,-1) {};
		\draw (a) -- (b) -- (0) (b) -- (1);
		\draw (a) to [uploop] ();
	\end{tikzpicture}
	=
	\begin{tikzpicture}[dotpic,scale=0.85]
		\node [dot] (a) at (0,0.5) {};
		\node [dot] (b) at (1,0.5) {};
		\node [bn] (0) at (0,-0.5) {};
		\node [bn] (1) at (1,-0.5) {};
		\draw (a)--(0) (b)--(1);
		\draw (a) to [uploop] ();
		\draw (b) to [uploop] ();
	\end{tikzpicture}
\endpgfgraphicnamed % {acfa-copy-statement} (end)
\end{center}
\end{proposition}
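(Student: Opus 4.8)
The proposition says that the comultiplication \emph{copies} the loop-state \lolli, up to the dimension scalar \circl. The conceptual point to keep in mind is that the left-hand side is a \emph{connected} $\mathcal F$-graph (no inputs, two outputs, a single loop), whereas the right-hand side is \emph{disconnected} into two components, each carrying its own loop. Consequently Thm~\ref{thm:cfa-nf} cannot settle the identity on its own: the normal form preserves both connectivity and the number of loops, and the two sides disagree on each. The anti-special axiom (Def~\ref{def:acfa}) is exactly the device that lets a loop disconnect a graph, so the whole argument hinges on applying it once, in the right place.

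The plan is as follows. Writing $\lambda := \mu\circ\delta$ for the one-loop endomorphism (the ``bubble'') and $S^0_2 = \delta\circ\eta$ for the cup-state, I first record the bookkeeping identities that \lolli\ equals $\lambda\circ\eta$ and \cololli\ equals $\epsilon\circ\lambda$. The first genuine step uses only commutative-Frobenius structure: applying the Frobenius identity of Def~\ref{def:concrete-Frobenius} to slide $\delta$ past $\mu$, together with (co)associativity and cocommutativity, I would rewrite $\delta\circ(\lambda\circ\eta)$ as $(\lambda\otimes\id{H})\circ S^0_2$. Intuitively, comultiplying twice produces the fully symmetric three-legged state $S^0_3$; multiplying two of its legs then leaves the single loop sitting on \emph{one} of the two surviving wires of a symmetric cup, and by the symmetry of $S^0_3$ it does not matter which one.

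With the bubble $\lambda$ thus isolated on a single wire, I would then invoke the anti-special axiom to split it, replacing $\lambda$ (up to the factor \circl) by the disconnected rank-one map \lolli${}\circ{}$\cololli. Substituting this into $(\lambda\otimes\id{H})\circ S^0_2$ detaches one loop onto the first output as \lolli, and on the second output leaves the cup $S^0_2$ with the loop-effect \cololli\ contracted against its first leg. The final step is a pure yanking move: contracting \cololli\ against one leg of the cup — that is, transposing it through the compact structure of Prop~\ref{pro:induced-iso} — returns exactly the loop-state \lolli. The prefactor \circl\ cancels the reciprocal dimension factor produced by the axiom, so the residual scalar is $1$ and the answer is \lolli${}\otimes{}$\lolli. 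The $\mu$-version of the statement then follows by reflecting every diagram top-to-bottom.

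I expect the main obstacle to be this very first reshaping, and in particular resisting the temptation to apply the axiom too early. A loop is already visible inside \lolli, but splitting \emph{that} bubble merely regenerates \lolli\ up to a scalar and runs in a circle; the real content is that, after commuting $\delta$ through, the loop can be arranged to feed a \emph{pair} of output wires, so that severing it deposits a loop on each. A secondary point that demands care is the exact power of \circl: one must verify that the dimension factor created when the loop disconnects is cancelled precisely by the prefactor, leaving the clean copy law with scalar $1$.
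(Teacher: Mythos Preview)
Your proposal is correct and follows essentially the same route as the paper: use the CFA normal form (equivalently, the Frobenius and co(associativity) identities) to slide the single loop from the input of $\delta$ onto one output leg of the cup $S^0_2$, apply anti-specialness once to sever that bubble into $\lolli\circ\cololli$, and then yank $\cololli$ through the cap to recover $\lolli$ on the second leg. The paper's proof is presented as a short diagrammatic chain citing Thm~\ref{thm:cfa-nf} and anti-speciality, which is exactly the manipulation you spell out in words; your additional commentary about where \emph{not} to apply the axiom and about the scalar bookkeeping is accurate but not needed for the argument.
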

\begin{proof}
	We can show this using Thm \ref{thm:cfa-nf} and anti-speciality:
	\begin{center}
\beginpgfgraphicnamed{acfa_copy_proof}
\InputIfFileExists{acfa_copy_proof.tikz}{}{\input{./figures/acfa_copy_proof.tikz}}
\endpgfgraphicnamed \quad\qed
	\end{center}
\end{proof}

Unlike with SCFAs, scalars play a role in characterising the behaviour of ACFAs. Recall from the previous section that $\circl$ is always the dimension $D$ of the underlying Hilbert space. Assuming $D > 0$, let $\icircl = 1/D$. Thus $\circl\, \icircl = 1$.

\begin{proposition}\label{pro:looploop-zero}
	For any ACFA $(H, \mult, \unit, \comult, \counit)$, either $\textrm{\rm dim}(H) = 1$ or $\dcircl = 0$.
\end{proposition}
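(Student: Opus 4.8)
The plan is to build directly on the copy law of Proposition \ref{prop:acfa-copy}, which says that for any ACFA the comultiplication copies the lollipop state $\lolli$ up to the scalar $\circl = D$; symbolically, $\circl \cdot (\comult \circ \lolli) = \lolli \otimes \lolli$. The target scalar $\dcircl$ is, by Theorem \ref{thm:cfa-nf}, nothing but the unique connected closed $\mathcal F$-graph with no free legs and exactly two loops, so the idea is to close off the copy law into such a graph and read off a factor of $\circl$ appearing on only one side.

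First I would post-compose both sides of the copy-law equation with the multiplication $\mult$, fusing the two output legs into one. On the left this produces $\circl$ times the graph $\mult \circ \comult \circ \lolli$, which by Theorem \ref{thm:cfa-nf} is the connected state with one output and two loops: the bubble $\mult \circ \comult$ contributes one loop and $\lolli$ contributes the other. On the right, $\mult \circ (\lolli \otimes \lolli)$ is the \emph{same} connected state with one output and two loops, since fusing the two single-loop lollipops through $\mult$ merges components along edges without creating any further loop. Hence the two $\mathcal F$-graphs coincide, and the equation collapses to $\circl \cdot [\text{two-loop state}] = [\text{two-loop state}]$.

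Finally I would cap the remaining output with the counit $\counit$. By Theorem \ref{thm:cfa-nf} the resulting closed graph on each side is exactly $\dcircl$ (a connected closed $\mathcal F$-graph with two loops), so we obtain $\circl \cdot \dcircl = \dcircl$, that is $(\,\circl - 1)\,\dcircl = 0$. Since $\circl = D = \dimm(H)$, this forces either $\dimm(H) = 1$ or $\dcircl = 0$, which is the claim. Equivalently, the two steps can be merged by composing the copy law with $\counit \circ \mult$ in one stroke.

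The only genuine care needed — and the one place the argument could misfire — is the loop bookkeeping when fusing spiders via Theorem \ref{thm:cfa-nf}: one must verify that both sides truly reduce to the same connected graph with the same number of loops, so that the scalar discrepancy is precisely the single factor $\circl$ and nothing more. Each fusion above either joins two components along a single edge (creating no loop) or closes a bubble (creating exactly one loop), so the count is robust. As a sanity check, the algebra $\mathcal W$ of Example \ref{ex:concreteW} gives $\lolli = 2\ket 0$, and a direct computation indeed yields $\dcircl = 0$ there, consistent with $D = 2 \neq 1$.
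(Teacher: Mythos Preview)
Your proof is correct and follows essentially the same route as the paper: both close off the copy law of Proposition~\ref{prop:acfa-copy} to obtain the scalar identity $\circl\cdot\dcircl=\dcircl$, i.e.\ $Dk=k$, and conclude. Your write-up is more explicit about the loop bookkeeping (and the sanity check against $\mathcal W$ is a nice touch), but the underlying argument is the same---note in particular that your two-step closure $\counit\circ\mult$ is exactly the cup $S^2_0$, so capping in one stroke is the natural phrasing.
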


\begin{proof}
	Let $\dcircl = k$. For $D = \textrm{dim}(H) = \circl$, we can show by Prop \ref{prop:acfa-copy} that $D k = k$.
	
	\begin{center}
\beginpgfgraphicnamed{double_loop_zero_pf}
\InputIfFileExists{double_loop_zero_pf.tikz}{}{\input{./figures/double_loop_zero_pf.tikz}}
\endpgfgraphicnamed
	\end{center}
Therefore, either $k$ is zero or $D$ is 1.
\qed
\end{proof}

From hence forth, we shall assume that $H$ is of dimension $\geq 2$, so $\dcircl$ must be zero.

\begin{theorem}\label{thm:acfa-spider}
	For an ACFA $\mathcal A$, any connected $\mathcal A$-graph is equal to one of the following:
\begin{center}
\beginpgfgraphicnamed{acfa_nf}
\InputIfFileExists{acfa_nf.tikz}{}{\input{./figures/acfa_nf.tikz}}
\endpgfgraphicnamed
\end{center}
\end{theorem}
\begin{proof}
Suppose the $\mathcal A$-graph contains more than one loop. It must then be zero.

\begin{center}
\beginpgfgraphicnamed{contains_two_loops_zero}
\InputIfFileExists{contains_two_loops_zero.tikz}{}{\input{./figures/contains_two_loops_zero.tikz}}
\endpgfgraphicnamed
\end{center}

If an $A$ graph has zero loops, it is automatically of the form of (iii.), so we shall consider only those graphs containing exactly one loop.

If the graph has zero inputs, zero outputs and one loop, it must be equal to $\circl$. Suppose it has zero inputs, at least one output, and exactly one loop. Then it must be of this form:
	\begin{center}
		\begin{tikzpicture}[dotpic]
			\node [dot] (a) at (0,1) {};
			\node [dot] (b) at (0,0) {};
			\node [bn] (0) at (-1,-1) {};
			\node at (0,-0.6) {\small ...};
			\node [bn] (1) at (1,-1) {};
			\draw (a) -- (b) -- (0) (b) -- (1);
			\draw (a) to [uploop] ();
		\end{tikzpicture}
	\end{center}
By Prop \ref{prop:acfa-copy} this can be written as:
	
	\begin{center}
		$\icircl$\,\,...\,\,$\icircl$\ $\lolli$\ ... $\lolli$
	\end{center}

The case of at least one input, zero outputs, and one loop is treated similarly. It remains only to consider the case of one or more inputs and outputs, and one loop:

\begin{center}
\beginpgfgraphicnamed{acfa-nf-proof-one-loop} % (fold)
	\begin{tikzpicture}[dotpic,yshift=-1.5cm]
		\node at (0,3.5) {\small ...};
		\node [dot] (0) at (0,3) {};
		\node [dot] (1) at (0,2) {};
		\node [dot] (2) at (0,1) {};
		\node [dot] (5) at (0,0) {};
		\node at (0,-0.5) {\small ...};

		\draw (-0.7,3.7)--(0)--(0.7,3.7) (-0.7,-0.7)--(5)--(0.7,-0.7)
			(0)--(1) (2)--(5);
		\draw [bend left] (1) to (2) (2) to (1);
	\end{tikzpicture} =
	\ \icircl
	\begin{tikzpicture}[dotpic,yshift=-1.4cm]
		\node at (0,3.5) {\small ...};
		\node [dot] (0) at (0,3) {};
		\node [dot] (1) at (0,2) {};
		\draw (1) to [downloop] ();
		\draw (-0.7,3.7)--(0)--(0.7,3.7) (0)--(1);
		
		\begin{scope}[yshift=-2mm]
			\node [dot] (2) at (0,1) {};
			\node [dot] (5) at (0,0) {};
			\node at (0,-0.5) {\small ...};
			\draw (2) to [uploop] ();
			\draw (2)--(5) (-0.7,-0.7)--(5)--(0.7,-0.7);
		\end{scope}
	\end{tikzpicture} =
	\ \icircl
		\begin{tikzpicture}[dotpic,yshift=-2.5cm]
			\node at (0,3.5) {\small ...};
			\node [dot] (0) at (0,3) {};
			\node [dot] (1) at (0,2) {};
			\draw (1) to [downloop] ();
			\draw (-0.7,3.7)--(0)--(0.7,3.7) (0)--(1);

			\begin{scope}[yshift=2cm,xshift=1.5cm]
				\node [dot] (2) at (0,1) {};
				\node [dot] (5) at (0,0) {};
				\node at (0,-0.5) {\small ...};
				\draw (2) to [uploop] ();
				\draw (2)--(5) (-0.7,-0.7)--(5)--(0.7,-0.7);
			\end{scope}
		\end{tikzpicture}
\endpgfgraphicnamed % {acfa-nf-proof-one-loop} (end)
\end{center}

	This is then a tensor product of the two previous cases, so it can be written in the form of (iv.).
\qed
\end{proof}

%%%%%%%%%%%%%%%%%%%
%%%%%%%%%%%%%%%%%%%
\section{GHZ and W states as commutative Frobenius algebras}\label{sec:classification_states}
%\section{Classification of commutative Frobenius algebras on $\Q$}\label{sec:classification}
%%%%%%%%%%%%%%%%%%%
%%%%%%%%%%%%%%%%%%%

We have already seen that GHZ states and W states are both Frobenius states, so they each induce unique commutative Frobenius algebras. Furthermore, for CFAs on $\Q$, the conditions of  specialness and anti-specialness are actually enough to \emph{identify} GHZ and W states up the SLOCC-equivalence. The results to follow will be greatly assisted by the following lemma.

\begin{lemma}[Mathonet et al.~\cite{Mathonet2009}]\label{thm:slocc-uniform}
	If $\ket\Psi$ and $\ket\Phi$ are symmetric $N$-qubit states such that $\ket\Psi$ and $\ket\Phi$ are SLOCC-equivalent, then there exists an invertible linear map $L:\mathbb{C}^2\to \mathbb{C}^2$ such that $\ket\Psi = L^{\otimes N} \ket\Phi$.
\end{lemma}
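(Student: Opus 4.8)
The plan is to pass, via the DVC characterisation of SLOCC-equivalence (Theorem~\ref{thm:slocc-ilo}), to the \emph{Majorana representation} of symmetric qubit states, and to recast the statement as a rigidity property of point configurations on $\mathbb{CP}^1$. By Theorem~\ref{thm:slocc-ilo} there are invertible $L_1,\dots,L_N:\mathbb{C}^2\to\mathbb{C}^2$ with $\ket\Psi=(L_1\otimes\cdots\otimes L_N)\ket\Phi$, and both states may be assumed nonzero. A key subtlety to keep in mind is that we only need to \emph{produce} a single invertible $L$ with $\ket\Psi=L^{\otimes N}\ket\Phi$; the given $L_i$ themselves need \emph{not} be proportional. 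Indeed $\ketGHZ$ is fixed by $\mathrm{diag}(\lambda,\lambda^{-1})\otimes\mathrm{diag}(\lambda^{-1},\lambda)\otimes 1^{\otimes(N-2)}$ for every $\lambda$, so a symmetric state can be carried to itself by manifestly non-uniform local maps; the content of the lemma is existence of a uniform representative, not uniformity of the $L_i$.

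The tool is the Majorana representation: a nonzero symmetric $N$-qubit state is, up to a scalar, the full symmetrisation $\mathrm{Sym}(\ket{\phi_1},\dots,\ket{\phi_N})$ of single-qubit states, and the multiset of points $[\phi_1],\dots,[\phi_N]\in\mathbb{CP}^1$ is uniquely determined, being recoverable intrinsically as the root set of an associated degree-$N$ binary form of $\ket\Phi$. With this in hand I would reformulate the goal: it suffices to show that the Majorana multisets of $\ket\Phi$ and $\ket\Psi$ are related by a \emph{single} M\"obius transformation $g\in PGL(2,\mathbb{C})$. This is because the easy direction lifts for free: choosing any $G\in GL(2,\mathbb{C})$ inducing $g$, we get $G^{\otimes N}\ket\Phi=\mathrm{Sym}(G\ket{\phi_1},\dots,G\ket{\phi_N})\propto\ket\Psi$, and absorbing the scalar and an $N$-th root into $G$ yields the required $L$ with $\ket\Psi=L^{\otimes N}\ket\Phi$.

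The heart of the argument is then the converse extraction: from $\ket\Psi=(\bigotimes_iL_i)\ket\Phi$ deduce that the two configurations are M\"obius-related. Here I would proceed by a case analysis on the \emph{diversity} $d$, the number of distinct Majorana points of $\ket\Phi$. For $d=1$ the state is a product $\ket\phi^{\otimes N}$ and symmetry of $\ket\Psi$ immediately forces all $L_i\ket\phi$ to be proportional to one vector, which determines $L$; for $d=2$ the state is (locally) a Dicke state with two points of fixed multiplicities, handled by a direct computation using the symmetry trick $\ket\Phi=(A\otimes A^{-1}\otimes 1^{\otimes(N-2)})\ket\Phi$ with $A=L_i^{-1}L_j$, obtained by commuting a transposition through the $L_i$. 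For $d\ge 3$ one invokes the rigidity of M\"obius transformations (three distinct points determine an element of $PGL(2,\mathbb{C})$): the presence of at least three distinct points rigidifies the configuration enough that the alignment of $\ket\Phi$'s points with $\ket\Psi$'s points is realised by a unique such $g$, which is then lifted as above.

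I expect the principal obstacle to lie in this converse, and specifically in the fact that $L_1\otimes\cdots\otimes L_N$ does \emph{not} preserve the symmetric subspace: the clean binary-form structure is destroyed by non-uniform local maps, since $\langle v^{\otimes N}\mid\Psi\rangle$ becomes a permanent of the linear forms $v\mapsto\langle v\mid L_i\phi_j\rangle$ rather than an $N$-th power, so the Majorana points of $\ket\Psi$ cannot be read off term-by-term. Reconciling this with the rigidity of the point configuration---and carefully tracking multiplicities in the low-diversity cases $d\in\{1,2\}$, where the three-point rigidity is unavailable---is where the substance of the proof resides, following Mathonet et al.~\cite{Mathonet2009}.
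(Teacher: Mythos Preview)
The paper does not supply a proof of this lemma: it is stated with attribution to Mathonet et al.~\cite{Mathonet2009} and then invoked as a black box in the proofs of Theorems~\ref{thm:scfa-ghz} and~\ref{thm:acfa-w}. There is therefore no in-paper proof to compare your proposal against.

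That said, your outline is essentially the strategy of the cited source. Mathonet et al.\ work via the Majorana (or Dicke) representation of symmetric $N$-qubit states, reduce SLOCC-equivalence to the action of $PGL(2,\mathbb{C})$ on the Majorana root configuration in $\mathbb{CP}^1$, and then argue case by case on the number of distinct roots, using three-point rigidity of M\"obius transformations in the generic case and direct analysis for the degenerate ones. Your identification of the main obstacle---that a non-uniform $L_1\otimes\cdots\otimes L_N$ does not preserve the symmetric subspace, so the Majorana roots of $\ket\Psi$ are not \emph{a priori} the images of those of $\ket\Phi$---is exactly the point where the real work lies, and your sketch correctly flags but does not fully discharge it. If you intend to include a self-contained proof rather than a citation, that step (particularly the $d=2$ case and the passage from the permanent expression back to a single M\"obius map for $d\ge 3$) would need to be written out in full.
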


\begin{theorem}[GHZ states are SCFAs]\label{thm:scfa-ghz}
For any special commutative Frobenius algebra on $\Q$, the induced Frobenius state is SLOCC-equivalent to $\ketGHZ$.  Furthermore, for any tripartite state $\ket\Psi$ that is SLOCC-equivalent to $\ketGHZ$, there exists $\bra\Phi$, $\bra\xi$ such that $(\ket\Psi, \bra\phi, \bra\xi)$ is a special commutative Frobenius algebra.
\end{theorem}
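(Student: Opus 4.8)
The plan is to prove the two implications separately, in both cases reducing to Theorem~\ref{thm:slocc-ilo} (SLOCC-equivalence is witnessed by local invertible maps), and in the second direction also invoking Lemma~\ref{thm:slocc-uniform}, which upgrades an equivalence between symmetric states to one witnessed by a \emph{single} local map $L$. The unifying idea is that in both directions the relevant tripartite state is a ``$\ketGHZ$ for some basis,'' i.e.\ of the separable form $\ket{aaa}+\ket{bbb}$, which is manifestly $L^{\otimes 3}\ketGHZ$ for the invertible $L$ sending the computational basis to $\{\ket a,\ket b\}$.

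For the forward direction I would start from a special commutative Frobenius algebra $\mathcal S$ on $\Q$ and analyse its induced Frobenius state, which by Theorem~\ref{thm:frob-state-from-cfa} is the three-output spider $S^0_3$. The key structural input is that speciality forces the comultiplication to \emph{copy} a basis: the copyable vectors $\ket c$ satisfying $\comult\,\ket c=\ket c\otimes\ket c$ form a basis of $\Q$, so there are exactly two linearly independent ones, say $\ket a,\ket b$, with $\unit=\ket a+\ket b$. Evaluating the spider (Theorem~\ref{thm:scfa-spider}) on this copyable structure, the induced state takes the form $\ket{aaa}+\ket{bbb}$. Since $\{\ket a,\ket b\}$ is a basis, the map $L:\ket 0\mapsto\ket a,\ \ket 1\mapsto\ket b$ is invertible and $\ket{aaa}+\ket{bbb}=L^{\otimes 3}\ketGHZ$, so Theorem~\ref{thm:slocc-ilo} yields SLOCC-equivalence with $\ketGHZ$.

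For the converse I take $\ket\Psi$ SLOCC-equivalent to $\ketGHZ$. As a Frobenius state is symmetric and $\ketGHZ$ is symmetric, Lemma~\ref{thm:slocc-uniform} provides a single invertible $L$ with $\ket\Psi=L^{\otimes 3}\ketGHZ$. I then transport the standard GHZ algebra $\mathcal G$ of Example~\ref{ex:concreteGHZ} along $L$, setting $\mu'=L\,\mu\,(L^{-1}\otimes L^{-1})$ and $\delta'=(L\otimes L)\,\delta\,L^{-1}$, with the unit and counit conjugated accordingly. A short check confirms that every CFA axiom is stable under this local conjugation, and speciality in particular survives because $\mu'\delta'=L\,\mu\,\delta\,L^{-1}=L L^{-1}=1$. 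The induced Frobenius state of the transported algebra $\mathcal G'$ is exactly $L^{\otimes 3}\ketGHZ=\ket\Psi$, and reading off its two associated effects as the images of $\bra{\textit{Bell}\,}$ and $\bra\xi$ under $L^{-1}$ supplies the required $\bra\Phi,\bra\xi$.

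The main obstacle is the structural claim used in the forward direction: that a two-dimensional special CFA possesses a basis of copyable vectors, equivalently that its induced state has the separable form $\ket{aaa}+\ket{bbb}$. This is essentially the copyable-basis content of the correspondence between special commutative Frobenius algebras and orthonormal bases; I expect to establish the weaker spanning statement directly, by arguing that speciality together with the Frobenius identity makes the induced commutative algebra separable, so that over $\mathbb C$ it splits as $\mathbb C\oplus\mathbb C$ with its two idempotents furnishing $\ket a,\ket b$. Everything else---the conjugation bookkeeping in the converse and the appeals to Theorems~\ref{thm:slocc-ilo} and~\ref{thm:scfa-spider} and Lemma~\ref{thm:slocc-uniform}---is routine.
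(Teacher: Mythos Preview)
Your proposal is correct and follows essentially the same route as the paper. In both directions the argument is: exhibit an invertible $L$ relating the given structure to the standard GHZ algebra $\mathcal G$, then transport. The paper obtains the copyable-basis fact in the forward direction by citing \cite{CPV}, whereas you propose deriving it from separability of special Frobenius algebras (so that the commutative algebra splits as $\mathbb C\oplus\mathbb C$); this is a valid and self-contained alternative. The converse direction, including the appeal to Lemma~\ref{thm:slocc-uniform} and the conjugation-by-$L$ transport of $\mathcal G$, matches the paper exactly. One cosmetic point: your justification ``as a Frobenius state is symmetric'' for assuming $\ket\Psi$ symmetric is slightly circular in phrasing---you should simply note (as the paper tacitly does) that the conclusion only makes sense for symmetric $\ket\Psi$, so symmetry is part of the hypothesis.
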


\begin{proof}
($\Rightarrow$) Let $\mathcal S = (\Q, \mult, \unit, \comult, \counit)$ be an SCFA. Consider the Frobenius state associated with the GHZ state.
\begin{center}
    $\ketGHZ = \ket{000}+\ket{111}$
    \qquad
    $\bra{\Phi} = \bra{00} + \bra{11} $
    \qquad
    $\bra{\xi} = \sqrt{2} \bra{+}$
\end{center}

This defines the following CFA, which we have already seen to be an SCFA.	
		\begin{center}
			$\whitecomult :: \ket{0} \mapsto \ket{00}, \ket{1} \mapsto \ket{11}$;\qquad
			$\whitecounit := \sqrt{2} \bra{+}$;\qquad
			$\whitemult := \left( \whitecomult \right)^\dagger$;\qquad
			$\whiteunit := \left( \whitecounit \right)^\dagger$
		\end{center}

We know from \cite{CPV} that an SCFA of dimension 2 is unique determined by the two linearly independent vectors copied by its comultiplication. Let $\ket u, \ket v$ be those vectors for $\comult$. Define an invertible map $L :: \ket 0 \mapsto \ket u, \ket 1 \mapsto \ket v$. We define $\comult$ in terms of $L$ and $\whitecomult$:
	
	\[
	\comult =
	\begin{tikzpicture}[dotpic]
		\node [dot,fill=white] (0) at (0,0) {};
		\node [draw=black] (L1) at (0,1) {\footnotesize $L^{-1}$};
		\node [draw=black] (L2) at (-1,-1) {\footnotesize $L$};
		\node [draw=black] (L3) at (1,-1) {\footnotesize $L$};
		\draw (0,2)--(L1)--(0) (L2)--(-1,-2) (L3)--(1,-2);
		\draw [bend right] (0) to (L2);
		\draw [bend left] (0) to (L3);
	\end{tikzpicture}\ \  ::
	\begin{cases}
	    \ket{u} \mapsto \ket{uu} \\
	    \ket{v} \mapsto \ket{vv}
	\end{cases}
	\]
	This induces $\unit$ as follows.
	
	\[\unit =
	\begin{tikzpicture}[dotpic]
		\node [dot] (0) at (0,0) {};
		\node [bn] (b2) at (-1,-1) {};

		\draw (0) to (b2);
		\draw (0) to [out=90,in=-90]
		    (0,1) to [out=90,in=90] 
		    (2,1) to [out=-90,in=90] 
	     (2,-1) to [out=-90,in=-90] 
			 (1,-1) to [out=90,in=-45] (0);
	\end{tikzpicture} =
	\begin{tikzpicture}[dotpic]
		\node [dot,fill=white] (0) at (0,0) {};
		\node [draw=black] (L1) at (0,1) {\footnotesize $L^{-1}$};
		\node [draw=black] (L2) at (-1,-1) {\footnotesize $L$};
		\node [draw=black] (L3) at (1,-1) {\footnotesize $L$};
		\draw (L1)--(0) (L2)--(-1,-2);
		\draw [bend right] (0) to (L2);
		\draw [bend left] (0) to (L3);
		\draw (L1) to [out=90,in=90] (2,1.6) to [out=-90,in=90] (2,-1.6) to [out=-90,in=-90] (L3);
	\end{tikzpicture} =
	\begin{tikzpicture}[dotpic]
		\node [dot,fill=white] (0) at (0,0) {};
		\node [draw=black] (L1) at (-1,-1) {\footnotesize $L$};
		
		\draw [bend right] (0) to (L1);
		\draw (L1) -- (-1,-2);
		\draw (0) to [out=90,in=-90]
		    (0,1) to [out=90,in=90] 
		    (2,1) to [out=-90,in=90] 
	     (2,-1) to [out=-90,in=-90] 
			 (1,-1) to [out=90,in=-45] (0);
	\end{tikzpicture} =
	\begin{tikzpicture}[dotpic]
		\node [dot,fill=white] (0) at (0,1) {};
		\node [draw=black] (L1) at (0,0) {\footnotesize $L$};
		\draw (0)--(L1)--(0,-1);
	\end{tikzpicture}
	\]
	It then follows that
	
	\[
	  \begin{tikzpicture}[dotpic,yscale=-1,yshift=5mm]
			\node [dot] (0) at (0,-1) {};
			\node [dot] (1) at (1,0) {};
			\node [bn] (b1) at (-2,1) {};
			\node [bn] (b2) at (0,1) {};
			\node [bn] (b3) at (2,1) {};
			\node [dot] (b4) at (0,-2) {};
			\draw (b1)--(0)--(b4) (b2)--(1)--(b3) (0)--(1);
		\end{tikzpicture} =
		\begin{tikzpicture}[dotpic,yscale=-1,yshift=5mm]
			\node [dot,fill=white] (0) at (0,-1) {};
			\node [dot,fill=white] (1) at (1,0) {};
			\node [draw=black] (b1) at (-2,1) {\footnotesize $L$};
			\node [draw=black] (b2) at (0,1) {\footnotesize $L$};
			\node [draw=black] (b3) at (2,1) {\footnotesize $L$};
			\node [dot,fill=white] (b4) at (0,-2) {};
			\draw (-2,2)--(b1)--(0)--(b4) (0,2)--(b2)--(1)--(b3)--(2,2) (0)--(1);
		\end{tikzpicture} = (L \otimes L \otimes L) \ketGHZ
	\]
	
\par\bigskip\noindent	($\Leftarrow$) In the other direction, we start with a symmetric state $\ket\Psi$ that is SLOCC equivalent to $\ketGHZ$. By Lemma \ref{thm:slocc-uniform}, we know $\ket\Psi$ must be of the following form, for $L$ invertible.

\begin{equation}\label{eqn:slocc-ghz-form}
	\ket\Psi := (L \otimes L \otimes L)\ketGHZ
\end{equation}
Define the following two effects.
\begin{center}
   $\bra\xi := \sqrt{2} \bra{+} L^{-1}$
   \qquad
   $\bra\Phi = (\bra{00}+\bra{11})(L^{-1} \otimes L^{-1})$
\end{center}
Then the triple $(\ket\Psi, \bra\Phi, \bra\xi)$ is a Frobenius state. Using Thm \ref{thm:frob-state-cfa}, this corresponds to the following commutative Frobenius algebra.
	\begin{eqnarray*}
		\whitecomult & = & (L \otimes L)(\ket{00}\bra{0} + \ket{11}\bra{1})L^{-1} \\
		\whitecounit\, & = & (\sqrt{2} \bra{+}) L^{-1} \\
		\whitemult & = & L(\ket{0}\bra{00} + \ket{1}\bra{11})(L^{-1} \otimes L^{-1}) \\
		\whiteunit\, & = & L (\sqrt{2}\; \ket{+})
	\end{eqnarray*}
This set of generators does indeed obey the axioms of an SCFA.
\qed
\end{proof}

\begin{theorem}[W states are ACFAs]\label{thm:acfa-w}
For any anti-special commutative Frobenius algebra on $\Q$, the induced Frobenius state is SLOCC-equivalent to $\ketW$.  Furthermore, for any tripartite state $\ket\Psi$ that is SLOCC-equivalent to $\ketW$, there exists $\bra\Phi$, $\bra\xi$ such that $(\ket\Psi, \bra\phi, \bra\xi)$ is a anti-special commutative Frobenius algebra.
\end{theorem}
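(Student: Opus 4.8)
The plan is to mirror, step for step, the proof of Theorem~\ref{thm:scfa-ghz}, using the W-ACFA of Example~\ref{ex:concreteW} (with $\mult = \ket1\bra{11}+\ket0\bra{01}+\ket0\bra{10}$, $\unit=\ket1$, $\comult=\ket{00}\bra0+\ket{01}\bra1+\ket{10}\bra1$, $\counit=\bra0$) and its Frobenius state $\ketW$ as the canonical reference point. The two directions are then handled by the same machinery as for the GHZ case: Lemma~\ref{thm:slocc-uniform} to reduce SLOCC-equivalence of symmetric states to a single local map $L^{\otimes 3}$, and Theorem~\ref{thm:frob-state-cfa} to pass back and forth between Frobenius states and commutative Frobenius algebras.

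For the easy direction \emph{($\Leftarrow$)}, I would start from a symmetric $\ket\Psi$ that is SLOCC-equivalent to $\ketW$, so by Lemma~\ref{thm:slocc-uniform} we may write $\ket\Psi = (L\otimes L\otimes L)\ketW$ for some invertible $L:\Q\to\Q$. Define the two effects $\bra\xi := \bra0 L^{-1}$ and $\bra\Phi := (\bra{01}+\bra{10})(L^{-1}\otimes L^{-1})$, exactly paralleling the GHZ construction. Theorem~\ref{thm:frob-state-cfa} then yields a CFA whose operations are the conjugates $\comult = (L\otimes L)(\ket{00}\bra0+\ket{01}\bra1+\ket{10}\bra1)L^{-1}$, $\mult = L(\ket1\bra{11}+\ket0\bra{01}+\ket0\bra{10})(L^{-1}\otimes L^{-1})$, $\unit = L\ket1$, $\counit = \bra0 L^{-1}$. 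It remains only to check anti-speciality, and this is immediate: the anti-special law of Def~\ref{def:acfa} equates the loop $\mult\comult$ with (a multiple of) $\cololli$ below $\lolli$, and under conjugation $\mult\comult \mapsto L(\mult\comult)L^{-1}$ while $\lolli\mapsto L\lolli$, $\cololli\mapsto\cololli L^{-1}$ and the scalar $\icircl=1/D$ is unchanged; hence the identity holding for $\mathcal W$ transports verbatim to the conjugated algebra.

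For the substantive direction \emph{($\Rightarrow$)}, let $\mathcal A=(\Q,\mult,\unit,\comult,\counit)$ be an ACFA. Since $\dim H\ge 2$ we have $\dcircl=0$ (Prop~\ref{pro:looploop-zero}), and Prop~\ref{prop:acfa-copy} supplies a distinguished vector $\ket u := \lolli$ that is copied by the comultiplication, $\comult\ket u = \ket{uu}$ (up to the scalar $\circl$); I would first argue that $\ket u\neq 0$, using that $\comult$ is a split monomorphism with left inverse $(\counit\otimes 1)$ together with the nondegeneracy of the Frobenius pairing. Choosing any $\ket v$ independent from $\ket u$, I would then use cocommutativity, the counit law, and the anti-special law to pin the form of $\comult\ket v$: cocommutativity forces it symmetric, and anti-speciality (which prevents loops from collapsing and disconnects them instead) rules out a $\ket{vv}$ component and fixes the cross terms, so that after rescaling $\ket v$ one obtains $\comult\ket v = \ket{uv}+\ket{vu}$ — precisely the shape of $\delta_{\mathcal W}$ on $\ket1$. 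Defining the invertible $L:\ket0\mapsto\ket u,\ \ket1\mapsto\ket v$ then exhibits $\comult = (L\otimes L)\,\delta_{\mathcal W}\,L^{-1}$, and consequently the induced Frobenius state $S^0_3 = (L\otimes L\otimes L)\ketW$, giving $S^0_3 \sim \ketW$ as required.

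The main obstacle is exactly this normal-form extraction in the $(\Rightarrow)$ direction: proving, from the anti-special axiom alone, that a two-dimensional ACFA has a comultiplication of the ``one copied point plus one primitive'' type in a suitable basis, rather than, say, a second group-like vector or a residual $\ket{vv}$ term. This is where Theorem~\ref{thm:acfa-spider} and Prop~\ref{prop:acfa-copy} do the real work, and careful bookkeeping of the scalars $\circl$ and $\icircl$ (absent in the special case, but essential here) will be needed to land the coefficients on the nose. Once the comultiplication is normalised, the remaining passage to $\unit$, $\mult$, $\counit$ and to the tripartite state is purely formal, exactly as in the GHZ computation.
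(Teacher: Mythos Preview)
Your $(\Leftarrow)$ direction matches the paper's proof essentially line for line.

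For $(\Rightarrow)$ your route genuinely diverges. You aim to exhibit a \emph{single} change of basis $L$ with $\comult=(L\otimes L)\,\delta_{\mathcal W}\,L^{-1}$, i.e.\ an algebra isomorphism $\mathcal A\cong\mathcal W$, from which $S^0_3=(L^{\otimes 3})\ketW$ follows at once. The paper does not attempt this: it works in the orthonormal basis $\{\ket t,\ket{t^\perp}\}$ with $\ket t\propto\lolli$, extracts the matrix of $\delta$ by a partial-trace computation, and finishes by producing \emph{three different} local maps $L_1,L_2,L_3$ sending the induced state to $\ketW$. Your approach, if completed, is tidier and proves the stronger statement that later reappears as Corollary~\ref{cor:cfa-local-equiv}; the paper's argument is more bare-handed and stops at SLOCC-equivalence.

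There is, however, a gap in your key step. The claim that ``anti-speciality rules out a $\ket{vv}$ component'' is not correct as stated: the anti-special axiom constrains $\mult\comult$, and you cannot read off a constraint on $\comult$ alone without first pinning down $\mult$ (equivalently the Frobenius form), which you have not done at that point. What actually kills the $\ket{vv}$ coefficient in the paper's argument is the very \emph{definition} of $\ket u$ as the partial trace of $\comult$: choosing $\ket v=\ket{u^\perp}$ and tracing forces the $\ket{u^\perp u^\perp}$-component of $\comult\ket{u^\perp}$ to vanish. If you prefer to stay basis-agnostic, you can instead impose cocommutativity and coassociativity on your ansatz $\comult\ket v=\beta\ket{uu}+\gamma(\ket{uv}+\ket{vu})+\epsilon\ket{vv}$; the coassociativity constraint on $\comult\ket v$ (compared against $\comult\ket u\propto\ket{uu}$) forces $\epsilon=0$ and equalises the remaining coefficients, after which a shift $\ket v\mapsto\ket v+\lambda\ket u$ removes $\beta$. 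Neither of these is ``anti-speciality'' in the sense of Definition~\ref{def:acfa}; Prop~\ref{prop:acfa-copy} is only used to secure $\comult\ket u\propto\ket{uu}$, and Thm~\ref{thm:acfa-spider} does not bear on $\comult\ket v$ since $\ket v$ is not an $\mathcal A$-graph. Finally, your argument that $\lolli\neq 0$ via the counit splitting is unnecessary: simply observe that $\counit\circ\lolli$ is the closed one-loop $\mathcal A$-graph, hence equals $\circl=\dim H\neq 0$.
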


\begin{proof}

\par\bigskip\noindent ($\Rightarrow$) Let	$\mathcal A = (\Q, \mult, \unit, \comult, \counit)$ be an ACFA. Since $\comult$ is left and right unital, it is not separable i.e. it cannot be expressed as one of these forms: $\ket{A}\bra{b}$, $\ket{a} \otimes L$, $L \otimes \ket{a}$. 	Note that  $\lolli = \textrm{Tr}_{\Q}(\comult)$, the result of tracing the input to the right leg. Assume without loss of generality that $\lolli$ is normal, which can be achieved by rescaling $\comult$ with some scalar $\lambda$. To avoid confusion we denote these recalled variants of $\comult$ and $\lolli$ as $\delta$ and $\ket t$. Let $B:= \left\{ \ket t, \ket{t^\perp} \right\}$ be an orthonormal basis for $\mathbb C^2$. By Prop \ref{prop:acfa-copy} we have $\delta \ket t = a \ket{tt}$. So, for some $\ket\Psi$:
	\[ \delta = a \ket{tt}\bra{t} + \ket{\Psi}\bra{t^\perp} \] 
Take the right partial trace of both sides:
	\[ \ket t = a \ket t + \textrm{Tr}_{\Q}(\ket\Psi \bra{t^\perp}) \]
So, $\textrm{Tr}_{\Q}(\ket\Psi \bra{t^\perp}) = (1-a)\ket t$. We can express $\ket\Psi$ in the basis $B$:
	\[ \ket\Psi = \ket{ut} + \ket{vt^\perp} \]
and
	\[ (1-a)\ket t = \textrm{Tr}_{\Q}(\ket\Psi \bra{t^\perp}) = \ket{v} \]
Now, letting $\ket u = b\ket t + c\ket{t^\perp}$,
	\[ \ket\Psi = b \ket{tt} + c \ket{t^\perp t} + (1-a)\ket{tt^\perp} \]
Plugging in to $\delta$, letting $d = (1-a)$:
	\[ \delta = a \ket{tt}\bra{t} + \left( b\ket{tt} + c \ket{t^\perp t} + d \ket{t t^\perp} \right) \bra{t^\perp} \]
$d \neq 0$, otherwise $\delta$ is separable. Let $\ket s = \frac{1}{d}(b \ket{t} + c\ket{t^\perp})$: 
	\[ \delta = a \ket{tt}\bra{t} + \left( d \ket{st} + d \ket{t t^\perp} \right) \bra{t^\perp} \]
Note that $\ket s \neq k \ket t$, otherwise $\delta$ is separable. Also, by definition of Frobenius algebra, $\delta \eta$ is non-degenerate (i.e. entangled). So, choose non-proportional $\ket{s'}, \ket{t'}$ such that:
	\[ (\bra t \otimes 1)\delta \eta = \frac{1}{a} \ket{s'}\ \ \ \ \ \ \ (\bra{t^\perp} \otimes 1)\delta \eta = \frac{1}{d} \ket{t'}\]
Then, the state associated with $(\delta, \eta)$ is:
	\[
	  \begin{tikzpicture}[dotpic,xscale=-1,yscale=-1,yshift=5mm]
			\node [dot] (0) at (0,-1) {};
			\node [dot] (1) at (1,0) {};
			\node [bn] (b1) at (-2,1) {};
			\node [bn] (b2) at (0,1) {};
			\node [bn] (b3) at (2,1) {};
			\node [dot] (b4) at (0,-2) {};
			\draw (b1)--(0)--(b4) (b2)--(1)--(b3) (0)--(1);
		\end{tikzpicture} =\lambda 
	(\delta \otimes 1) \delta \eta = \lambda(\ket{tts'} + \ket{stt'} + \ket{tt^\perp t'}) \]
Now, define the following local maps:
	\[ L_1 ::  \ket t \mapsto \ket 0, \ket s \mapsto \ket 1  \]
	\[ L_2 ::  \ket t \mapsto \ket 0, \ket{t^\perp} \mapsto \ket 1  \]
	\[ L_3 ::  \ket{t'} \mapsto \ket 0, \ket{s'} \mapsto \ket 1 \]
These are all invertible because they take bases to bases. Then
	\[ (L_1 \otimes L_2 \otimes L_3)(\delta \otimes 1) \delta \eta = \ketW \]

\par\bigskip\noindent ($\Leftarrow$) For the converse, we mirror the construction from the proof of the SCFA-GHZ case. The only difference is the choice of cup and counit. Start with a symmetric state $\ket\Psi$ that is SLOCC equivalent to $\ketW$. By Lemma \ref{thm:slocc-uniform}, we know $\ket\Psi$ must be of the following form, for $L$ invertible.
\begin{equation}\label{eqn:slocc-w-form}
	\ket\Psi := (L \otimes L \otimes L)\ketW
\end{equation}

Define the following two effects.
\begin{center}
   $\bra\xi := \bra{0} L^{-1}$
   \qquad
   $\bra\Phi = (\bra{01}+\bra{10})(L^{-1} \otimes L^{-1})$
\end{center}
Then the triple $(\ket\Psi, \bra\Phi, \bra\xi)$ is a Frobenius state. Using Thm \ref{thm:frob-state-cfa}, we construct the CFA.
\begin{eqnarray*}
	\comult & := & (L \otimes L)(\ket{10}\bra{1} + \ket{01}\bra{1} + \ket{00}\bra{0})L^-1 \\
	\counit\, & := & \bra{0} L^{-1} \\
	\mult & := & L(\ket{1}\bra{11} + \ket{0}\bra{01} + \ket{0}\bra{10})(L^{-1} \otimes L^{-1}) \\
	\unit\, & := & L \ket{1}
\end{eqnarray*}
It is then easy to check that this is an ACFA.
\qed
\end{proof}

\begin{corollary}
Any SLOCC-maximal tripartite qubit state is SLOCC-equivalent to a Frobenius state.
\end{corollary}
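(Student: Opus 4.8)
The plan is to reduce the statement to the two preceding theorems by way of the SLOCC classification of tripartite qubit states. The essential observation is that the corollary asks only for SLOCC-\emph{equivalence} to a Frobenius state, not that the given state itself be a Frobenius state; the latter would be too strong, since Frobenius states are by Def \ref{def:Frobenius-state} symmetric, whereas a generic SLOCC-maximal tripartite state need not be.

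First I would recall from Section \ref{sec:SLOCC} that, for three qubits, the preorder $\preceq$ has exactly two maximal classes, represented by $\ketGHZ$ and $\ketW$ (this is the content of the Hasse diagram there, and follows from \cite{DVC}). Consequently, any SLOCC-maximal tripartite qubit state $\ket\Psi$ must lie in one of these two classes, i.e.\ be SLOCC-equivalent either to $\ketGHZ$ or to $\ketW$. Second, I would note that both class representatives are already Frobenius states: $\ketGHZ$ satisfies the Frobenius-state conditions with $\bra\xi = \sqrt2\,\bra{+}$ and $\bra\Phi = \bra{\textit{Bell}\,}$, and $\ketW$ with $\bra\xi = \bra{0}$ and $\bra\Phi = \bra{\textit{EPR}\,}$; indeed, Theorems \ref{thm:scfa-ghz} and \ref{thm:acfa-w} exhibit these precisely as the Frobenius states inducing an SCFA and an ACFA, respectively. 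Combining the two observations, the given $\ket\Psi$ is SLOCC-equivalent to one of the two Frobenius states $\ketGHZ$ or $\ketW$, which is exactly the assertion.

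This argument is essentially a bookkeeping step once the classification and the two characterisation theorems are in hand, so there is no genuine obstacle to overcome. The only point that warrants care is the distinction flagged above between ``being SLOCC-equivalent to a Frobenius state'' and ``being a Frobenius state'': a non-symmetric maximal state, such as the image of $\ketGHZ$ under distinct local invertible maps $L_1 \otimes L_2 \otimes L_3$ (still maximal by Theorem \ref{thm:slocc-ilo}), will typically fail the symmetry hypothesis of Def \ref{def:Frobenius-state} and hence not itself be a Frobenius state, even though it remains SLOCC-equivalent to one.
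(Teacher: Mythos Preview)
Your argument is correct and matches the paper's intended reasoning: the corollary is stated without proof, precisely because it follows immediately from the D\"ur--Vidal--Cirac classification recalled in Section~\ref{sec:SLOCC} together with the fact (established in Section~\ref{sec:frobenius-states} and again in Theorems~\ref{thm:scfa-ghz} and~\ref{thm:acfa-w}) that $\ketGHZ$ and $\ketW$ are themselves Frobenius states. Your closing remark distinguishing ``SLOCC-equivalent to a Frobenius state'' from ``is a Frobenius state'' is a worthwhile clarification that the paper leaves implicit.
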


%%%%%%%%%%%%%%%%%%%
%%%%%%%%%%%%%%%%%%%
%\subsection{GHZ and W states as commutative Frobenius algebras}\label{sec:w-ghz-cfa}
\section{Classification of commutative Frobenius algebras on $\Q$}\label{sec:classification_CFA}
%%%%%%%%%%%%%%%%%%%
%%%%%%%%%%%%%%%%%%%

We know from \cite{DVC} that $\ketGHZ$ and $\ketW$ are the \emph{only} genuine tripartite qubit states, up to SLOCC-equivalence. Thus, the result above offers an exhaustive classification of CFA's on $\Q$, up to local maps.

\begin{corollary}\label{cor:cfa-local-equiv}
For any CFA on $\Q$ there exists an invertible linear map $L:\mathbb{C}^2\to \mathbb{C}^2$ such that the following maps define a CFA that is either special or anti-special.
\begin{equation}\label{eq:local-scfa-acfa}
\beginpgfgraphicnamed{local_equiv}
\InputIfFileExists{local_equiv.tikz}{}{\input{./figures/local_equiv.tikz}}
\endpgfgraphicnamed
\end{equation}
\end{corollary}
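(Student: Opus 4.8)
The plan is to reduce the statement to the two classification theorems, Theorem~\ref{thm:scfa-ghz} and Theorem~\ref{thm:acfa-w}, by passing from the CFA to its induced Frobenius state and then invoking the SLOCC-classification of three-qubit states. First I would use the correspondence of Theorems~\ref{thm:frob-state-from-cfa} and~\ref{thm:frob-state-cfa}: the given CFA $\blackdot$ on $\Q$ is the same data as a Frobenius state $\ket\Psi$ together with its two associated effects. In particular $\ket\Psi$ is a \emph{symmetric} tripartite qubit state satisfying Eq~\ref{eq:symmetric-maximal}, so from now on I can work entirely with $\ket\Psi$ and forget the original choice of effects.

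The next step is to pin down the SLOCC class of $\ket\Psi$. Being a Frobenius state, $\ket\Psi$ satisfies Eq~\ref{eq:symmetric-maximal}, hence is strongly SLOCC-maximal by Proposition~\ref{pro:symmetric-maximal} and therefore SLOCC-maximal by Theorem~\ref{thm:strong-maximality}. By the classification of \cite{DVC} recalled in Section~\ref{sec:SLOCC}, the only SLOCC-maximal genuinely tripartite qubit states are those SLOCC-equivalent to $\ketGHZ$ or to $\ketW$: the degenerate classes ($\ket{Bell}\otimes\ket\psi$ and full products) sit strictly below these two in the preorder $\preceq$ and so cannot be maximal. Thus $\ket\Psi\sim\ketGHZ$ or $\ket\Psi\sim\ketW$. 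Since $\ket\Psi$ is symmetric, Lemma~\ref{thm:slocc-uniform} upgrades this to a \emph{uniform} local equivalence, giving an invertible $L:\mathbb{C}^2\to\mathbb{C}^2$ with $\ket\Psi = L^{\otimes 3}\ketGHZ$ in the first case and $\ket\Psi = L^{\otimes 3}\ketW$ in the second. This $L$ is the map asserted to exist in the statement, and the fact that a single $L$ dresses all three legs is exactly what allows the symmetric pattern in Eq~\ref{eq:local-scfa-acfa}.

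Finally I would feed this $L$ into the converse ($\Leftarrow$) directions of Theorems~\ref{thm:scfa-ghz} and~\ref{thm:acfa-w}. In the GHZ case, taking $\bra\xi := \sqrt{2}\,\bra{+}L^{-1}$ and $\bra\Phi := (\bra{00}+\bra{11})(L^{-1}\otimes L^{-1})$ yields, via Theorem~\ref{thm:frob-state-cfa}, precisely the generators displayed in Eq~\ref{eq:local-scfa-acfa}, and that theorem already certifies them to form an SCFA; symmetrically, $\bra\xi := \bra{0}L^{-1}$ and $\bra\Phi := (\bra{01}+\bra{10})(L^{-1}\otimes L^{-1})$ produce an ACFA. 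A clean way to confirm the type is to note that these generators are exactly the standard $\mathcal G$ (resp.\ $\mathcal W$) of Examples~\ref{ex:concreteGHZ} and~\ref{ex:concreteW} conjugated by $L$, namely $\mu\mapsto L\mu(L^{-1}\otimes L^{-1})$ and $\delta\mapsto (L\otimes L)\delta L^{-1}$ (and correspondingly on $\eta,\epsilon$). This is an isomorphism of Frobenius algebras, and both speciality (Definition~\ref{def:scfa}) and anti-speciality (Definition~\ref{def:acfa}) are identities between $\mathcal F$-graphs, hence are transported intact along the isomorphism; so the conjugate of a special (resp.\ anti-special) CFA is again special (resp.\ anti-special).

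I expect the main obstacle to be the bookkeeping of the middle paragraph rather than any deep step: one must be careful that the induced Frobenius state really is SLOCC-maximal, so that the DVC dichotomy applies and no degenerate class can intervene, and that the bare SLOCC-equivalence is genuinely promotable to the uniform $L^{\otimes 3}$ form via Lemma~\ref{thm:slocc-uniform}. It is worth flagging that the corollary concerns the CFA \emph{built} from the transformed state with these canonical effects, not the original $\blackdot$ itself; the freedom in the counit means a given CFA on a GHZ-class state need not be special on the nose, but the construction above always recovers a special or anti-special representative within the same local-equivalence class. Once these points are settled, everything reduces to the already-proved converse halves of Theorems~\ref{thm:scfa-ghz} and~\ref{thm:acfa-w}.
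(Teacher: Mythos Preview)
Your proof follows essentially the same route as the paper's: pass to the induced Frobenius state, use SLOCC-maximality together with the D\"ur--Vidal--Cirac classification to land in the GHZ or W class, and then invoke the ($\Leftarrow$) constructions of Theorems~\ref{thm:scfa-ghz} and~\ref{thm:acfa-w} to produce a special or anti-special CFA on the same tripartite state. The only cosmetic difference is in how $L$ is named: the paper defines it graphically as the composite of the original CFA's cap with the newly constructed cup (so that the generators in Eq~\ref{eq:local-scfa-acfa} are literally the original $\mult,\comult,\unit,\counit$ dressed by this $L$), whereas you take $L$ directly from Lemma~\ref{thm:slocc-uniform} and view the new CFA as the $L$-conjugate of $\mathcal G$ or $\mathcal W$.
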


\begin{proof}
    Every CFA on $\mathbb{C}^2$ induces a Frobenius state $S_3^0 = \threestate$. Frobenius states are SLOCC-maximal, so it must be SLOCC-equivalent to either $\ketGHZ$ or $\ketW$, so we can apply the construction in the proofs of Thms \ref{thm:scfa-ghz} and \ref{thm:acfa-w}. This preserves $\threestate$, yet yields a new cup:
	\begin{tikzpicture}[dotpic,scale=0.5,yscale=-1]
		\draw (0,-0.5) to [out=90,in=180] (1,0.5) to [out=0,in=90] (2,-0.5);
		\node [white dot,inner sep=0pt] (0) at (1,0.5) {$\cdot$};
	\end{tikzpicture}. Let $L$ then be the composition of the old cap with the new cup.
	\[
	\begin{tikzpicture}[dotpic]
		\node [square box,fill=none,inner sep=0pt] (0) at (0,0) {\small L};
		\draw (0,1)--(0)--(0,-1);
	\end{tikzpicture} =
	\begin{tikzpicture}[dotpic,yshift=-5mm,scale=0.5]
			\node [bn] (b0) at (-1,2) {};
			\node [white dot,inner sep=0pt] (0) at (0,0) {$\cdot$};
			\node [dot] (1) at (1.5,1) {};
			\node [bn] (b1) at (2.5,-1) {};
			\draw (b0) to [out=-90,in=180] (0) (0) to [out=0,in=180] (1) (1) to [out=0,in=90] (b1);
		\end{tikzpicture}
	\]
    By composing the multiplication and comultiplication with $L$ as in Eq \ref{eq:local-scfa-acfa}, the resulting CFA will be either special or anti-special.
\qed
\end{proof}

In two dimensions, any unital algebra is automatically commutative, so another simple consequence of this classification of tripartite states is that \emph{every} unital algebra on $\Q$ admits a counit $\epsilon$ that extends the monoid to a CFA.

\begin{corollary}\label{cor:2d-frobenius-form}
	Every unital algebra on $\mathbb C^2$ extends to a CFA.
\end{corollary}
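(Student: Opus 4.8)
The plan is to use the ``older'' characterisation of Frobenius algebras recalled just after Proposition~\ref{pro:induced-iso}: a unital algebra $(H,\mu,\eta)$ extends to a Frobenius algebra precisely when there is a functional $\epsilon : H \to \mathbb{C}$ for which the pairing $(\ket u,\ket v) \mapsto \epsilon(\mu(\ket u \otimes \ket v))$ is non-degenerate, i.e. $\widehat{S^2_0} = \widehat{\mu\circ\epsilon}$ is an isomorphism $H \cong H^*$, with the comultiplication $\delta$ then forced. So it suffices to (a) observe that on $\Q$ every such $\mu$ is commutative, so that the induced Frobenius algebra is automatically a \emph{commutative} one, and (b) exhibit a suitable counit $\epsilon$.

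For (a) I would fix the unit as a basis vector, say $\eta = \ket 1$, extend to a basis $\{\ket 1,\ket a\}$, and note that bilinearity together with the unit law determines $\mu$ entirely by the single product $\ket a\cdot\ket a$; hence $\mu = \mu\sigma_{H,H}$, which is exactly the commutativity needed for the induced $\delta$ to be cocommutative.

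For (b) I would reduce to the two algebras already in hand. Every two-dimensional unital algebra is isomorphic, as a unital algebra, to exactly one of two: the semisimple algebra $\mathbb{C}\oplus\mathbb{C}$ (a pair of orthogonal idempotents) or the dual numbers $\mathbb{C}[x]/(x^2)$ (unit together with a square-zero element). These are precisely the multiplicative fragments of the GHZ-algebra $\mathcal G$ of Example~\ref{ex:concreteGHZ} and the W-algebra $\mathcal W$ of Example~\ref{ex:concreteW}: the copy map $\ket0\bra{00}+\ket1\bra{11}$ realises $\mathbb{C}\oplus\mathbb{C}$, while $\ket1\bra{11}+\ket0\bra{01}+\ket0\bra{10}$ realises the dual numbers with nilpotent $\ket 0$. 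An algebra isomorphism $f : H \to A_X$ (with $X\in\{\mathcal G,\mathcal W\}$) is the same as an invertible $L := f^{-1}$ satisfying $\mu = L\,\mu_X(L^{-1}\otimes L^{-1})$ and $\eta = L\,\eta_X$, so pulling back the counit of $\mathcal G$ or $\mathcal W$ along $f$, namely $\epsilon := \epsilon_X\,L^{-1}$, yields a non-degenerate Frobenius functional and hence a CFA — special or anti-special, respectively. This is exactly the locally deformed CFA appearing in Corollary~\ref{cor:cfa-local-equiv}, now read starting from the algebra rather than the full CFA.

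I expect the only real content to lie in the dichotomy used in (b): that a two-dimensional unital algebra has no third isomorphism type. This I would obtain by completing the square on the single structure constant $\ket a\cdot\ket a = \alpha\ket 1 + \beta\ket a$: replacing $\ket a$ by $\ket a - \tfrac{\beta}{2}\ket 1$ forces $\ket a^2 \in \mathbb{C}\ket 1$, and then a nonzero scalar can be rescaled to produce idempotents (landing in $\mathbb{C}\oplus\mathbb{C}$) while a zero scalar gives the dual numbers. This is the step that makes dimension two special — in higher dimensions not every unital algebra is Frobenius — and it is precisely the algebraic shadow of the SLOCC GHZ/W dichotomy established in Theorems~\ref{thm:scfa-ghz} and~\ref{thm:acfa-w}.
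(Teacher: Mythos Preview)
Your argument is correct. Both your proof and the paper's reach the GHZ/W dichotomy and then transport the counit from $\mathcal G$ or $\mathcal W$ along an invertible local map, exactly as in the $(\Leftarrow)$ directions of Theorems~\ref{thm:scfa-ghz} and~\ref{thm:acfa-w}. The difference lies in how the dichotomy is obtained. The paper stays within its own framework: the unit law makes the multiplication, viewed as a tripartite state, SLOCC-maximal, and then the D\"ur--Vidal--Cirac classification forces it into the GHZ or W class. You instead classify two-dimensional unital algebras directly by completing the square on the single structure constant, landing on $\mathbb C\oplus\mathbb C$ or $\mathbb C[x]/(x^2)$. Your route is more elementary and self-contained---it does not invoke the external SLOCC result---while the paper's route is shorter in context and reinforces the state/algebra correspondence that is the point of the section. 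Your part~(a) also gives a cleaner reason for commutativity than the paper's somewhat loose ``all two-element monoids are commutative.''
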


\begin{proof}
    Since all two-element monoids are commutative, every unital algebra on $\mathbb C^2$ is commutative, and the unit law forces the multiplication map to be maximal with respect to local operations. It must therefore be locally equivalent to $\ketGHZ$ or $\ketW$. In either case, choosing the appropriate counit will make the unital algebra into a commutative Frobenius algebra.
\qed
\end{proof}

All of these results now follow easily from the fact that $\ketGHZ$ and $\ketW$ are the only SLOCC-maximal tripartite states for qubits. These are examples of how results from entanglement theory can be translated straight into results about algebras using the notion of a Frobenius state.

%%%%%%%%%%%%%%%%%%%
%%%%%%%%%%%%%%%%%%%
\section{$N$-partite entanglement from interacting GHZ and W states}\label{sec:interacting-states}%%%%%%%%%%%%%%%%%%%
%%%%%%%%%%%%%%%%%%%

Thms \ref{thm:scfa-spider} and \ref{thm:acfa-spider} show that using a single ACFA or SCFA, we can construct relatively few states, namely those states that are SLOCC-equivalent to products of $\ket{\textit{GHZ}_n}$ and $\ket{\textit{W}_n}$. However, when we compose these two structures, a wealth of states emerge.

We begin by looking at the SCFA $\mathcal G$ from Ex \ref{ex:concreteGHZ} and the ACFA $\mathcal W$ from Ex \ref{ex:concreteW}. These satisfy many concrete identities when we compose them, but we shall soon see that the following four identities suffice to identify these two commutative Frobenius algebras, up to a change of basis.

\begin{definition}\label{def:ghz-w-pair}\em
    A special commutative Frobenius algebra $\mathcal S = (H, \whitemult, \whiteunit, \whitecomult, \whitecounit)$ and an antispecial commutative Frobenius algebra $\mathcal A = (H, \mult, \unit, \comult, \counit)$ are said to define a \emph{GHZ/W-pair} if the following equations hold:
\begin{center}
	{\rm (i.)}\ \ %
\beginpgfgraphicnamed{ghz_w_ax1}
\InputIfFileExists{ghz_w_ax1.tikz}{}{\input{./figures/ghz_w_ax1.tikz}}
\endpgfgraphicnamed
	\qquad\quad
	{\rm (ii.)} %
\beginpgfgraphicnamed{ghz_w_ax2}
\InputIfFileExists{ghz_w_ax2.tikz}{}{\input{./figures/ghz_w_ax2.tikz}}
\endpgfgraphicnamed

	{\rm (iii.)} %
\beginpgfgraphicnamed{ghz_w_ax3}
\InputIfFileExists{ghz_w_ax3.tikz}{}{\input{./figures/ghz_w_ax3.tikz}}
\endpgfgraphicnamed
	\qquad\qquad\qquad
	{\rm (iv.)}\ \ %
\beginpgfgraphicnamed{ghz_w_ax4}
\ensuremath{\circl \tickunit =\, \lolli}}
\endpgfgraphicnamed \ \ \,
\end{center}
\end{definition}

We shall now show that when $H = \Q$, these equations suffice to uniquely identify the SCFA ${\cal G}$ of Example \ref{ex:concreteGHZ} and the ACFA ${\cal W}$ of Example \ref{ex:concreteW}, up to a change of basis. Before we show this, we first develop some basic facts about a GHZ/W-pair.

\begin{lemma}\label{lem:dot-lolli-2d}
    If $\dim(H) \geq 2$, the points $\unit$ and $\lolli$ span a 2-dimensional space.
\end{lemma}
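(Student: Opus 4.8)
The plan is to prove that $\unit$ and $\lolli$ are linearly independent by first checking that neither of them vanishes, and then ruling out the possibility that they are proportional. The entire argument can be carried out inside the anti-special algebra $\mathcal A$ itself, so I intend to use only the ACFA normal form (Thm \ref{thm:acfa-spider}), the copy property (Prop \ref{prop:acfa-copy}), the unit law, and the non-degeneracy of the induced cup (Prop \ref{pro:induced-iso}); none of the GHZ/W-pair axioms (i.)--(iv.) are needed.

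First I would dispose of the degenerate cases. The unit law $\mult\circ(\unit\otimes 1_H)=1_H$ forces $\unit\neq 0$, since otherwise the left-hand side would be $0$ while $1_H\neq 0$ for $\dim(H)\geq 1$. To see that $\lolli\neq 0$, I would cap it with the counit: $\counit\circ\lolli$ is a connected $\mathcal A$-graph with no inputs, no outputs and exactly one loop, which by the case analysis inside the proof of Thm \ref{thm:acfa-spider} equals $\circl=\dim(H)=:D$. Since $D\geq 2$ this scalar is nonzero, so $\lolli\neq 0$.

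Next I assume for contradiction that $\unit$ and $\lolli$ are linearly dependent. As both are nonzero, this means $\lolli=\lambda\,\unit$ for some scalar $\lambda\neq 0$. I now invoke the copy property (Prop \ref{prop:acfa-copy}), which states $\circl\,(\comult\circ\lolli)=\lolli\otimes\lolli$, equivalently $\comult\circ\lolli=\icircl\,(\lolli\otimes\lolli)$. Substituting $\lolli=\lambda\,\unit$ and cancelling one factor of $\lambda\neq 0$ yields $\comult\circ\unit=\icircl\lambda\,(\unit\otimes\unit)$, so the bipartite state $\comult\circ\unit$ would be a product state, of Schmidt rank at most one.

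The contradiction, which I expect to be the crux of the argument, comes from identifying $\comult\circ\unit$ with the cup $S^0_2$ of $\mathcal A$ and applying Prop \ref{pro:induced-iso}: the induced map $\widehat{S^0_2}$ is an isomorphism $H^*\cong H$, so $S^0_2$ has full Schmidt rank $D\geq 2$. A state of rank $\geq 2$ cannot equal a product state of rank $\leq 1$, a contradiction. Hence no nontrivial dependence exists, and $\unit$ and $\lolli$ span a $2$-dimensional subspace. The step requiring the most care is precisely this proportional case: one must be sure that the copy property genuinely renders $\comult\circ\unit$ separable and that this clashes with the non-degeneracy of the Frobenius cup. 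Note that the hypothesis $\dim(H)\geq 2$ is used exactly here (to guarantee full rank $\geq 2$), and also earlier to guarantee $\circl=D\neq 0$ so that $\icircl=1/D$ is well defined and $\counit\circ\lolli\neq 0$.
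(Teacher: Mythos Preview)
Your proof is correct and follows essentially the paper's approach: the paper assumes $\unit\approx\lolli$, substitutes $\lolli$ for $\unit$ in the unit law, and invokes anti-specialness directly to conclude that the identity map $1_H$ has rank~$1$, whereas you route through the derived copy property (Prop~\ref{prop:acfa-copy}) to conclude that the cap $S^0_2=\comult\circ\unit$ is a product state. The two conclusions differ only by bending a wire via Prop~\ref{pro:induced-iso}, so the arguments are the same up to that isomorphism.
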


\begin{proof}
    If $\unit$ were proportional to $\lolli$ (written $\unit \approx \lolli$), anti-specialness would force the identity map to be rank 1.
    \begin{center}
\beginpgfgraphicnamed{dot_lolli_2d}
\InputIfFileExists{dot_lolli_2d.tikz}{}{\input{./figures/dot_lolli_2d.tikz}}
\endpgfgraphicnamed
    \end{center}
    For $\dim(H) \geq 2$, this is a contradiction.
\end{proof}

Conditions (i.)--(iv.) imply that $\unit$ and $\lolli$ are both copiable points (up to a scalar) of the SCFA and that a ``tick'' is a self-inverse permutation of them. We show that $\lolli$ is copied by $\whitecomult$ as follows.
\begin{equation}\label{eq:loopcopy}
\beginpgfgraphicnamed{loop_copy_proof} % (fold)
	\circl\hspace{-3mm}
	\begin{tikzpicture}[dotpic]
		\node [dot] (g1) at (0,1) {};
		\node [white dot] (r1) at (0,0) {}; 
		\node [bn] (c1) at (-1,-1) {};
		\node [bn] (c2) at (1,-1) {};
		\draw [uploop] (g1) to ();
		\draw (g1)--(r1)--(c1) (r1)--(c2);
	\end{tikzpicture} =\ 
	\circl\,\circl\hspace{-3mm}
	\begin{tikzpicture}[dotpic]
		\node [dot] (g1) at (0,1) {};
		\node [white dot] (r1) at (0,0) {};
		\node [bn] (c1) at (-1,-1) {};
		\node [bn] (c2) at (1,-1) {};
		\draw (g1)-- node[tick]{-} (r1)--(c1) (r1)--(c2);
	\end{tikzpicture} =\ 
	\circl\,\circl\hspace{-3mm}
	\begin{tikzpicture}[dotpic]
		\node [dot] (g1) at (0,1) {};
		\node [white dot] (r1) at (0,0) {};
		\node [bn] (c1) at (-1,-1) {};
		\node [bn] (c2) at (1,-1) {};
		\draw (g1)--(r1)-- node[tick]{-} (c1) (r1)-- node[tick]{-} (c2);
	\end{tikzpicture} =\ 
	\circl \tickunit \circl \tickunit =\ 
	\lolli\ \lolli
\endpgfgraphicnamed % {loop_copy_proof} (end)
\end{equation}

We note also that the ``tick'' leaves the counit of $\mathcal S$ invariant.

\begin{lemma}\label{lem:white-unit-tick}
	$\whitecounit = \whitetickcounit$.
\end{lemma}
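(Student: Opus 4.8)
The plan is to reduce the claim to an evaluation on a basis. By Lemma~\ref{lem:dot-lolli-2d} the points $\unit$ and $\lolli$ span $H$, and since $\dim H = 2$ in the setting at hand they form a basis. Two effects $H \to \mathbb{C}$ coincide as soon as they agree on a basis, so it suffices to compare $\whitecounit$ and $\whitetickcounit = \whitecounit\circ\tick$ on the two points $\unit$ and $\lolli$.

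First I would record the values of $\whitecounit$ on the copiable points. Because $\mathcal{S}$ is special (Def~\ref{def:scfa}), its counit obeys the deletion law $(\whitecounit\otimes\id{H})\whitecomult = \id{H}$, and folding this identity onto itself gives $(\whitecounit\otimes\whitecounit)\whitecomult = \whitecounit$. Both $\unit$ and $\lolli$ are copiable for $\whitecomult$ up to a scalar: copiability of $\lolli$ is exactly (\ref{eq:loopcopy}), which also fixes the scalar factor, while copiability of $\unit$ follows from conditions (i.)--(iv.) in the same manner. Feeding a copied point $p$ (with $\whitecomult\, p = \kappa\,(p\otimes p)$) into the folded identity produces the quadratic relation $\kappa(\whitecounit\, p)^2 = \whitecounit\, p$, so $\whitecounit\, p \in \{0, 1/\kappa\}$. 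Carrying the $\circl$/$\icircl$ bookkeeping from (\ref{eq:loopcopy}) through this relation for each point shows that $\whitecounit$ takes the \emph{same} nonzero value on $\unit$ and on $\lolli$; that is, $\mathcal{S}$ deletes both of its copiable points uniformly.

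Finally, since the tick is a self-inverse permutation of $\unit$ and $\lolli$, post-composing $\whitecounit$ with it merely exchanges the two evaluations, giving $\whitetickcounit(\unit) = \whitecounit(\lolli)$ and $\whitetickcounit(\lolli) = \whitecounit(\unit)$. By the uniform-deletion step these equal $\whitecounit(\unit)$ and $\whitecounit(\lolli)$ respectively, so $\whitecounit$ and $\whitecounit\circ\tick$ agree on the basis $\{\unit,\lolli\}$, whence $\whitecounit = \whitetickcounit$.

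The main obstacle is the scalar bookkeeping. Copiability only holds up to a scalar, and the tick is only a permutation of $\unit$ and $\lolli$ up to the factors $\circl$ and $\icircl$ already visible in (\ref{eq:loopcopy}); the real content of the lemma is that these factors cancel. The delicate point is therefore to verify that the normalisations forced by conditions (i.)--(iv.) make $\whitecounit(\unit)$ and $\whitecounit(\lolli)$ exactly equal rather than merely proportional. Once uniform deletion is pinned down on the nose, the permutation argument of the last paragraph is immediate.
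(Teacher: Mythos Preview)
Your strategy can be made to work in dimension~2, but the central claim as you state it is wrong, and the argument is considerably more involved than the paper's.

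The error is the assertion that $\whitecounit$ takes the \emph{same} value on $\unit$ and $\lolli$. It does not. Rather than folding the counit law into a quadratic (which introduces a spurious zero root you then have to exclude), apply the counit law $(\whitecounit\otimes 1)\whitecomult = 1$ directly to a point $p$ with $\whitecomult\,p = \kappa\,(p\otimes p)$: this gives $\kappa(\whitecounit\,p)\,p = p$, hence $\whitecounit\,p = 1/\kappa$ on the nose. Condition~(iii.) gives $\kappa=1$ for $\unit$, so $\whitecounit(\unit)=1$; equation~(\ref{eq:loopcopy}) gives $\kappa=\icircl$ for $\lolli$, so $\whitecounit(\lolli)=\circl$. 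These are \emph{not} equal. Correspondingly, the tick is not a plain permutation of the two points: condition~(iv.) says $\tick\circ\unit = \icircl\,\lolli$, whence $\tick\circ\lolli = \circl\,\unit$. The lemma then follows because $\whitecounit(\tick\,\unit)=\icircl\cdot\circl=1=\whitecounit(\unit)$ and $\whitecounit(\tick\,\lolli)=\circl\cdot 1=\circl=\whitecounit(\lolli)$, i.e.\ the two scalar discrepancies cancel --- but not because the counit deletes the two points uniformly.

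The paper's proof is entirely different and much lighter: it is a short graphical rewrite using only conditions~(i.) and~(ii.), valid for any~$H$. One inserts a pair of ticks on the input wire of $\whitecounit$ (harmless by~(i.)), unfolds $\whitecounit$ via the unit law for $\whitemult$, uses~(ii.) to pass one tick through the white multiplication to the other leg, cancels the resulting adjacent pair of ticks, and refolds. No basis, no copiable-point bookkeeping, no use of conditions~(iii.) or~(iv.), and no dimension hypothesis. Your corrected argument, by contrast, needs $\{\unit,\lolli\}$ to span, so it is confined to $\dim H=2$.
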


\begin{proof}
	From condition (i.) we conclude that $\tick$ is self-inverse. The result then follows from condition (ii.).
	\begin{center}
	\beginpgfgraphicnamed{white_tick_inv_proof} % (fold)
		\begin{tikzpicture}[dotpic]
			\node [white dot] (0) at (0,-0.4) {};
			\draw (0) -- (0,0.4);
		\end{tikzpicture} =
		\begin{tikzpicture}[dotpic]
			\node [white dot] (0) at (0,-0.4) {};
			\draw (0) -- node[tick,pos=0.3]{-} node[tick,pos=0.7]{-} (0,0.4);
		\end{tikzpicture} =
		\begin{tikzpicture}[dotpic]
			\node [white dot] (1) at (-0.4,-0.4) {};
			\node [white dot] (2) at (0.4,-0.4) {};
			\node [white dot] (0) at (0,0) {};
			\draw (1) -- node[tick]{-} (0);
			\draw (2) -- (0);
			\draw (0) -- node[tick]{-} (0,0.5);
		\end{tikzpicture} =
		\begin{tikzpicture}[dotpic]
			\node [white dot] (1) at (-0.4,-0.4) {};
			\node [white dot] (2) at (0.4,-0.4) {};
			\node [white dot] (0) at (0,0) {};
			\draw (1) -- node[tick,pos=0.3]{-} node[tick,pos=0.7]{-} (0);
			\draw (2) -- node[tick]{-} (0);
			\draw (0) -- (0,0.5);
		\end{tikzpicture} =
		\begin{tikzpicture}[dotpic]
			\node [white dot] (1) at (-0.4,-0.4) {};
			\node [white dot] (2) at (0.4,-0.4) {};
			\node [white dot] (0) at (0,0) {};
			\draw (1) -- (0);
			\draw (2) -- node[tick]{-} (0);
			\draw (0) -- (0,0.5);
		\end{tikzpicture} =
		\begin{tikzpicture}[dotpic]
			\node [white dot] (0) at (0,-0.4) {};
			\draw (0) -- node[tick]{-} (0,0.4);
		\end{tikzpicture}
	\endpgfgraphicnamed \quad \qed% {white_tick_inv_proof} (end)
	\end{center}

\end{proof}

	Given a commutative Frobenius algebra $\blackdot = (H, \mult, \unit, \comult, \counit)$, we can define an operation \em $\blackdot$-transpose \em of a morphism
	\[
	f: H \otimes \ldots \otimes H \rightarrow H \otimes \ldots \otimes H
	\]
	as follows:
	\begin{equation*}
\beginpgfgraphicnamed{black_transpose}
\InputIfFileExists{black_transpose.tikz}{}{\input{./figures/black_transpose.tikz}}
\endpgfgraphicnamed
	\end{equation*}
Using Frobenius identities, it is easy to verify that:
    \begin{itemize}
	    \item[--] $(1_A)^{\,\blackdot T} = 1_A$,
	    \item[--] $(f \circ g)^{\,\blackdot T}
	            = g^{\,\blackdot T} \circ f^{\,\blackdot T}$,
	    \item[--] $(f\otimes g)^{\,\blackdot T}
	            = f^{\,\blackdot T} \otimes\ g^{\,\blackdot T}$, and
	    \item[--] $(f^{\,\blackdot T})^{\,\blackdot T} = f$.
	\end{itemize}
	
	These show that $\blackdot$-transpose is a particularly well-behaved functional that respects the tensor product and reverses the composition of linear maps. We justify the name by noting that when the copiable points of $\whitecomult$ form an orthonormal basis $B$, this is a normal transpose (in $B$) and a change of basis given by conjugating with ticks.
	
	In category theoretic terms, the conditions above mean $\blackdot$-transpose extends to a monoidal, involutive, contravariant endofunctor on the subcategory of finite-dimensional Hilbert spaces given by tensor copies of $A$.

\begin{lemma}\label{lem:upside-down}
	$(\tick)^{\blackdot T}\! =\! \tick$, $(\mult)^{\blackdot T}\! = \comult$, $(\unit)^{\blackdot T} = \counit$, $(\whiteunit)^{\blackdot T} = \whitecounit$, $(\whitemult)^{\blackdot T} = \whitecomult$.
\end{lemma}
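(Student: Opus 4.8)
The plan is to exploit the fact, recorded just above the lemma, that $\blackdot$-transpose is a monoidal, involutive, contravariant functor defined entirely from the black compact structure $(S^0_2, S^2_0)$. Each identity then reduces to bending the relevant generator around the black cups and caps and simplifying, and I would organise the five identities into three groups according to which structure they involve. First the two purely black identities $(\mult)^{\blackdot T} = \comult$ and $(\unit)^{\blackdot T} = \counit$: these are instances of the self-duality of any Frobenius algebra and need nothing about the pair. Bending the two legs of $\mult$ down with the black cup and straightening with the snake equation (Prop \ref{pro:induced-iso}) turns $\mult$ into $\comult$; equivalently, $(\mult)^{\blackdot T}$ is a connected $\mathcal A$-graph with one input, two outputs and no loops, so it equals $\comult$ by Thm \ref{thm:cfa-nf}. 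The same port-counting argument (zero inputs/one output transposing to one input/zero outputs, no loops) gives $(\unit)^{\blackdot T} = \counit$.

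Next I would treat $(\tick)^{\blackdot T} = \tick$, since the white identities will depend on it. Transposing the tick amounts to bending it around the black cap/cup, and the claim is exactly that the tick is self-adjoint for the black pairing $S^2_0$, i.e. symmetric. Here I would use condition (i.), which makes the tick self-inverse, together with Lemma \ref{lem:white-unit-tick}; alternatively, once the tick is expressed through the generators, functoriality plus the two black identities just established computes its transpose directly and returns the tick.

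Third come the two white identities $(\whitemult)^{\blackdot T} = \whitecomult$ and $(\whiteunit)^{\blackdot T} = \whitecounit$, where the GHZ/W-pair conditions genuinely enter: the black caps and cups forming the transpose now meet white dots. The approach is to bend the legs of $\whitemult$ (resp.\ $\whiteunit$) around the black structure and push the resulting black caps through the white vertices using conditions (i.)--(iv.), absorbing the ticks that this produces by means of the facts already in hand — that $\whitecomult$ copies $\lolli$ (Eq \ref{eq:loopcopy}) and that the tick fixes the white counit (Lemma \ref{lem:white-unit-tick}). The remark preceding the lemma gives the right mental model: on the copiable basis of $\mathcal S$ the black transpose acts as the ordinary transpose conjugated by ticks, and both the symmetric copy map $\whitemult$ and the ``all-ones'' vector $\whiteunit$ are invariant under that basis relabelling (Lemma \ref{lem:dot-lolli-2d} ensures this basis is genuinely two-dimensional), which is precisely why the co-operations reappear unchanged.

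The main obstacle will be the white multiplication identity: carefully tracking how each black cap slides across $\whitecomult$ and confirming that every tick introduced by the bending is cancelled by an application of Lemma \ref{lem:white-unit-tick} or of the copy law (Eq \ref{eq:loopcopy}), rather than being left over. By contrast the black and tick identities are bookkeeping once the transpose functor's properties are in hand.
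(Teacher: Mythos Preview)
Your overall strategy---bend each generator around the black cups and caps and simplify---is exactly the paper's, and the three-way grouping you propose matches the paper's proof. The two black identities follow from the Frobenius identities alone, and for the tick the paper invokes condition~(i.) by itself; your appeal to Lemma~\ref{lem:white-unit-tick} there is superfluous, and the ``express the tick through the generators'' alternative is a non-starter since the tick is a primitive of the pair, not a composite of the Frobenius maps.

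The one place your bookkeeping drifts is the white multiplication. The paper does \emph{not} use Eq~\ref{eq:loopcopy} or Lemma~\ref{lem:white-unit-tick} for this identity. Once $(\whitemult)^{\blackdot T}$ is rewritten as a $\whitecomult$-shaped diagram carrying a tick on each of its three legs, condition~(ii.) alone pushes the single input tick across the white vertex to become two output ticks; the resulting double ticks on each output leg then cancel by condition~(i.). Your ``mental model'' via the copiable basis $\{\unit,\lolli\}$ would effectively rederive condition~(ii.) from the copying facts, but only when that pair is a basis, i.e.\ in dimension two; the paper's argument works for an arbitrary GHZ/W-pair. So drop the detour through Eq~\ref{eq:loopcopy} and invoke condition~(ii.) directly for the white multiplication, use Lemma~\ref{lem:white-unit-tick} only for the white unit, and your plan coincides with the paper's proof.
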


\begin{proof}
	$(\tick)^{\blackdot T} = \tick$ holds by condition (i.). $(\mult)^{\blackdot T} = \comult$ and $(\unit)^{\blackdot T} = \counit$ are true by Frobenius identities. $(\whitecounit)^{\blackdot T} = \whiteunit$ by Lem \ref{lem:white-unit-tick}. The final identity holds by condition (ii.).
	\begin{center}
	\beginpgfgraphicnamed{white_transpose_inv} % (fold)
		\(\left(\begin{tikzpicture}[dotpic]
			\node [white dot] (a) {};
			\draw (a) -- (0, -0.5);
			\draw (a) -- (-0.4,0.4);
			\draw (a) -- (0.4,0.4);
		\end{tikzpicture}\right)^{\blackdot T}\) =
		\begin{tikzpicture}[dotpic]
			\node [white dot] (a) {};
			\draw (a) -- node[tick]{-} (0, 0.5);
			\draw (a) -- node[tick]{-} (-0.4,-0.4);
			\draw (a) -- node[tick]{-} (0.4,-0.4);
		\end{tikzpicture} = 
		\begin{tikzpicture}[dotpic]
			\node [white dot] (a) {};
			\draw (a) -- (0, 0.5);
			\draw (a) -- node[tick,pos=0.3]{-} node[tick,pos=0.7]{-} (-0.4,-0.4);
			\draw (a) -- node[tick,pos=0.3]{-} node[tick,pos=0.7]{-} (0.4,-0.4);
		\end{tikzpicture} =
		\begin{tikzpicture}[dotpic]
			\node [white dot] (a) {};
			\draw (a) -- (0, 0.5);
			\draw (a) -- (-0.4,-0.4);
			\draw (a) -- (0.4,-0.4);
		\end{tikzpicture}
	\endpgfgraphicnamed \quad \qed % {white_transpose_inv} (end)
	\end{center}
\end{proof}

Since $f = g \iff f^{\blackdot T} = g^{\blackdot T}$, this lemma effectively gives us a way to turn any known identity upside-down.

Next, we have a lemma on scalars.

\begin{lemma}\label{lem:scalars}
	\(
	\begin{tikzpicture}[dotpic]
		\node [dot] (0) at (0,0.25) {};
		\node [dot] (1) at (0,-0.25) {};
		\draw (0) -- node[tick]{-} (1);
	\end{tikzpicture} =
	\begin{tikzpicture}[dotpic]
		\node [dot] (0) at (0,0.25) {};
		\node [white dot] (1) at (0,-0.25) {};
		\draw (0) -- node[tick]{-} (1);
	\end{tikzpicture} =
	\begin{tikzpicture}[dotpic]
		\node [dot] (0) at (0,0.25) {};
		\node [white dot] (1) at (0,-0.25) {};
		\draw (0) -- (1);
	\end{tikzpicture}
	= 1
	\)
\end{lemma}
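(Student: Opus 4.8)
The plan is to read the three scalars, left to right, as the closed composites $\counit\circ\tick\circ\unit$, $\whitecounit\circ\tick\circ\unit$ and $\whitecounit\circ\unit$ (the top dot in each is the black unit $\unit$, the bottom dot is the black, white, white counit respectively), and then to establish the chain of equalities from right to left, finishing by pinning the common value to $1$. The rightmost equality, $\whitecounit\circ\tick\circ\unit=\whitecounit\circ\unit$, is immediate from Lemma \ref{lem:white-unit-tick}: since $\whitecounit=\whitetickcounit$, the white counit absorbs the intervening tick, so the tick may simply be deleted. This reduces the problem to two tasks: the equality of the first two scalars, and the evaluation of the (now common) value.

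For the equality of the first two scalars I would exploit that the tick is the self-inverse swap of the two copiable points of the SCFA, a fact recorded in the discussion following Eq \ref{eq:loopcopy}, with Lemma \ref{lem:dot-lolli-2d} guaranteeing that $\unit$ and $\lolli$ are genuinely distinct (so the permutation is the nontrivial swap rather than the identity). Hence $\tick\circ\unit=\lolli$, rewriting the first two scalars as $\counit\circ\lolli$ and $\whitecounit\circ\lolli$ and reducing the claim to: the black and white counits agree on $\lolli$. To see this I would pass to the $\blackdot$-transpose, which is involutive and fixes scalars: by Lemma \ref{lem:upside-down} the transpose sends $\counit$ to $\unit$ and $\whitecounit$ to $\whiteunit$, while the transpose of $\lolli$ is the co-loop $\cololli$; since both expressions are scalars they equal their transposes, so the desired identity becomes a statement about how $\unit$ and $\whiteunit$ cap off $\cololli$. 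Copiability of $\lolli$ under $\whitecomult$ from Eq \ref{eq:loopcopy}, together with condition (ii.) of the GHZ/W-pair, is then exactly what forces these two cappings to coincide.

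The step I expect to be the genuine obstacle is pinning the common value to be \emph{exactly} $1$ rather than merely some nonzero scalar. The clean way in is via the rightmost scalar: $\whitecounit\circ\unit$ is the white counit deleting $\unit$, and since $\unit$ is a copiable point of the SCFA $\mathcal S$, deletion by $\whitecounit$ returns a scalar. The content of the lemma is precisely that this scalar is $1$, i.e.\ that the normalisations built into axioms (i.)--(iv.) make $\unit$ (and hence, via the tick, $\lolli$) an \emph{honest} copiable point of $\mathcal S$ --- one that is deleted to the unit scalar, not merely to a proportional one. I would therefore devote the bulk of the argument to extracting this normalisation from the axioms (using the copying identity of Eq \ref{eq:loopcopy} and anti-speciality through Prop \ref{prop:acfa-copy} to control the scalar factors $\circl$ that appear), after which $s_3=1$, then $s_2=1$ by Lemma \ref{lem:white-unit-tick}, and finally $s_1=1$ by the agreement of the two counits on $\lolli$ established above.
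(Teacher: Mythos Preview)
Your reading of the three scalars is correct, and the step $s_2=s_3$ via Lemma \ref{lem:white-unit-tick} is fine. But two things go wrong after that.

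First, the equality $\tick\circ\unit=\lolli$ is false as stated: axiom (iv.) of the GHZ/W-pair says $\lolli=\circl\cdot\tickunit$, so $\tick\circ\unit=\icircl\,\lolli$. This scalar is not a nuisance to be ``controlled later'' --- it is the whole content of the lemma. Once you carry it, $s_1=\counit\circ\tick\circ\unit=\icircl\,(\counit\circ\lolli)$, and $\counit\circ\lolli$ is the closed black diagram with no external legs and one loop, which by Thm \ref{thm:cfa-nf} is the circle $\circl$. Hence $s_1=\icircl\cdot\circl=1$ directly. This is exactly the paper's one-line argument, and it is why the paper singles out invertibility of $\circl$ as the only ingredient. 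Your detour through the $\blackdot$-transpose to compare $\counit(\lolli)$ with $\whitecounit(\lolli)$ is unnecessary: each of these equals $\circl$ by the same normal-form observation (for the white one, use Eq \ref{eq:loopcopy} and the white counit law), so there is nothing to compare.

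Second, you misidentify the hard part. You call pinning $s_3=\whitecounit\circ\unit$ to exactly $1$ ``the genuine obstacle'', but this is the easiest of the three: axiom (iii.) says $\whitecomult\circ\unit=\unit\otimes\unit$ with no scalar (this is precisely what is used in the third step of Eq \ref{eq:loopcopy}), and then the white counit law $(\whitecounit\otimes 1)\whitecomult=1$ applied to $\unit$ gives $\whitecounit(\unit)\cdot\unit=\unit$, hence $\whitecounit(\unit)=1$. No appeal to Prop \ref{prop:acfa-copy} or anti-speciality is needed. In short: use axiom (iii.) for $s_3$, Lemma \ref{lem:white-unit-tick} for $s_2$, and axiom (iv.) plus $\icircl\cdot\circl=1$ for $s_1$.
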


\begin{proof}
	These equations follow from the fact that $\circl$ ($= \dim(H)$) admits an inverse $\icircl$.

	\begin{center}
\beginpgfgraphicnamed{unbiased_identity_scalar}
\InputIfFileExists{unbiased_identity_scalar.tikz}{}{\input{./figures/unbiased_identity_scalar.tikz}}
\endpgfgraphicnamed \quad \qed
	\end{center}
\end{proof}

$\mathcal G$ is the SCFA derived from $\ketGHZ$ and $\mathcal W$ is the ACFA derived from $\mathcal W$. By construction, $(\mathcal G, \mathcal W)$ forms a GHZ/W pair. We can now prove that this is the only GHZ/W pair on $\Q$, up to a change of basis.

\begin{theorem}\label{thm:HilbSCFA-ACFA-correspondence}
    For any GHZ/W-pair $(\mathcal S, \mathcal A)$ on $\Q$, there exists a change of basis $L$ that turns the canonical GHZ/W-pair $(\mathcal G, \mathcal W)$ into $(\mathcal S, \mathcal A)$. Furthermore, fixing an SCFA or ACFA uniquely determines a GHZ/W-pair, up to a possible permutation of basis vectors.
\end{theorem}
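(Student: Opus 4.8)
The plan is to reduce an arbitrary GHZ/W-pair to the canonical one by exploiting the fact, foreshadowed in the discussion preceding the statement, that the two distinguished points $\unit$ and $\lolli$ of the ACFA $\mathcal A$ are precisely the copiable points of the SCFA $\mathcal S$. Indeed, Eq \ref{eq:loopcopy} shows that $\whitecomult$ copies $\lolli$ (up to the scalar $\circl$), and a dual computation together with axioms (i.)--(ii.) shows $\unit$ is copied as well; by Lemma \ref{lem:dot-lolli-2d} these two points are linearly independent and hence span $\Q$. Since \cite{CPV} tells us that a two-dimensional SCFA is uniquely determined by the pair of linearly independent vectors its comultiplication copies, the SCFA $\mathcal S$ is forced to be the copying algebra of $\{\lolli, \unit\}$.

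First I would build $L$. Rescaling $\unit$ and $\lolli$ so they are genuinely copied (not merely up to scalar), define the invertible map $L :: \ket 0 \mapsto \lolli,\ \ket 1 \mapsto \unit$. This sends the copiable basis $\{\ket 0, \ket 1\}$ of $\mathcal G$ to the copiable basis $\{\lolli, \unit\}$ of $\mathcal S$, so by the uniqueness result of \cite{CPV} the SCFA obtained by transporting $\mathcal G$ along $L$ is exactly $\mathcal S$. This settles the SCFA half of both assertions: any SCFA in a GHZ/W-pair on $\Q$ is a change of basis of $\mathcal G$, pinned down by $\{\lolli, \unit\}$ up to which point is named $\ket 0$ and which $\ket 1$ (the promised permutation ambiguity).

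It remains to show the same $L$ carries $\mathcal W$ to $\mathcal A$. Here I would conjugate the whole pair back: set $\mathcal A_0 := L^{-1}\!\cdot \mathcal A \cdot L$, with the variance appropriate to each generator. Because the defining equations (i.)--(iv.) are built solely from composition and tensor product and $L$ is applied uniformly to every wire, $(\mathcal G, \mathcal A_0)$ is again a GHZ/W-pair, now with $\lolli_{\mathcal A_0} = \ket 0$ and $\unit_{\mathcal A_0} = \ket 1$. I then reconstruct $\mathcal A_0$ and check it equals $\mathcal W$: by Prop \ref{prop:acfa-copy} its comultiplication copies $\ket 0$, its unit is $\ket 1$, and it is a non-separable anti-special CFA, so the reconstruction in the proof of Thm \ref{thm:acfa-w} applies and leaves only a residual scalar/local freedom; the scalar identities of Lemma \ref{lem:scalars} together with the now-fixed SCFA $\mathcal G$ rigidify this freedom, forcing $\comult_{\mathcal A_0} = \ket{00}\bra 0 + \ket{01}\bra 1 + \ket{10}\bra 1$, $\unit_{\mathcal A_0} = \ket 1$, and the corresponding counit and multiplication, i.e. $\mathcal A_0 = \mathcal W$. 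Transporting back along $L$ gives $\mathcal A = L\!\cdot\!\mathcal W\!\cdot\!L^{-1}$, completing the existence claim; and since the reconstruction leaves no freedom beyond the choice of which copiable point plays the role of $\unit$, fixing either $\mathcal S$ or $\mathcal A$ determines the partner up to that single basis permutation.

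The main obstacle is this last step: showing that $\mathcal A_0$ is literally $\mathcal W$ rather than merely locally equivalent to it. The proof of Thm \ref{thm:acfa-w} only recovers an ACFA up to invertible local maps, so the real work is to verify that pairing with the fixed canonical SCFA $\mathcal G$, through the full strength of axioms (i.)--(iv.) and the normalisation forced by Lemma \ref{lem:scalars}, removes every remaining degree of freedom except the $\ket 0 \leftrightarrow \ket 1$ swap. I expect this to require tracking the scalars carefully through the copying identities, so that $\unit$ and $\lolli$ are not merely proportional to the copiable points but equal to them with exactly the right coefficients.
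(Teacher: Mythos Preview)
Your plan is essentially the paper's argument, and everything up to the ``main obstacle'' you flag matches it closely: identify the copiable points of $\mathcal S$ as $\unit$ and (a rescaling of) $\lolli$, invoke \cite{CPV} to pin down $\mathcal S$, and then determine $\mathcal A$ from the pair axioms. The paper uses $\tickunit$ rather than $\lolli$ as the second copiable point, which by axiom~(iv.) is just $\icircl\,\lolli$ and is already \emph{exactly} copied, so no rescaling is needed.

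Where you diverge is in how you propose to close the gap. You want to conjugate back to $(\mathcal G,\mathcal A_0)$ and then squeeze $\mathcal A_0$ down to $\mathcal W$ by combining the SLOCC-level reconstruction of Thm~\ref{thm:acfa-w} with scalar bookkeeping. That can be made to work, but it is roundabout, and one step is shakier than you indicate: the lollipop is a partial trace of $\comult$, and partial trace does not transform covariantly under non-unitary conjugation, so $\lolli_{\mathcal A_0}$ is not literally $L^{-1}\lolli_{\mathcal A}$. You can still recover that it is \emph{proportional} to $\ket 0$ (it is a copiable point of $\mathcal G$ distinct from $\unit_{\mathcal A_0}=\ket 1$), but this is exactly the kind of scalar chase you were hoping to avoid.

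The paper sidesteps all of this by computing $\comult$ directly on the copiable basis $\{e_0,e_1\}=\{\unit,\tickunit\}$, with no appeal to Thm~\ref{thm:acfa-w}. Two observations suffice: the unit law gives $\comult\circ\unit$ equal to the black cap, and since axiom~(i.) forces the tick to be the involution $e_0\leftrightarrow e_1$, the black cap is the white cap with a tick on one leg, i.e.\ $\ket{e_0 e_1}+\ket{e_1 e_0}$; and axiom~(iv.) together with Prop~\ref{prop:acfa-copy} gives $\comult\circ\tickunit=\tickunit\,\tickunit$. These two values determine $\comult$ on a basis, hence completely, and the change of basis $L::\ket{e_i}\mapsto\ket i$ then visibly sends $(\mathcal S,\mathcal A)$ to $(\mathcal G,\mathcal W)$. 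This is both shorter and avoids the scalar-tracking you anticipated; I would replace your conjugate-back-and-rigidify step with this direct computation.
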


% \begin{theorem}\label{thm:HilbSCFA-ACFA-correspondence}
%     The ACFAs and SCFAs on $\Q$ are in 2-to-1 correspondence. That is, for any SCFA, $\mathcal S$ on $\Q$, there are exactly two ACFAs $\mathcal A$ and $\mathcal A'$, such that $(\mathcal S, \mathcal A)$ and $(\mathcal S, \mathcal A')$ are GHZ/W-pairs. Conversely, for any ACFA $\mathcal A$ on $\Q$, there is a unique SCFA $\mathcal S$ such that $(\mathcal S, \mathcal A)$ forms a GHZ/W-pair. 
% \end{theorem}

\begin{proof}
First, we fix a SCFA $\mathcal S = (\Q, \whitemult, \whiteunit, \whitecomult, \whitecounit)$. Let $\ket{e_i}$ be the set of copiable points of $\whitecomult$, i.e. points such that $\whitecomult \circ \ket{e_i} = \ket{e_i} \otimes \ket{e_i}$. We know from \cite{CPV}, that for a special Frobenius algebra, such points span the entire space, which in this case is $\Q$. We now show that, up to permutation of the points $\ket{e_0} \leftrightarrow \ket{e_1}$, conditions (i.)--(iv.) uniquely determine the ACFA $(\mult, \unit, \comult, \counit)$.

By conditions (ii.) and (iii.) and Lem \ref{lem:dot-lolli-2d}, we know that $\unit$ and $\tickunit$ are distinct copiable points, so let $\ket{e_0} = \unit$ and $\tickunit = \ket{e_1}$. By condition (i.) the tick is an involution, so it must be the permutation $\ket{e_0} \leftrightarrow \ket{e_1}$. Therefore we have defined the black cap,
\beginpgfgraphicnamed{black_cap_identity} % (fold)
	\begin{tikzpicture}[dotpic,scale=0.5]
		\draw (0,-0.5) to [out=90,in=180] (1,0.5) to [out=0,in=90] (2,-0.5);
		\node [dot] (0) at (1,0.5) {};
	\end{tikzpicture} :=
	\begin{tikzpicture}[dotpic,scale=0.5]
		\draw (0,-0.5) to [out=90,in=180] (1,0.5) to [out=0,in=90] node[tick]{-} (2,-0.5);
		\node [white dot] (0) at (1,0.5) {};
	\end{tikzpicture}
\endpgfgraphicnamed. % {black_cap_identity} (end)
Furthermore, by condition (iv.) and anti-specialness, we have
\begin{center}
\beginpgfgraphicnamed{tick_unit_copy} % (fold)
	\begin{tikzpicture}[dotpic,scale=0.7]
		\node [dot] (1) at (0,1) {};
		\node [dot] (0) at (0,0) {};
		\draw (1) -- node[tick]{-} (0);
		\draw (-0.7,-0.7) -- (0) -- (0.7,-0.7);
	\end{tikzpicture} =
	\icircl
	\begin{tikzpicture}[dotpic,scale=0.7]
		\node [dot] (1) at (0,1) {};
		\node [dot] (0) at (0,0) {};
		\draw [uploop] (1) to ();
		\draw (1) -- (0);
		\draw (-0.7,-0.7) -- (0) -- (0.7,-0.7);
	\end{tikzpicture} =
	\icircl\,\icircl\ \lolli\ \lolli\ 
	= \tickunit\tickunit
\endpgfgraphicnamed % {tick_unit_copy} (end)
\end{center}
Now, we have completely defined $\comult$.
\[
\comult :: \left\{\begin{array}{cl}
	 \unit & \mapsto \ \begin{tikzpicture}[dotpic,scale=0.5]
		\draw (0,-0.5) to [out=90,in=180] (1,0.5) to [out=0,in=90] (2,-0.5);
		\node [dot] (0) at (1,0.5) {};
	\end{tikzpicture} \\
	\tickunit & \mapsto \ \tickunit\tickunit
\end{array}\right.
\]
The data $(\comult, \unit)$ suffice to define $\mathcal A$. Writing these out symbolically, we have:
\begin{equation}\label{eq:acfa-form-1}
		\comult :: \ket{e_0} \mapsto \ket{e_0,e_0},\ \  
		           \ket{e_1} \mapsto \ket{e_0,e_1} + \ket{e_1,e_0}
		\qquad\qquad
		\unit = \ket{e_1}
		\qquad
\end{equation}

Applying the change of basis $L :: \ket{e_i} \mapsto \ket{i}$ turns the pair $(\mathcal S, \mathcal A)$ into $(\mathcal G, \mathcal W)$. Of course, we could have chosen $\ket{e_0}= \tickunit$. This would induce the this ACFA, which we'll call $\mathcal A'$:
\begin{equation}\label{eq:acfa-form-2}
		\comult :: \ket{e_1} \mapsto \ket{e_1,e_1},\ \  
		           \ket{e_0} \mapsto \ket{e_1,e_0} + \ket{e_0,e_1}
		\qquad\qquad
		\unit = \ket{e_0}
		\qquad
\end{equation}

Then, the change of basis $L' :: \ket{e_i} \mapsto \ket{1-i}$ turns the pair $(\mathcal S, \mathcal A')$ into $(\mathcal G, \mathcal W)$. Note also that $\mathcal A$ and $\mathcal A'$ are related by the permutation $e_i \leftrightarrow e_{1-i}$, i.e. the map $\tick$.

Conversely, fix an ACFA $\mathcal A = (\Q, \mult, \unit, \comult, \counit)$. We have already established that the points $\{ \unit, \lolli \}$ span $\Q$. By condition (iii.) and Eq \ref{eq:loopcopy} the basis $\{ \unit, \lolli \}$ completely determines $\whitecomult$.
\[
\whitecomult ::  \left\{\begin{array}{cl}
	\, \unit & \mapsto \ \unit\ \unit \\
	 \lolli & \mapsto \ \icircl\ \lolli\ \lolli
\end{array}\right.
\]
By Lemmas \ref{lem:white-unit-tick} and \ref{lem:scalars}, we can see that $\{ \counit, \cololli \}$ totally determines $\whiteunit$.
\[
\whiteunit :: \left\{\begin{array}{cl}
	 \,\counit & \mapsto \ 1_{\mathbb C} \\
	 \cololli  & \mapsto \ \circl
\end{array}\right.
\]
The data $(\whitecomult, \whiteunit)$ suffice to define $\mathcal S$. This implies that $\mathcal A$ is of the form of either Eq \ref{eq:acfa-form-1} or Eq \ref{eq:acfa-form-2} for the copiable points $e_i$ of $\mathcal S$. In either case, we can choose a suitable $L$.
\qed
\end{proof}

% \begin{remark}
% The apparent asymmetry in the the statement of Thm \ref{thm:HilbSCFA-ACFA-correspondence} is due to the fact that while for the GHZ Frobenius state induced by the SCFA of a GHZ/W-pair we have $|GHZ\rangle=(\tick\otimes\tick\otimes\tick)\circ\mathcal |GHZ\rangle$ while for the W Frobenius state induced by the ACFA we have 
% $|W\rangle\not=(\tick\otimes\tick\otimes\tick)\circ\mathcal |W\rangle$.
% \end{remark}

%%%%%%%%%%%%%%%%%%%
%%%%%%%%%%%%%%%%%%%
\subsection{Building states from a GHZ/W-pair and universality}\label{sec:generating-states}
%%%%%%%%%%%%%%%%%%%
%%%%%%%%%%%%%%%%%%%

As mentioned above, there is necessarily an infinite number of SLOCC classes when $N\geq 4$ \cite{DVC}, and to obtain finite classification results one considers \emph{super-classes}.  An example of this approach is \cite{Lamata}, where the authors introduce a classification scheme based upon the right singular subspace of a pure state. They begin with the observation that a column vector with $2^N$ entries has the same data as a $2^{(N-1)} \times 2$ matrix. Therefore, they treat a pure state on $N$ qubits as a map from $\bigotimes^{(N-1)}\Q$ to $\Q$. Performing a singular value decomposition on such a matrix yields a 1- or 2-dimensional right singular subspace, spanned by vectors in $\bigotimes^{N-1}\mathbb{C}^2$. The SLOCC super-class of this state is then labeled by the SLOCC super-classes of these spanning vectors, thus performing the inductive step. The base case is $\mathbb C^2\otimes \mathbb{C}^2$, where the only two SLOCC classes are represented by the product state and the Bell state.

An alternative way of looking at this scheme is to consider $N$-partite states as ``controlled'' $(N-1)$-partite states. That is to say, the right singular space of a state  is spanned by $\{ \ket\Psi, \ket\Phi \}$ iff there exists a SLOCC-equivalent state of the form $\ket{0 \Psi} + \ket{1 \Phi}$. This provides an operational description of a SLOCC superclass. Namely, a state $\ket\Theta$ is in the SLOCC superclass $\{ \ket\Psi, \ket\Phi \}$ if there exists \emph{some} (two-dimensional, possibly non-orthogonal) basis $B$ such that performing a (generalised) measurement in $B$ of the first qubit yields a state that is SLOCC-equivalent to $\ket\Psi$ for outcome 1 and $\ket\Phi$ for outcome 2.

From this point of view, we can show that the language of GHZ/W-pairs realises the inductive step. We do this by realising that SCFAs perform a role analogous to junctions in classical circuits, and ACFAs analogous to \emph{tristates}, or electronic switches. Suppose we identify $\unit$ with the bit 1 and $\tickunit$ with the bit 0. For an input of $\unit$ or $\tickunit$, the map $\whitecomult$ merely copies it. However, for an input of $\unit$, the map $\comult$ forms an entangled state $\ket{01} + \ket{10}$, but for $\tickunit$, it separates. This behaves a bit like a tristate:
\begin{equation}
    \includegraphics[scale=0.7]{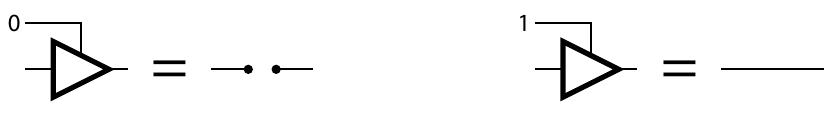}
\end{equation}

\begin{figure}
    \centering
        \includegraphics[scale=1]{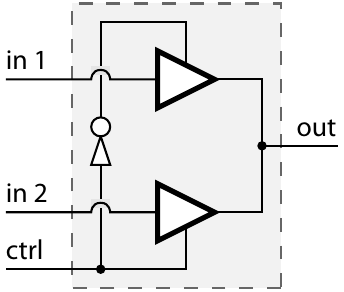}\qquad \qquad
        \includegraphics[scale=1]{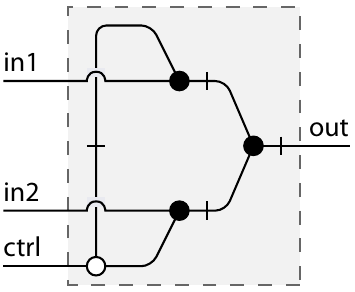}
    \caption{Classical and quantum multiplexors}
    \label{fig:both-muxes}
\end{figure}

Using this analogy, we can construct a quantum two-way switch, or multiplexor, following almost exactly the design of a classical multiplexor (see Fig \ref{fig:both-muxes}). We'll call this construction QMUX.

\begin{theorem}\label{thm:inductive step}
When considering the GHZ/W-pair on $\mathbb{C}^2$ as in Eqs (\ref{GHZ-SCFA}) and  (\ref{W-ACFA}), the linear map
\begin{equation}
	\raisebox{-17pt}{\includegraphics[scale=0.7]{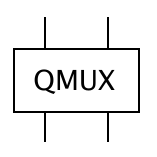}}
	:=
	\begin{tikzpicture}[dotpic,yshift=0.5cm]
		\node [bn] (d1) at (0,0.7) {};
		\node [bn] (d2) at (1,0.7) {};
		\node [dot] (0) at (0,0) {};
		\node [dot] (1) at (1,0) {};
		\node [white dot] (2) at (0,-1) {};
		\node [dot] (3) at (1,-1) {};
		\node [bn] (c1) at (0,-1.7) {};
		\node [bn] (c2) at (1,-1.7) {};
		\draw
			(d1)--(0)-- node[tick]{-} (2)-- node[tick]{-} (c1)
			(d2)--(1)-- node[tick]{-} (3)-- node[tick]{-} (c2)
			(0)-- node[tick,pos=0.7]{-} (3)  (1)--(2);
	\end{tikzpicture}\vspace{-1mm}
\end{equation}
takes states $\ket\psi \otimes \ket\phi$ to a state that is SLOCC-equivalent to $\braket{1}{\phi}\ket{0\psi} + \braket{1}{\psi}\ket{1\phi}$. From this it follows that more generally,\vspace{-2mm}
\begin{center}
	\includegraphics[scale=0.7]{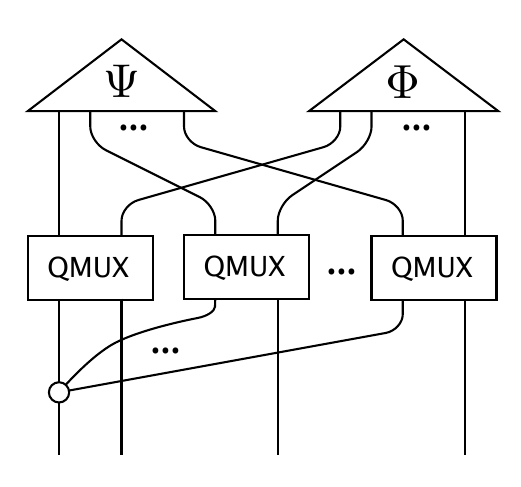}\vspace{-2mm}
\end{center}
takes the states $\ket\Psi \otimes \ket\Phi\in\left(\bigotimes^{N-1}\mathbb{C}^2\right)\otimes\left(\bigotimes^{N-1}\mathbb{C}^2\right)$ to
	\[ 
	\braket{\underbrace{1\ldots 1}_{N-1}}{\Phi}\ket{0\Psi} + 
	\braket{\underbrace{1\ldots 1}_{N-1}}{\Psi}\ket{1\Phi}\,.\vspace{-1mm}
	\]
\end{theorem}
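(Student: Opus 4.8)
The plan is to treat the two assertions separately: handle the bipartite \textsc{qmux} by a direct computation on a basis, and then bootstrap to the general case by bilinearity together with the copying behaviour of the white (SCFA) structure. Throughout I use the concrete data of the canonical pair: from Eq.~(\ref{W-ACFA}) the black (ACFA) maps $\comult = \ket{00}\bra{0} + \ket{01}\bra{1} + \ket{10}\bra{1}$, $\mult = \ket{1}\bra{11}+\ket{0}\bra{01}+\ket{0}\bra{10}$, $\unit=\ket{1}$, $\counit=\bra{0}$, and from Eq.~(\ref{GHZ-SCFA}) the white (SCFA) maps $\whitemult=\ket{0}\bra{00}+\ket{1}\bra{11}$, $\whitecomult=\ket{00}\bra{0}+\ket{11}\bra{1}$; the ``tick'' is the bit flip $X$ swapping $\unit=\ket{1}$ and $\tickunit=\ket{0}$.

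For the bipartite gate, first note that the claimed output $\braket{1}{\phi}\ket{0\psi}+\braket{1}{\psi}\ket{1\phi}$ is bilinear in $(\ket{\psi},\ket{\phi})$, hence a linear function of $\ket{\psi}\otimes\ket{\phi}$; since \textsc{qmux} is also linear on $\Q\otimes\Q$, it suffices to check the four product basis inputs $\ket{i}\otimes\ket{j}$. I would read the diagram top to bottom: each input is split by a black $\comult$, the ticks insert bit flips, the white $\whitemult$ merges one output pair (surviving only when its two inputs agree in the $0/1$ basis) while the black $\mult$ merges the other, and two final ticks act on the outputs. Carrying out the four cases, e.g.\ $\ket{1}\otimes\ket{1}$ produces $\ket{01}+\ket{11}$ while $\ket{0}\otimes\ket{0}$ produces $0$, one checks the gate returns exactly $\braket{1}{\phi}\ket{0\psi}+\braket{1}{\psi}\ket{1\phi}$ in every case, which settles the first claim.

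For the general statement I would again reduce to a basis: the target is bilinear in $(\ket{\Psi},\ket{\Phi})$ and hence linear on $(\bigotimes^{N-1}\Q)\otimes(\bigotimes^{N-1}\Q)$, so it is enough to verify it on the product basis inputs $\ket{\vec a}\otimes\ket{\vec b}$ with $\vec a,\vec b\in\{0,1\}^{N-1}$, for which $\ket{\Psi}=\ket{\vec a}$ and $\ket{\Phi}=\ket{\vec b}$ are products. On such inputs, following the multiplexor design of Fig.~\ref{fig:both-muxes}, the $i$-th wire-pair is a bipartite \textsc{qmux} acting on $(\ket{a_i},\ket{b_i})$, so by the first part it yields $\braket{1}{b_i}\ket{0}\ket{a_i}+\braket{1}{a_i}\ket{1}\ket{b_i}$: a control qubit together with a data qubit. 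The circuit then fuses all $N-1$ control qubits through the white copy-structure into a single control. Since the white spider realises $S^{N-1}_1=\sum_k\ket{k}\bra{k\ldots k}$, this fusion annihilates every term whose control string is non-constant and collapses the two surviving strings to one qubit: the all-$0$ branch gives control $\ket{0}$, data $\bigotimes_i\ket{a_i}=\ket{\Psi}$, with coefficient $\prod_i\braket{1}{b_i}=\braket{1\ldots1}{\Phi}$, and the all-$1$ branch gives control $\ket{1}$, data $\ket{\Phi}$, with coefficient $\braket{1\ldots1}{\Psi}$. Summing the two branches reproduces $\braket{1\ldots1}{\Phi}\ket{0\Psi}+\braket{1\ldots1}{\Psi}\ket{1\Phi}$, up to an overall scalar that SLOCC-equivalence absorbs.

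The routine part is the four-case bipartite computation; the genuine content is the control fusion in the last step. The point to get right is that the white structure compares the individual control qubits in the computational basis and thereby (i) kills every ``mixed'' term in which some gadgets chose control $0$ and others control $1$, and (ii) merges the two constant control strings into a single qubit with the correct \emph{global} coefficients $\braket{1\ldots1}{\Phi}$ and $\braket{1\ldots1}{\Psi}$, rather than a sum of local inner products. This is exactly the property $\whitemult(\ket{i}\otimes\ket{j})=\delta_{ij}\ket{i}$ of the SCFA, and it is what makes the bilinearity reduction to product basis states both legitimate and sufficient.
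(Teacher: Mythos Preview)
Your argument is correct. The bipartite case really does give the target state on the nose (not merely up to SLOCC), and your reduction of the $(N-1)$-fold version to product inputs, together with the white spider $S^{N-1}_1=\sum_k\ket{k}\bra{k\ldots k}$ acting as a control-fusion, is sound and yields the stated formula exactly.

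The paper takes a different route for the bipartite case. Rather than evaluating the gadget on the four input basis states via the concrete matrices, it works entirely in the graphical calculus: it plugs the effects $\counit=\bra{1}$ and $\icircl\,\cololli=\bra{0}$ into the control output of \textsc{qmux} and simplifies each resulting diagram using only the GHZ/W-pair axioms (i.)--(iv.), anti-specialness, and Lem~\ref{lem:upside-down}. What this buys is that the multiplexing behaviour is seen to follow from the abstract interaction laws of the pair, not from the particular matrices in Eqs~(\ref{GHZ-SCFA}) and (\ref{W-ACFA}); it is therefore stable under the change of basis of Thm~\ref{thm:HilbSCFA-ACFA-correspondence} and fits the paper's programme of reasoning purely diagrammatically. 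Your approach, by contrast, is more elementary and immediately verifiable, and your treatment of the $N$-party case via the white copy-spider is actually more explicit than the paper's, which simply records the general statement as an immediate consequence of the bipartite one.
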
	 
\begin{proof}	
We show this by using conditions (i.)--(iv.). We only require the result to hold up to SLOCC-equivalence, so we shall disregard the scalars $\circl$ and $\icircl$. Note that $\bra{0} = \icircl\, \cololli$ and $\bra{1} = \counit$:
	\begin{center}
\beginpgfgraphicnamed{qmux_plug_dot}
\InputIfFileExists{qmux_plug_dot.tikz}{}{\input{./figures/qmux_plug_dot.tikz}}
\endpgfgraphicnamed
	\end{center}
	
	\begin{center}
\beginpgfgraphicnamed{qmux_plug_lolli}
\InputIfFileExists{qmux_plug_lolli.tikz}{}{\input{./figures/qmux_plug_lolli.tikz}}
\endpgfgraphicnamed
	\end{center}
	
	We shall explain the first sequence of equalities. The second proceeds similarly. Step 1 is due to axiom (iv.). For step 2, we can show that $\cololli$ is a copiable point of $\whitemult$ using Lem \ref{lem:upside-down}. Step 3 is due to anti-specialness and step 4 to axiom (iv.). Step 5 is two applications of the unit and axiom (i.), which implies that $\tick$ is involutive.
\qed
\end{proof}

%\suckbbb

Scalars $\braket{{1\ldots 1}}{\Psi}$ and $\braket{{1\ldots 1}}{\Phi}$ are be assumed to be non-zero. If this is not the case we vary the representatives of SLOCC-classes. It is an easy exercise to show that any state is SLOCC-equivalent to a state that is not orthogonal to $\ket{1\ldots 1}$.

\begin{theorem}\label{thm:arblinear}
When considering the GHZ/W-pair on $\mathbb{C}^2$ as in Eqs (\ref{GHZ-SCFA}) and  (\ref{W-ACFA}), an arbitrary linear map $L:\mathbb{C}^2\to\mathbb{C}^2$ can expressed as:
	\begin{center}
\beginpgfgraphicnamed{ldu_decomp}
\InputIfFileExists{ldu_decomp.tikz}{}{\input{./figures/ldu_decomp.tikz}}
\endpgfgraphicnamed
	\end{center}
for some single-qubit states $\psi$, $\phi$ and $\xi$.
\end{theorem}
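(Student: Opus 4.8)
The plan is to read the diagram as a vertical composition of three elementary blocks --- a lower-triangular map, a diagonal map, and an upper-triangular map --- each obtained by plugging one of the single-qubit states into a multiplication node, and then to recognise the resulting factorisation as the classical LDU (Gaussian-elimination) decomposition of a $2\times 2$ matrix. Since the statement fixes the canonical pair of Examples \ref{ex:concreteGHZ} and \ref{ex:concreteW}, I can work throughout with explicit matrices in the computational basis rather than with the abstract conditions.

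First I would compute the three building blocks. Plugging a state $\ket\phi = \phi_0\ket0 + \phi_1\ket1$ into one input of the GHZ multiplication $\whitemult = \ket0\bra{00} + \ket1\bra{11}$ yields the diagonal map $\phi_0\,\ket0\bra0 + \phi_1\,\ket1\bra1 = \begin{pmatrix}\phi_0 & 0\\ 0 & \phi_1\end{pmatrix}$, so the white node realises an arbitrary diagonal matrix. Plugging a state $\ket\psi = \psi_0\ket0 + \psi_1\ket1$ into one input of the W multiplication $\mult = \ket1\bra{11} + \ket0\bra{01} + \ket0\bra{10}$ yields $\psi_1\,\ket0\bra0 + \psi_0\,\ket0\bra1 + \psi_1\,\ket1\bra1 = \begin{pmatrix}\psi_1 & \psi_0\\ 0 & \psi_1\end{pmatrix}$, an upper-triangular block whose two free entries are the coordinates of $\ket\psi$. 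Conjugating such a block by a pair of ticks --- which by condition (i.) act as the bit-flip exchanging $\ket0$ and $\ket1$ --- turns it into the lower-triangular block $\begin{pmatrix}\xi_1 & 0\\ \xi_0 & \xi_1\end{pmatrix}$ built from a third state $\ket\xi$. These are precisely the triangular and diagonal factors displayed in the figure.

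Next I would multiply the three blocks out and match them against an arbitrary target $M = \begin{pmatrix}a & b\\ c & d\end{pmatrix}$. The product $\begin{pmatrix}\xi_1 & 0\\ \xi_0 & \xi_1\end{pmatrix}\begin{pmatrix}\phi_0 & 0\\ 0 & \phi_1\end{pmatrix}\begin{pmatrix}\psi_1 & \psi_0\\ 0 & \psi_1\end{pmatrix}$ has entries $a = \xi_1\phi_0\psi_1$, $b = \xi_1\phi_0\psi_0$, $c = \xi_0\phi_0\psi_1$, and $d = \xi_0\phi_0\psi_0 + \xi_1\phi_1\psi_1$. Whenever $a \neq 0$ this system is solvable: normalising $\psi_1 = \xi_1 = 1$, one reads off $\phi_0 = a$, $\psi_0 = b/a$, $\xi_0 = c/a$, and $\phi_1 = (ad - bc)/a$. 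This is exactly the assertion that every matrix with non-vanishing leading entry admits an LDU decomposition, and it pins down the three states $\ket\psi$, $\ket\phi$, $\ket\xi$.

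The main obstacle is the non-generic case $a = 0$: as in ordinary Gaussian elimination, a matrix with a vanishing pivot (the bit-flip itself being the extreme example) has no LDU factorisation, so such matrices are not reached by the factorisation above for any choice of states. I would dispose of this case by pivoting, that is, by pre-composing with a single tick ($=X$) so that the permuted matrix $XM$ acquires a non-zero leading entry; since $X$ is realised inside the calculus, the argument applies to $XM$ and the outer tick is then absorbed. Alternatively, for the intended application only the generic statement is needed, because every state may be replaced by a SLOCC-equivalent representative that is not orthogonal to $\ket{1\ldots1}$, as already observed just before Theorem \ref{thm:arblinear}. Either route completes the proof that an arbitrary $L$ factors as claimed.
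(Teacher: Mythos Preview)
Your proposal is correct and follows essentially the same route as the paper: identify the three blocks as a lower-triangular factor (W-multiplication conjugated by ticks), a diagonal factor (GHZ-multiplication), and an upper-triangular factor (W-multiplication), then invoke the $PLDU$ decomposition of a $2\times 2$ matrix, with the single possible row-swap $P$ realised by an extra tick. The only cosmetic difference is that you solve for the entries explicitly whereas the paper cites the $PLDU$ factorisation as a known fact; your second ``alternative'' paragraph (restricting to the generic case because SLOCC representatives can be chosen non-orthogonal to $\ket{1\ldots1}$) is not needed, since the tick already covers all cases.
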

\begin{proof}
For $A$ defined as above on the left, any 1-qubit linear map can be expressed as $A$ or $A \circ \left(\tick\right)$. To see why, recall that an arbitrary $2 \times 2$ matrix admits a decomposition:
\begin{equation}\label{eqn:ldu}
	A = P L D U
\end{equation}
Where $P$ is a permutation, $L$ and $U$ are unit-diagonal lower-triangular and upper-triangular matrices, and $D$ is a diagonal matrix.

For a vectors $\ket\psi, \ket\phi, \ket\xi \in \mathbb C^2$, we can construct the following maps:
\begin{center}
	\(L := 
	\begin{tikzpicture}[dotpic,scale=0.7,yshift=5mm]
		\node [pt] (0) at (0,0) {\footnotesize $\xi$};
		\node [dot] (1) at (1,-1) {};
		\draw (0) to [out=-90,in=180] node [tick] {-} (1);
		\draw (1,0.5)-- node [tick,pos=0.66] {-} (1)-- node [tick] {-} (1,-2);
	\end{tikzpicture}
	 = \left(\begin{matrix} \xi_2 & 0\\ \xi_1 & \xi_2 \end{matrix}\right)\)
	\qquad
	\(D := 
	\begin{tikzpicture}[dotpic,scale=0.7,yshift=5mm]
		\node [pt] (0) at (0,0) {\footnotesize $\phi$};
		\node [white dot] (1) at (1,-1) {};
		\draw (0) to [out=-90,in=180] (1);
		\draw (1,0.5)--(1)--(1,-2);
	\end{tikzpicture}
	= \left(\begin{matrix} \phi_1 & 0\\ 0 & \phi_2 \end{matrix}\right)\)
	\qquad
	\(U :=
	\begin{tikzpicture}[dotpic,scale=0.7,yshift=5mm]
		\node [pt] (0) at (0,0) {\footnotesize $\psi$};
		\node [dot] (1) at (1,-1) {};
		\draw (0) to [out=-90,in=180] (1);
		\draw (1,0.5)--(1)--(1,-2);
	\end{tikzpicture}
	= \left(\begin{matrix} \psi_2 & \psi_1\\ 0 & \psi_2 \end{matrix}\right)\)
\end{center}
The only two permutations over $\mathbb C^2$ are the identity and NOT, so if we set $\psi_2 = \xi_2 = 1$, we obtain the decomposition in Eq (\ref{eqn:ldu}).
\qed
\end{proof}

Consequently, given a representative of a SLOCC-class we can reproduce the whole SLOCC-class when we augment the GHZ/W-calculus with \emph{variables}, i.e.~single-qubit states. Thus, this language is rich enough to construct any multipartite state.

\begin{corollary}\label{ConjSuper}
From the GHZ/W-pair $(\mathcal G, \mathcal W)$ and single qubit states we can obtain any $N$-qubit entangled state.
\end{corollary}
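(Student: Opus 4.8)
The plan is to prove the statement by induction on $N$, using the multiplexor construction of Theorem \ref{thm:inductive step} to carry out the inductive step and Theorem \ref{thm:arblinear} to reach arbitrary members of each SLOCC class. The induction hypothesis is that every $k$-qubit state, for $k < N$, can be built from the pair $(\mathcal G, \mathcal W)$ together with single-qubit states. The base cases are immediate: for $N=1$ the single-qubit states are themselves primitive, and for $N=2$ the only two SLOCC classes are the product state, obtained by juxtaposition of two single-qubit states, and the Bell state $\ket{00}+\ket{11}=S^0_2$, which is just the GHZ cup.

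For the inductive step I would start from an arbitrary $N$-qubit state $\ket\Theta$ and invoke the right-singular-subspace decomposition recalled in Section \ref{sec:generating-states}: every such state is SLOCC-equivalent to one of the form $\ket{0\Psi} + \ket{1\Phi}$ for suitable $(N-1)$-qubit states $\ket\Psi, \ket\Phi$. By the induction hypothesis $\ket\Psi$ and $\ket\Phi$ are constructible. Feeding $\ket\Psi \otimes \ket\Phi$ into QMUX, Theorem \ref{thm:inductive step} yields $\braket{1\ldots1}{\Phi}\ket{0\Psi} + \braket{1\ldots1}{\Psi}\ket{1\Phi}$. As noted after that theorem, we may assume the two overlap scalars $\alpha := \braket{1\ldots1}{\Phi}$ and $\beta := \braket{1\ldots1}{\Psi}$ are nonzero, since any state is SLOCC-equivalent to one not orthogonal to $\ket{1\ldots1}$. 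These two scalars are then absorbed by a single diagonal local correction $\ket0 \mapsto \alpha^{-1}\ket0$, $\ket1 \mapsto \beta^{-1}\ket1$ on the control qubit, a map which is itself expressible from $(\mathcal G, \mathcal W)$ and single-qubit states by Theorem \ref{thm:arblinear}. This recovers the representative $\ket{0\Psi} + \ket{1\Phi}$, which is SLOCC-equivalent to $\ket\Theta$.

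It remains to pass from this SLOCC representative to $\ket\Theta$ itself. By Theorem \ref{thm:slocc-ilo} there are invertible local maps $L_1 \otimes \ldots \otimes L_N$ carrying the representative to $\ket\Theta$, and each $L_i : \mathbb{C}^2 \to \mathbb{C}^2$ is expressible in the calculus by Theorem \ref{thm:arblinear}; applying them completes the construction and closes the induction.

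I expect the main obstacle to be the careful bookkeeping of scalars and genericity rather than any deep structural difficulty. One must verify that the overlap factors produced by QMUX can always be forced nonzero by a suitable choice of SLOCC representative, and that the single diagonal correction on the control wire genuinely cleans them up while leaving the $\ket\Psi$ and $\ket\Phi$ blocks intact. A secondary point to confirm is that building $\ket\Psi, \ket\Phi$ only up to SLOCC and correcting afterwards is legitimate; this holds because the local invertible maps supplied by Theorem \ref{thm:slocc-ilo} act on disjoint qubits and therefore commute with the action of QMUX on the control wire.
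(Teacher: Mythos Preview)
Your proposal is correct and matches the paper's own argument: the corollary is presented there as an immediate consequence of Theorem~\ref{thm:inductive step} (QMUX realises the inductive step of the Lamata et~al.\ right-singular-subspace decomposition) together with Theorem~\ref{thm:arblinear} (arbitrary local maps recover the full SLOCC class from any representative), which is exactly the induction you spell out.

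One small correction to your closing remark: the justification that separate SLOCC corrections for $\ket\Psi$ and $\ket\Phi$ ``commute with QMUX on the control wire'' does not work, since after multiplexing the two branches occupy the \emph{same} $N-1$ output qubits and cannot receive independent local maps. The clean way to guarantee the nonzero overlaps---left implicit in the paper as well---is to apply a generic invertible local map to $\ket\Theta$ \emph{before} decomposing it as $\ket{0\Psi}+\ket{1\Phi}$ (so that both $\braket{1\ldots1}{\Psi}$ and $\braket{1\ldots1}{\Phi}$ are nonzero), build $\ket\Psi$ and $\ket\Phi$ exactly by the induction hypothesis, and undo that local map at the end via Theorem~\ref{thm:arblinear}.
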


Since either Frobenius algebra generates an isomorphism with the dual space, an arbitrary $N+M$-qubit state can be used to obtain an arbitrary linear map $L:\bigotimes^N\mathbb{C}^2\to\bigotimes^M\mathbb{C}^2$.

This inductive technique is universal for constructing multipartite entangled states, so as one would expect, states built in this manner have a number of vertices that is exponential in the number of systems. However, there are often much simpler representatives of restricted classes of states, which can be expressed and manipulated in a computationally inexpensive manner. For example, the states below are in five distinct SLOCC super-classes as defined by Lamata et al in \cite{Lamatabis}.
\begin{center}
\beginpgfgraphicnamed{five_superclasses}
\InputIfFileExists{five_superclasses.tikz}{}{\input{./figures/five_superclasses.tikz}}
\endpgfgraphicnamed
\end{center}
% where:
% \begin{center}
%   \tikzfig{reverse_tick_edges}
% \end{center}

The first two, as expected, are $\ket{GHZ_4}$ and $\ket{W_4}$. The final three are:
\begin{itemize}
\item 
$\ket{0}\underbrace{\left( \ket{000}+\ket{110}+\ket{101} \right)}_{\stackrel{\mbox{\tiny SLOCC}}{\simeq}\ketW}
+  \ket{1} (\ket{0}\underbrace{(\ket{01}+\ket{10} )}_{\stackrel{\mbox{\tiny SLOCC}}{\simeq}\ket{Bell}} )$
\item 
$\ket{0}\ket{000} +  \ket{1} (\ket{1}\underbrace{(\ket{01}+\ket{10} )}_{\stackrel{\mbox{\tiny SLOCC}}{\simeq}\ket{Bell}} )$
\item $\ket{0}\underbrace{(\ket{000}+\ket{111})}_{\stackrel{\mbox{\tiny SLOCC}}{\simeq}\ketGHZ}+\ket{1}\ket{010}$
\end{itemize}
respectively, from which we can read off the corresponding right singular vectors.	 

We can also obtain examples of fully parametrized SLOCC-superclasses. That is, the values of the variables yield all SLOCC-classes that the superclass contains. For example, the following figure corresponds with the given SLOCC superclass:
\begin{equation*}
\beginpgfgraphicnamed{class_with_params}
\InputIfFileExists{class_with_params.tikz}{}{\input{./figures/class_with_params.tikz}}
\endpgfgraphicnamed =
\ket{0}(\underbrace{(\ket{00}+\ket{1\psi})}_{
\stackrel{\mbox{\tiny SLOCC}}{\simeq}\ket{Bell}}\ket{\phi})
+
\ket{1}\ket{0}\ket{Bell}
\end{equation*}

In addition to providing an inductive method to generate arbitrary multi-partite states, the graphical calculus provides an intuitive tool for reasoning about multipartite states. Since individual components exhibit well-defined primitive behaviours via the graph rewriting, one could imagine constructing composite states to meet specific, complex behavioural specifications in a quantum algorithm or protocol.

\section{Conclusion and outlook}

In this paper, we have identified a class of highly symmetric, highly entangled tripartite states called Frobenius states. We then formulated an equivalent definition for commutative Frobenius algebras in terms of these states. Via this correspondence, we then showed that the induced tripartite state of a special commutative Frobenius algebra over $\Q$ must be SLOCC-equivalent to the GHZ state. Furthermore, any symmetric state that is SLOCC-equivalent to GHZ can be turned into a special commutative Frobenius algebra. We completed this story for tripartite qubit states by showing that the same strong relationship exists between states that are SLOCC-equivalent to the W state and anti-special commutative Frobenius algebras. Due to the exhaustiveness of the classification of tripartite states up to SLOCC in \cite{DVC}, we noted as a corollaries to our main theorem that (a) any SLOCC-maximal tripartite qubit state is a SLOCC-equivalent to a Frobenius state and (b) any commutative Frobenius algebra over qubits is locally equivalent to one that is either special or anti-special.

We take this result as strong evidence that an SCFA corresponding to the GHZ state and an ACFA corresponding to the W state should serve as the canonical building blocks of a compositional theory of multipartite states. These two algebras, subject to some conditions, enable one to design complex states and maps that provide behaviours similar to their classical circuit analogues, such as a quantum multiplexor. We prove that this and single qubit states boosts the theory GHZ/W-pairs to computational universality. We finish by showing how one builds the $n$-partite states arising from the inductive SLOCC-classification scheme of  Lamata et al in \cite{Lamata} using this language. 

Here are some concrete open questions that require further investigation:
\begin{itemize}
\item 
The obvious next step is to explore the space of states representable in this theory, and the types of (provably correct) protocols that they can implement. This may be done with the help of the {\tt quantomatic} software \cite{quantomatic}.  
\item
The conditions (i.)--(iv.) defining a GHZ/W-pair are fairly weak and by no means provide a complete characterisation of the identities present in GHZ/W-graphs with respect to the concrete GHZ and W state. Currently the only known such completeness result with respect to Hilbert spaces is Selinger's theorem for dagger compact categories \cite{SelingerCompleteness}.  Hence a substantial effort will be required to extend these conditions with other ones that will be sufficient to identify when two graphs represent the same state.
\item
One could also ask when two graphs inhabit the same equivalence class with respect to some other condition besides SLOCC (e.g. equivalence by local unitaries, or LU). In the same vein as van den Nest's theorem for graph states and the resulting  LU/LC-conjecture  \cite{LULC} (which has recently been disproved \cite{LULCfalse}), one can ask which notions of equivalence coincide for certain subsets of GHZ/W-graphs. 
\item 
The analysis in this paper is specific to qubits. The two cases that specialness and antispecialness represent correspond to $\mu\circ\delta$ being either rank 2 or rank 1. For higher dimensions, one can ask what sorts of commutative Frobenius algebras (and hence states) arise for intermediate ranks. Would it then be possible to classify the Frobenius states for qu$D$its as well?
\item
The results here may suggest new techniques for simulating many-body systems, an area which recently has seen a substantial increase of graphical methods e.g.~\cite{Pirvu}.  Note in particular the following corollary to the results in this paper:
each $D=d=2$ MPS-chain admits the following form:
\def\indexnode{\raisebox{4.8mm}{\begin{tikzpicture}[dotpic]
	\node [style=none] (0) at (-5.60, -0.40) {};
	\node [style=none] (1) at (-4.90, -0.40) {};
	\node [style=small gray dot] (2) at (-5.25, -0.75) {\tiny$i$}; 
	\node [style=none] (3) at (-5.25, -1.25) {};
	\draw  (2) to (3.north);
	\draw  (1.north) to (2);
	\draw  (0.north) to (2);\end{tikzpicture}}}
\[
\begin{tikzpicture}[dotpic]
	\node [style=none] (0) at (-6.50, 0.00) {};
	\node [style=box vertex] (1) at (-4.00, 0.00) {$V_1$};
	\node [style=box vertex] (2) at (-1.50, 0.00) {$V_2$};
	\node [style=box vertex] (3) at (1.00, 0.00) {$V_3$};
	\node [style=box vertex] (4) at (3.50, 0.00) {$V_4$};
	\node [style=small gray dot] (5) at (-5.25, -0.75) {\tiny$1$};
	\node [style=small gray dot] (6) at (-2.75, -0.75) {\tiny$2$};
	\node [style=small gray dot] (7) at (-0.25, -0.75) {\tiny$3$};
	\node [style=small gray dot] (8) at (2.25, -0.75) {\tiny$4$};
	\node [style=box vertex] (9) at (-5.25, -1.75) {$L_1$};
	\node [style=box vertex] (10) at (-2.75, -1.75) {$L_2$};
	\node [style=box vertex] (11) at (-0.25, -1.75) {$L_3$};
	\node [style=box vertex] (12) at (2.25, -1.75) {$L_4$};
	\node [style=none] (13) at (-5.25, -2.50) {};
	\node [style=none] (14) at (-2.75, -2.50) {};
	\node [style=none] (15) at (-0.25, -2.50) {};
	\node [style=none] (16) at (2.25, -2.50) {};
	\draw [out=0, looseness=1.75, in=150] (1) to (6);
	\draw [out=30, looseness=1.75, in=180] (6) to (2);
	\draw  (11) to (15.north);
	\draw [out=30, looseness=1.75, in=180] (5) to (1);
	\draw [out=0, looseness=1.75, in=150] (2) to (7);
	\draw  (9) to (13.north);
	\draw [out=30, looseness=1.75, in=180] (8) to (4);
	\draw  (8) to (12);
	\draw [out=0, looseness=1.75, in=150] (3) to (8);
	\draw  (5) to (9);
	\draw [out=0, looseness=1.75, in=150] (0.north) to (5);
	\draw  (6) to (10);
	\draw  (12) to (16.north);
	\draw [out=30, looseness=1.75, in=180] (7) to (3);
	\draw  (7) to (11);
	\draw  (10) to (14.north);
\end{tikzpicture}
\]
where $V_i, L_i$ are $2\times 2$-matrices and the nodes 
$\mu_i=\indexnode$ 
are either $\whitemult$ or $\mult$. This suggests that one could perform the bulk of the work in computing expectation values using (efficient) graphical techniques.
\end{itemize}


\begin{thebibliography}{99}

\bibitem{AC}
S.~Abramsky and B.~Coecke (2004) {\em A categorical
semantics of quantum protocols}.  LiCS`04. Revision: arXiv:quant-ph/0808.1023

\bibitem{AndersBrowne}
J.~Anders and D.~E.~Browne (2009) \em Computational power of correlations\em.
Physical Review Letters {\bf 102}, 050502. arXiv:0802.0840

\bibitem{Borsten1}
L.~Borsten, D.~Dahanayaka, M.~J.~Duff, H.~Ebrahim and W.~Rubens (2008) Wrapped branes as qubits. 
Physical Review Letters {\bf 100}, 251602. arXiv:0809.4685

\bibitem{Borsten2}
L.~Borsten, D.~Dahanayaka, M.~J.~Duff, H.~Ebrahim and W.~Rubens (2009) Black holes, qubits and octonions. 
Physics Reports {\bf 471}, 113--219. arXiv:0809.4685

%\bibitem{Browne}
%D.~E.~Browne, E.~Kashefi, M.~Mhalla and S.~Perdrix (2007)
%\em Generalized flow and determinism in measurement-based quantum computation\em.
%New Journal Physics {\bf 9}, 250. arXiv:quant-ph/0702212

\bibitem{CarboniWalters}
A.~Carboni and R.~F.~C.~Walters (1987)  \em Cartesian bicategories {I}\em. Journal of Pure and Applied Algebra {\bf 49}, 11--32.

\bibitem{CoeckeDuncan}
B.~Coecke and R.~W.~Duncan (2008) \em Interacting quantum observables\em. ICALP'08.
Extended version: arXiv:0906.4725

%\bibitem{CE}
%B.~Coecke and B.~Edwards (2008) \em Toy quantum categories\em.  Electronic Notes in Theoretical Computer Science, to appear. arXiv:0808.1037

%\bibitem{CES2}
%B.~Coecke, B.~Edwards and R.~W.~Spekkens (2010) \em Phase groups and the origin of
%non-locality for qubits\em.  ENTCS, QPL'09 volume,  to appear. arXiv:1003.5005

%\bibitem{CatsII}
%B.~Coecke and E.~O.~Paquette (2009) \em Categories for the practicing physicist\em. In: New Structures for Physics, B.~Coecke (ed), pages 167--271. Lecture Notes in Physics, Springer-Verlag.  arXiv:0905.3010

\bibitem{CPaqPav}
B.~Coecke, E.~O.~Paquette and D.~Pavlovic (2009) \em Classical and quantum structuralism\em.  In: Semantic Techniques for Quantum Computation, I.~Mackie and S.~Gay (eds), pages 29--69, Cambridge University Press. arXiv:0904.1997

\bibitem{CPav}
B.~Coecke and D.~Pavlovic (2007) \em Quantum measurements without sums\em. In: Mathematics of Quantum Computing and Technology, G.~Chen, L.~Kauffman and S.~Lamonaco (eds), pages
567--604. Taylor and Francis.  arXiv:quant-ph/0608035.

\bibitem{CPV}
B.~Coecke, D.~Pavlovic, and J.~Vicary (2008) \em A new description of orthogonal bases\em. Mathematical Structures in Computer Science, to appear.  arXiv:0810.0812 

%\bibitem{CPer}
%B.~Coecke and S.~Perdrix (2010) 
%\em Environment and classical channels in categorical quantum mechanics\em. arXiv:1004.1598

%\bibitem{CWZ}
%B.~Coecke,  B.-S.~Wang,  Q.-L.~Wang,  Y.-J.~Wang and Q.-Y.~Zhang (2010) \em Graphical calculus for quantum key distribution\em. ENTCS, QPL'09 volume, to appear.

%\bibitem{TriTangle}
%V.~Coffman, J.~Kundu and W.~K.~Wootters (2000) \em Distributed entanglement\em. Physical Review {\bf A61}, 052306. arXiv:quant-ph/9907047

\bibitem{Prakash}
E.~D'Hondt and P.~Panangaden (2006) \em The computational power of the {W} and {GHZ} states\em. Quantum Information and Computation {\bf 6}, 173--183. 

\bibitem{quantomatic}
L.~Dixon, R.~Duncan and A.~Kissinger, {\tt  quantomatic}, http://dream.inf.ed.ac.uk/ projects/quantomatic/ 

%\bibitem{DuncanPerdrix} 
%R.~Duncan and S.~Perdrix (2009) \em Graph states and the necessity of Euler decomposition\em. CiE'09. LNCS {\bf 5635}. arXiv:0902.0500 

\bibitem{DuncanPerdrix2}
R.~Duncan and S.~Perdrix (2010) \em Rewriting measurement-based quantum computations with
generalised flow\em. Accepted for ICALP'2010.

\bibitem{DVC}
W.~D\"ur, G.~Vidal and J.~I.~Cirac (2000) 
\em Three qubits can be entangled in two inequivalent ways\em.  
Phys.~Rev.~A {\bf 62}, 062314. arXiv:quant-ph/0005115

\bibitem{Gottesman}
D.~Gottesman and I.~L.~Chuang (1999) \em Quantum teleportation is a universal computational primitive\em.  Nature {\bf 402}, 390--393.   arXiv:quant-ph/9908010

\bibitem{GHZ} 
D.~M.~Greenberger, M.~A.~Horne, A.~Shimony and A.~Zeilinger (1990) \em Bell's theorem without inequalities\em.  American Journal of Physics {\bf 58}, 1131--1143.

\bibitem{JoyalStreet}
A.~Joyal and R.~Street (1991) \em The Geometry of
tensor calculus \em I.  Advances in Mathematics {\bf 88}, 55--112.

\bibitem{one-way}
M.~Hein, W.~D\"ur, J.~Eisert, R.~Raussendorf, M.~Van den Nest and H.-J.~Briegel (2006) \em Entanglement in graph states and its applications\em. arXiv:quant-ph/0602096v1

\bibitem{LULCfalse}
Z.~Ji, J.~Chen, Z.~Wei and M.~Ying (2010) \em The LU-LC conjecture is false\em. 
Quantum Information Computation {\bf 10}, 97--108.

\bibitem{KellyLaplaza} 
G.~M.~Kelly and M.~L.~Laplaza (1980) {\em
Coherence for compact closed categories}. Journal of Pure and Applied
Algebra {\bf 19}, 193--213. 

\bibitem{Kock}
J.~Kock (2003) \em Frobenius Algebras and 2D Topological Quantum Field Theories\em. Cambridge University Press.

\bibitem{Lack}
S.~Lack (2004) \em Composing PROPs\em. Theory and Applications of Categories {\bf 13}, 147--163.

\bibitem{Lamata}
L.~Lamata, J.~Leon, D.~Salgado and E.~Solano (2007) 
 \em Inductive entanglement classification of four qubits under SLOCC\em. 
Physical Review A {\bf 75}, 022318. arXiv:quant-ph/0610233

\bibitem{Lamatabis}
L.~Lamata, J.~Leon, D.~Salgado and E.~Solano (2006)
 \em Inductive classification of multipartite entanglement under SLOCC\em. 
Physical Review A {\bf 74}, 052336. arXiv:quant-ph/0603243

\bibitem{MacLane}
S.~Mac Lane (1998) \em Categories for the Working Mathematician\em. Springer-Verlag.

\bibitem{Markham}
D.~Markham and B.~C.~Sanders (2008) \em Graph states for quantum secret sharing\em.
Physical Review A {\bf 78}, 042309. arXiv:0808.1532

\bibitem{Mathonet2009}
P.~Mathonet, S.~Krins, M.~Godefroid, L.~Lamata, E.~Solano, T.~Bastin (2009)
Entanglement Equivalence of $n$-qubit Symmetric States. arXiv:0908.0886

\bibitem{Paulsen}
V.~Paulsen (2002) {\it Completely Bounded Maps and Operator algebras}. Cambridge University Press.

\bibitem{Penrose} 
R.~Penrose (1971) \em Applications of negative dimensional tensors\em.
In: Combinatorial Mathematics and its Applications, D.~Welsh (Ed), pages
221--244. Academic Press. 

\bibitem{Pirvu}
B.~Pirvu, F.~Verstraete, G.~Vidal (2010). \em Exploiting translational invariance in Matrix Product State simulations of spin chains with periodic boundary conditions\em. arXiv:1005.5195

\bibitem{SelingerCompleteness}
P.~Selinger (2010)  \em Finite dimensional Hilbert spaces are complete for dagger compact closed categories\em.  ENTCS, QPL'08 volume, to appear.

\bibitem{SelingerSurvey}
P.~Selinger (2009) \em A survey of graphical languages for monoidal categories\em. In: New Structures for Physics, B.~Coecke (ed), 275--337, Springer-Verlag. arXiv:0908.3347

\bibitem{StreetBook}
R.~Street (2007) \em Quantum Groups: A Path to Current Algebra\em. Cambridge University Press.

\bibitem{LULC}
M.~van den Nest, J.~Dehaene, and B.~De~Moor (2005) \em Local unitary versus local clifford equivalence of
stabilizer states\em. Physical Review A , {\bf 71}, 062323.

\bibitem{Verstraete}
F.~Verstraete, J.~Dehaene, B.~De Moor and  H.~Verschelde (2002)
\em Four qubits can be entangled in nine different ways\em. Physical Review A {\bf 65}, 052112.  arXiv:quant-ph/0109033

%\bibitem{Noson} N.~S.~Yanofsky and  M.~A.~Mannucci (2008) \em Quantum Computing for Computer Scientists\em. Cambridge UP.

\end{thebibliography}
\end{document}